\newcommand{\id}{\mathbbm{1}}
\newcommand{\CC}{\mathbb{C}} 
\newcommand{\RR}{\mathbb{R}} 
\newcommand{\QQ}{\mathbb{Q}} 
\newcommand{\HH}{\mathbb{H}}
\newcommand{\ZZ}{\mathbb{Z}} 
\newcommand{\PP}{\mathbf{P}} 
\newcommand{\MM}{\mathbb{M}}
\newcommand{\TT}{\mathbb{T}}
\newcommand{\bfT}{\mathbf{T}}
\newcommand{\E}{\mathcal{E}}
\newcommand{\F}{\mathcal{F}}
\newcommand{\Hh}{\mathcal{H}}
\newcommand{\Q}{\mathcal{Q}}
\newcommand{\V}{\mathcal{V}}
\newcommand{\W}{\mathcal{W}}
\newcommand{\g}{\mathfrak{g}}
\newcommand{\h}{\mathfrak{h}}
\newcommand{\cc}{\mathfrak{c}}
\newtheorem{theorem}{Theorem}
\newtheorem{lemma}[theorem]{Lemma}
\DeclareMathOperator{\Aut}{Aut}
\DeclareMathOperator{\Tr}{Tr}
\numberwithin{equation}{section}
\def\be{\begin{equation}}
\def\ee{\end{equation}}
\newcommand{\bea}{\begin{eqnarray}}
\newcommand{\eea}{\end{eqnarray}}
\begin{document}
\title[Monstrous BPS-algebras and the superstring origin of moonshine]{Monstrous BPS-algebras and \\ the superstring origin of moonshine }
\author{Natalie M. Paquette}
\address{Stanford  Institute  for  Theoretical  Physics,  Department  of  Physics,  Stanford  University,
Stanford, CA 94305, USA}
\email{npaquett@stanford.edu}

\author{Daniel Persson}
\address{Department of Physics, Chalmers University of Technology,
412 96, Gothenburg, Sweden}
\email{daniel.persson@chalmers.se}

\author{Roberto Volpato}
\address{Theory  Group,  SLAC  National  Accelerator  Laboratory, Menlo  Park,
CA 94025, USA \\ 
\indent \& Stanford  Institute  for  Theoretical  Physics,  Department  of  Physics,  Stanford 
 University, Stanford, CA 94305, USA}
\email{volpato@slac.stanford.edu}

\begin{abstract} We provide a physics derivation of Monstrous moonshine. We show that the McKay-Thompson series $T_g$, $g\in \MM$, can be interpreted as supersymmetric indices counting spacetime BPS-states in certain heterotic string models. The invariance groups  of these series arise naturally as spacetime T-duality groups and their genus zero property descends from the behaviour of these heterotic models in suitable decompactification limits. We also show that the space of BPS-states forms a module for the Monstrous Lie algebras $\mathfrak{m}_g$, constructed by Borcherds and Carnahan. We argue that $\mathfrak{m}_g$ arise in the heterotic models as algebras of spontaneously broken gauge symmetries, whose generators are in exact correspondence with BPS-states. This gives $\mathfrak{m}_g$  an interpretation as a kind of BPS-algebra. 
\end{abstract}
\maketitle
\tableofcontents
\section{Introduction and summary}

\noindent The famous Monstrous moonshine conjecture \cite{Conway:1979kx} has motivated a host of new developments at the intersection between theoretical physics, algebra, number theory, and group theory. In its basic formulation, the conjecture associates with each element $g$ of the Monster group $\MM$ (the largest sporadic finite simple group) a modular function $T_g$, the McKay-Thompson series. The invariance groups $\Gamma_g\subset SL(2,\RR)$ of the series $T_g$ were conjectured to satisfy very constraining properties: in particular, the quotient $\HH/\Gamma_g$ of the upper half plane by $\Gamma_g$ is expected to have genus zero.  The conjecture has been proved by Borcherds \cite{Borcherds}, based on previous contributions by many authors, in particular by Frenkel, Lepowsky, and Meurman \cite{FLM}. While the proof dates back to almost 25 years ago, many aspects of Monstrous moonshine are still unclear.

This paper aims to provide a natural physical framework where some of these open issues can be understood. The main idea is to interpret the McKay-Thompson series as supersymmetric indices in certain heterotic string compactifications. This leads to the two main results of the paper. First, we show that the modular groups $\Gamma_g$ can be understood as groups of dualities in these models, and we  provide a physical derivation of  their genus zero property. Second, we show that the Monster Lie algebra $\mathfrak{m}$, introduced by Borcherds in his proof of Monstrous moonshine, is the algebra of spacetime BPS-states in these heterotic models. More precisely, the algebra arises as a spontaneously broken gauge symmetry, whose generators are in exact correspondence with the BPS single-particle states. In the remainder of the introduction we shall provide some motivational background, and give a more detailed overview of 
the results. 

\subsection{Monstrous CHL-models}
Monstrous moonshine \cite{Conway:1979kx,FLM,Borcherds} associates to each element $g$ in the Monster group $\mathbb{M}$, a modular function (McKay-Thompson series)
\be
T_g(\tau)=\text{Tr}_{V^\natural} (g q^{L_0-1})=\sum_{n=-1}^{\infty} \text{Tr}_{V^\natural_n}(g)q^{n}, \hspace{1cm} q:=e^{2\pi i \tau},
\ee
where the coefficients are characters of $g$  in  the graded components of the Frenkel-Lepowsky-Meurman Monster module $V^\natural=\bigoplus_{n=-1}^{\infty} V_n^\natural$. In physics language, $V^\natural$ is a holomorphic two dimensional conformal field theory (CFT) of central charge $24$ with symmetry group $\MM$, and $T_g$ are its $g$-twisted partition functions. By the cyclicity property of the trace, the McKay-Thompson series are actually class functions (i.e., they depend only on the conjugacy class of $g$).
A key ingredient in Borcherds' proof of Monstrous moonshine was the construction of an infinite-dimensional Lie algebra $\mathfrak{m}$, known as the \emph{Monster Lie algebra}, obtained by applying a certain functor to the Monster module $V^\natural$. In this paper we show that $\mathfrak{m}$ is the ``algebra of BPS-states" of a certain heterotic string theory. The fact that BPS-states in string theory form an algebra was first proposed by Harvey and Moore \cite{Harvey:1995fq,Harvey:1996gc}, but the precise structure of this algebra is still poorly understood (see \cite{Neumann:1997pr,Fiol:2000wx,Kontsevich:2010px,Gukov:2011ry,Chuang:2013wt} for various attempts). In the original work, Harvey and Moore envisioned that the algebraic structure was captured by the OPEs between BPS-vertex operators, and the algebra should be closely related to a Borcherds-Kac-Moody algebra (BKM-algebra). A slightly different construction was proposed in \cite{Gaberdiel:2011qu,Hohenegger:2011ff}, where the space of BPS-states in a certain $\mathcal{N}=4$ string theory realized a \emph{module} for a BKM-algebra. In this work we take the latter approach, and show that the space of BPS-states in a certain heterotic orbifold forms a representation of the Monster Lie algebra $\mathfrak{m}$ constructed from $V^\natural$. The algebra is generated by BRST-exact string states, that are in one to one correspondence with the BPS string states. This provides the first realization of the Monster Lie algebra as an algebra of BPS-states, answering a long-standing question raised by, in particular, Harvey and Moore \cite{Harvey:1995fq}, and Carnahan \cite{Carnahan2014}. In fact, our construction is more general and applies to the entire class of Lie algebras $\mathfrak{m}_g$, $g\in \mathbb{M}$, constructed by Carnahan \cite{2012arXiv1208.6254C} in the context of  proving ``generalized moonshine" (the full proof is now complete and constitutes \cite{2008arXiv0812.3440C,Carnahan2014,2012arXiv1208.6254C}). Generalized moonshine was proposed by Norton \cite{Norton1987} and concerns, for each commuting pair $g,h\in \mathbb{M}$, the following ``twisted'' and ``twined'' generalizations of the McKay-Thompson series: 
\be 
T_{g,h}(\tau)=\text{Tr}_{V^\natural_g}(h q^{L_0-1}),
\ee
where $V^\natural_g$ is the $g$-twisted sector in the orbifold of $V^\natural$ by $g$. The specialization\footnote{We will always use $\id$ to denote the identity element $\id \in \MM$.} $T_{\id,h}$ recovers the McKay-Thompson series, while  $T_{g,\id}$ constitutes part of the denominator formula for the Lie algebra $\mathfrak{m}_g$ \cite{2012arXiv1208.6254C}. Each such Lie algebra is the algebra of BPS-states for a heterotic string model labeled by $g$. 

The key ingredient in our analysis is a new type of ``Monstrous CHL'' model, where the name is taken from the analogous construction by Chaudhuri, Hockney, and Lykken \cite{Chaudhuri:1995fk,Chaudhuri:1995ee}. We consider the heterotic string compactified to 1+1 dimensions, with the internal CFT of the form $V^\natural\times \bar V^{s\natural}$, where $\bar V^{s\natural}$ is the  super-moonshine module for the Conway group, envisaged by \cite{FLM} and constructed by Duncan \cite{duncan2007super}. It has no NS-sector states of conformal weight 1/2, but 24 Ramond ground states of weight 1/2. We then consider the further compactification of this theory on the spatial circle $S^1$ of radius $R$ and we take a $\mathbb{Z}_N$-orbifold of this theory by $(\delta, g)$, where $\delta$  is an order $N$ shift along $S^1$ and $g\in \mathbb{M}$. 
The resulting theory has $(0,24)$ spacetime supersymmetry, and the spectrum contains two kinds of irreducible representations: a short (BPS) $1$-dimensional representation and a long representation of dimension $2^{12}$. 

These constructions raise an immediate puzzle. There are no local massless states in the spectrum since in lightcone quantization these precisely correspond to currents  in the internal CFT, and we have just seen that these are absent in our models. In particular, there is no dilaton and hence, at first sight, no string coupling! We will propose a resolution to this puzzle in Section \ref{sec:dilaton} and henceforth tune the string coupling to zero; obstacles to turning on finite string coupling are discussed briefly in Section \ref{sec:anomaly}.

\subsection{The supersymmetric index}

One of the main points of the paper is that in these models we can compute a supersymmetric index $Z$ which counts (with signs) the number of  BPS-states. This index will allow us to provide a spacetime interpretation of Monstrous moonshine and use this to shed light on the elusive genus zero property of moonshine. 

Let us first consider the unorbifolded case. We compute the index in three different ways which each provide certain useful clues. First, using a Fock space construction we define $Z$ as a weighted trace over the second-quantized BPS-Hilbert space $\mathcal{H}_{BPS}$: 
\be 
Z(T,U):=\text{Tr}_{\mathcal{H}_{BPS}}\big((-1)^F e^{2\pi i TW} e^{2\pi  i UM}\big),
\ee
where $F$ is the fermion number, $(W,M)$ represent winding and momentum operators along $S^1$, and $(T,U)\in \mathbb{H}\times \mathbb{H}$ are the associated (complexified) chemical potentials, involving in particular the radius $R$ of $S^1$ and the inverse temperature $\beta$. Evaluating this index we find the explicit formula
\be
Z(T,U)=\Bigl(e^{2\pi i (w_0T+m_0U)}\prod_{\substack{m> 0\\w\in \ZZ}}(1-e^{2\pi i Um}e^{2\pi iTw})^{c(mw)}\Bigr)^{24},
\label{susyindexFock}
\ee
where $c(mw)$ are the Fourier coefficients of the modular-invariant $J$-function
\be
J(\tau)=T_\id(\tau)=\sum_{n=-1}^{\infty}c(n)q^n=q^{-1} + 196884q +\cdots. 
\ee and we allow for possible winding and momentum $w_0,m_0$ of the ground state. We will argue below that the correct values are $w_0=-1$, $m_0=0$.

We stress that in the original Monstrous moonshine the $J$-function is the graded dimension of the moonshine vertex operator algebra $V^\natural$, or in physics parlance, the partition function of the Monster CFT, and therefore it is intrinsically a \emph{worldsheet}  object. On the other hand, the supersymmetric index $Z(T,U)$ is a \emph{spacetime} object, and therefore provides a new spacetime interpretation of Monstrous moonshine.

Inspired by generalized moonshine we also extend this analysis to all Monstrous CHL-models and we define for each commuting pair $g,h\in \mathbb{M}$ the twisted twined index
\be
Z_{g,h}(T,U):=\text{Tr}_{\mathcal{H}^g_{BPS}}\big(h(-1)^F e^{2\pi i TW} e^{2\pi i UM}\big),
\ee
where $\mathcal{H}^g_{BPS}$ is the space of BPS-states in the CHL-model. We will be particularly interested in the specialization $Z_{g,\id}$,  for which we derive an infinite product formula analogous to \eqref{susyindexFock}.

In the second approach, the index is computed in terms of an Euclidean path integral, with the Euclidean time compactified on a circle with period the inverse temperature $\beta$. We argue that the path-integral is one loop exact and show that it reproduces the modulus squared of the supersymmetric index.
\be
e^{-S_{1-loop}}=|Z(T,U)|^2. 
\ee
The holomorphic part of the one-loop integral is the free energy $\mathcal{F}$ and thus we have the expected relation: 
\be
\mathcal{F}(T,U)=\log\, Z(T,U),
\ee
with similar results for all the twisted indices $Z_{g,\id}(T,U)$. In this second approach, the invariance of the index under the T-duality group is manifest.

The third approach is purely algebraic. By the work of Borcherds it is  known that, for suitable values of $w_0,m_0$, the infinite product formulas \eqref{susyindexFock} and its twined generalizations are the denominator formulas for the Monster Lie algebra $\mathfrak{m}$ and its generalizations $\mathfrak{m}_g$, respectively. We give a spacetime BPS state interpretation of this algebra and show that $\mathfrak{m}$ is the algebra of BRST-exact string states associated with the supersymmetric partners of the (first quantized)  string BPS-states.  BRST-exact states in string theory are expected to generate a gauge symmetry, though in general this might be spontaneously broken for finite string tension $\alpha'<\infty$. Therefore, $\mathfrak{m}$ can be interpreted as a kind of gauge algebra in this theory. As a consequence, we show that the supersymmetric index $Z_{g,\id}(T,U)$ exactly coincides with (the 24th power of) the denominator of the algebra $\mathfrak{m}_g$.\footnote{As we will discuss in section \ref{s:Algebras}, the space of physical BPS-states in the spectrum of a single first quantized string form a representation of $\mathfrak{m}_g$. We conjecture that the second quantized Fock space $\Hh_{BPS}$ is also a $\mathfrak{m}_g$-module, but we were not able to prove this statement.}

Starting from the famous product formula for the $J$-function (see, e.g., \cite{MR772491,Borcherds}):
\be 
J(\sigma)-J(\tau)=p^{-1}\prod_{m>0, n\in \mathbb{Z}} (1-p^mq^n)^{c(mn)}, \qquad p:=e^{2\pi i \sigma},
\label{denominator}\ee
and using the description as an algebra denominator, we obtain a new formula for the supersymmetric index of the associated CHL-model:
\be\label{additivedenomin}
Z_{g,\id}(T,U)=\Big(T_{\id,g}(T)-T_{g,\id}(U)\Big)^{24}.
\ee

\subsection{The genus zero property of moonshine}
An outstanding puzzle in Monstrous moonshine has been to find an explanation for the so called \emph{genus zero property}, namely that all the McKay-Thompson series are ``Hauptmoduln'' $-$ generators of the function field on $\mathbb{H}/\Gamma$ $-$ for genus zero congruence subgroups $\Gamma$ of $SL(2,\mathbb{R})$.  For $g=\id$ and some other elements in the Monster group, all the properties of moonshine (except genus zero) have a natural physical interpretation in the Monster CFT $V^\natural$. Indeed, for these elements, the space $\HH/\Gamma_g$ is simply the moduli space parameterizing complex tori with additional flat $\MM$-bundles and the McKay-Thompson series, as twisted partition functions in the CFT $V^\natural$, are naturally defined on such spaces. For other $g\in\MM$, however, the modular groups na\"ively expected by CFT arguments are strictly smaller than the actual groups $\Gamma_g$ and not necessarily of genus zero. In these cases, there is no good interpretation of the spaces $\HH/\Gamma_g$ as moduli spaces where the McKay-Thompson series should be naturally defined.

Our  approach allows us to shed light on this issue, by reinterpreting it in the context of the spacetime BPS-states of CHL-models.  We first notice that the group $G_g$ of T-dualities  of the heterotic model is a subgroup of $SL(2,\RR)\times SL(2,\RR)$ acting on $(T,U)\in \HH\times \HH$. The index $Z_{g,\id}(T,U)$ is naturally defined as a function on the moduli space $(\HH\times \HH)/G_g$ of the heterotic model. Thus, for fixed $U$,  $Z_{g,\id}(T,U)$ as a function of $T$ is defined on $\HH/\text{proj}_1(G_g)$, where $\text{proj}_1(G_g)$ is the projection of $G_g\subset SL(2,\RR)\times SL(2,\RR)$ on the first $SL(2,\RR)$ factor. We will show that $\text{proj}_1(G_g)\subset SL(2,\RR)$ is exactly the modular group $\Gamma_g$ (or, more precisely, the eigengroup $\Gamma'_g$, see section \ref{s:McHaupt}). This observation, together with the explicit formula \eqref{additivedenomin} relating $Z_{g,\id}(T,U)$ to the McKay-Thompson series $T_{\id,g}$, provides a natural string theory interpretation for the modular group $\Gamma_g$.

\bigskip

Our arguments also give a new understanding of the genus zero property of the groups $\Gamma_g$. The Hauptmodul property for the McKay-Thompson series $T_g$ is equivalent to the fact that $T_g$ has only one single pole on $\overline{ \HH/\Gamma}_g$. In turn, this is equivalent to the statement that $Z_{g,\id}(T,U)$, considered as a function of $T$ for fixed $U$, has only one pole modulo T-dualities. 

The index $Z_{g,\id}(T,U)$ can only diverge in the limit where $T$ approaches one of the cusps at the boundary of the moduli space. From the physics perspective, these cusps can always be interpreted, in a suitable duality frame, as decompactification limits at  low temperature, where the index is dominated by the ground state contribution. For example, the limit $T\to i\infty$ always corresponds to a model in two uncompactified space-time dimensions, namely heterotic strings on $V^\natural\times \bar V^{s\natural}$. The index $Z_{g,\id}$ diverges in this limit, due to the contribution $e^{-2\pi i T}$ of the ground state.  Suppose that the index $Z_{g,\id}$ diverges also at another cusp (say, for $T\to 0$), different from $T\to i\infty$. The contribution $e^{-2\pi i T}$ of the ground state is finite as $T\to 0$, so this cannot be the dominant term if the index diverges in this limit. This means that, if we vary the moduli smoothly from $T\to i\infty$ to $T\to 0$,   the model undergoes a \emph{phase transition} at a certain critical value of the moduli, where the energy of some excited state gets lower than the ground state. The contribution of this excited state becomes dominant in the `small $T$ phase' and eventually diverges for $T\to 0$. Furthermore, whenever such a phase transition occurs, the two phases are always related by a T-duality of the model. The reason is that, at the critical manifold, new massless string modes appear. The latter generate an enhanced gauge symmetry that contains, in particular, the relevant T-duality.

To summarize, whenever the index $Z_{g,\id}$ diverges at some cusp, such a cusp must be related to $i\infty$ by a T-duality. The latter is part of an enhanced gauge symmetry that exists at some critical value of the moduli, and relates two different phases for the CHL model. This implies that, up to dualities, the only divergence of the index $Z_{g,\id}$ is at the cusp $T\to i\infty$, and this property is equivalent to the Hauptmodul property for $T_{g}$.

\medskip

In many respects, our approach is very similar to Tuite's reformulation of the genus zero property in terms of orbifolds of conformal field theories \cite{Tuite1995}. Tuite noticed that a McKay-Thompson series $T_{g}(\tau)$ has a pole at $\tau\to 0$ (is \emph{unbounded}, using Gannon's terminology) if and only if the orbifold $V^\natural/\langle g\rangle$ is a VOA without currents. Furthermore, assuming that any holomorphic VOA of central charge $24$ and with no currents is isomorphic to $V^\natural$ itself, he showed that the McKay-Thompson series $T_{\id,g}(\tau)$ is unbounded at $0$ if and only if it is invariant under the \emph{Fricke involution} $\tau\to -\frac{1}{N\tau}$, where $N$ is the order of $g$. Finally, he showed that this property implies that $T_g$ is a Hauptmodul for a genus zero group.

In our picture, the decompactification limit $T\to 0$ corresponds  to a two dimensional heterotic string model on $V^\natural/\langle g\rangle\times \bar V^{s\natural}$. Using the representation of the index as an algebra denominator, we show that $Z_{g,\id}$ diverges at $T\to 0$ if and only if the orbifold $V^\natural/\langle g\rangle$ has no currents. Furthermore, we show that the Fricke involution is contained in the T-duality group $G_g$ if and only if $V^\natural/\langle g\rangle$ is isomorphic to $V^\natural$ (with some additional conditions). These results reproduce the first part of Tuite's argument. However, in order to complete the proof, we do not need any assumption about the uniqueness of $V^\natural$: as explained above, the Hauptmodul property (and, in particular, Fricke invariance) follows from the existence of critical manifolds with enhanced gauge symmetries. 

In fact, one can reverse Tuite's argument and use our construction to actually \emph{prove} the following results:

\vspace{.1cm} 

\noindent {\it (i) When the orbifold $V^{\natural}/\left<g\right>$ is consistent and has no currents,  it is isomorphic to $V^\natural$.}

\vspace{.4cm} 

\noindent {\it (ii) When the orbifold $V^\natural/\left<g\right>$ is consistent and has currents, it is isomorphic to $V^{Leech}$ (the vertex operator algebra based on the Leech lattice).}

\vspace{.4cm} 

The first statement was part of Tuite's assumptions. The second statement was proved by Tuite through a case by case analysis, while we obtain a conceptual proof: our construction shows that the possible number of currents for a consistent orbifold $V^\natural/\left<g\right>$ is either $0$ or $24$, and in the latter case it is well-known that $V^{Leech}$ is the only possibility.

\vspace{.3cm}

\subsection{Outline}
\noindent Our paper is organized as follows. In section \ref{sec:setup} we describe the basic features of our Monstrous CHL-models, which form the core of the results in subsequent sections. In section \ref{s:indexI} we discuss the BPS-spectrum in our model, and give  the Fock space construction of the supersymmetric index $Z(T,U)$. In section \ref{s:oneloop} we define and evaluate a one-loop integral that reproduces the same index. We also provide an extensive analysis of the T-dualities satisfied by the index. In section \ref{s:Algebras}, we argue that each Monstrous CHL model contains an infinite dimensional Lie algebra of spontaneously broken gauge symmetries and show that this algebra is isomorphic to the corresponding Monstrous Lie algebra $\mathfrak{m}_g$. We identify each supersymmetric index $Z_{g,\id}$ with the algebraic index of the associated $\mathfrak{m}_g$, and show that it reproduces
the denominator formula.  
 In section \ref{sec:examples} we provide a number of detailed examples where we calculate the twisted index $Z_{g,\id}$ for elements of low order in $\mathbb{M}$ and explicitly verify our claims. In section \ref{sec:Haupt}  we combine all previous results to derive the genus zero properties of the McKay-Thompson series. Many technical details and proofs are relegated to the appendices.

\section{The setup}
\label{sec:setup}

\noindent In this section, we describe the main properties of the heterotic string compactifications that are the main focus of our paper.

\subsection{Monstrous heterotic string and CHL models}

The models we are interested in are certain compactifications of the heterotic strings to $0+1$ dimensions. We will define one such model for each element $g$ in the Monster group $\mathbb{M}$. When two elements $g$ and $g'$ are conjugated $g'=hgh^{-1}$ for some $h\in \mathbb{M}$, the corresponding models are equivalent and will be identified.

The starting point of our construction is a particular compactification of heterotic string to 1+1 dimensions. The internal CFT  of central charges $(c,\tilde c)=(24,12)$ factorizes as $V^\natural\times \bar V^{s\natural}$, where $V^\natural$ is the famous  Frenkel-Lepowsky-Meurman (FLM) Monster module \cite{FLM} and  $\bar V^{s\natural}$ is the Conway super-moonshine module first discussed by FLM in \cite{FLM} and constructed in \cite{duncan2007super} (see below for more details).\footnote{In our conventions, the anti-holomorphic (right-moving) side of the heterotic string has world-sheet supersymmetry.}  Compactifications of heterotic strings involving the FLM module $V^\natural$ have been considered before \cite{Harvey:1987da,Chaudhuri:1995ee,Green:1997gi,BergmanDistler}; in particular, the compactification on  $V^\natural\times \bar V^{s\natural}$ is discussed in \cite{BergmanDistler}.

\medskip

The FLM module $V^\natural$ is a  holomorphic bosonic conformal field theory, or vertex operator algebra (VOA), with central charge $c=24$. Its partition function is the $SL(2,\ZZ)$-invariant J-function with zero constant term
\be \Tr_{V^\natural}(q^{L_0-1})= J(\tau)=q^{-1}+0+196884q+\ldots\ , \qquad q:=e^{2\pi i \tau}\ .
\ee It is the only known (and, conjecturally, the unique) holomorphic CFT of central charge $c=24$ with no fields of conformal weight $1$ (currents). Its group of symmetries (i.e., linear transformations preserving the OPE, the vacuum, and the stress energy tensor) is isomorphic to the Monster group $\mathbb{M}$. It can be obtained starting from the Leech lattice CFT, i.e. the chiral half of the bosonic non-linear sigma model on the torus $\RR^{24}/\Lambda_{Leech}$, and then taking the $\ZZ_2$ orbifold under the symmetry that inverts the sign of all $24$ torus coordinates. Here, $\Lambda_{Leech}$ is the Leech lattice, the unique $24$-dimensional even unimodular lattice with no vectors of squared length $2$.   

\medskip

The right moving side of heterotic string, in the NS sector, is the (anti-)holomorphic  $\mathcal{N}=1$ superconformal field theory (super VOA) $\bar V^{s\natural}$ with $c=12$ studied in  \cite{duncan2007super}. It can be obtained as a $\ZZ_2$ orbifold of the $\mathcal{N}=1$ SCFT built in terms of the $E_8$ lattice, i.e. the chiral half of the supersymmetric non-linear sigma model with target space the torus $\RR^8/E_8$. The theory $V^{s\natural}$ is characterized as the unique holomorphic SCFT of $c=12$ with no fields of conformal weight $1/2$. Its group of automorphisms $\Aut(V^{s\natural})$ preserving the $\mathcal{N}=1$ superVirasoro algebra is the Conway group $Co_0$, though we will not need this property in the following. The right-moving Ramond sector of the heterotic string is the other unique irreducible module for the superVOA $\bar V^{s\natural}$. (By abuse of language, we will call these two modules the NS and the R sectors of $\bar V^{s\natural}$; one should keep in mind, however, that in the mathematical literature $\bar V^{s\natural}$ denotes only our NS sector.) The Ramond sector has $24$ ground states of conformal weight $1/2$ and positive fermion number.

\medskip

   The internal CFT has no direct geometric interpretation, i.e. it cannot be directly described as a non-linear sigma model on a compact manifold. However, it can be obtained as an asymmetric $\ZZ_2\times \ZZ_2$ orbifold of a compactification on a torus $\TT^8$. In the Narain moduli space parametrizing the geometry and the B-field of the torus $\TT^8$, there is a unique point where the non-linear sigma model factorizes as a product of the holomorphic Leech lattice CFT  and the anti-holomorphic $E_8$ lattice SCFT.  One considers the orbifold of this model by the $\ZZ_2\times \ZZ_2$ symmetries that flip the signs of the $24$ left-moving and, independently, of the $8$ right-moving scalar (super)fields in these theories. 

The compactification of heterotic strings on $\TT^8$ yields a 1+1 dimensional theory with $(8,8)$ space-time supersymmetry. The $\ZZ_2$ orbifold acting on the left-moving (bosonic) side preserves all such supersymmetries, while the $\ZZ_2$ orbifold acting on the right-moving (supersymmetric) sector breaks half of them, down to $(0,8)$. However, including the twisted sector introduces $16$ additional supersymmetries with the same space-time chirality, so that the theory we are considering has $(0,24)$ supersymmetry \cite{BergmanDistler}.

\bigskip

Starting from this heterotic compactification, we will now construct a host of $0+1$ dimensional models (i.e., supersymmetric quantum mechanics) with $24$ supersymmetries, by first compactifying one further space direction on a circle $S^1$ of radius $R$,  and then taking a CHL-like orbifold.  More precisely, we consider the orbifold of heterotic strings on $S^1\times (V^\natural\times \bar V^{s\natural})$ by a $\ZZ_N$ symmetry $(\delta,g)$, where $\delta$ is a shift of $1/N$ of a period along the $S^1$ circle, and $g\in \mathbb{M}$ is a symmetry of the left-moving internal CFT $V^\natural$. This is analogous to the standard CHL construction \cite{Chaudhuri:1995fk,Chaudhuri:1995bf,Chaudhuri:1995dj,Chaudhuri:1995ee}. Many of the models constructed in this way are actually equivalent to each other. In fact, up to equivalence, the CHL models only depend on the conjugacy class of the cyclic subgroup $\langle g\rangle\subset \mathbb{M}$ (although, we will usually denote them simply by the generator $g$). This follows from the fact that $V^\natural$ is invariant under charge conjugation, and that any power $g^a$ of $g$, with $a$ coprime to the order $N$ of $g$, is conjugated with either $g$ or $g^{-1}$ within the Monster group.

\bigskip

The CHL construction outlined above is consistent only for those $g\in \Aut(V^\natural)$ that satisfy the level-matching condition, i.e. such that the conformal weights in the $g$-twisted sector $V^\natural_g$ of $V^\natural$ take value in $\frac{1}{N}\ZZ$, where $N$ is the order of $g$. In general, the conformal weights of a $g$-twisted state takes values in
\be \frac{\E_g}{N\lambda}+\frac{1}{N}\ZZ\ ,
\ee where $\lambda$ is a positive integer (depending on $g$) dividing both $N$ and $24$ (see Appendix \ref{a:TwistAndTwin} for the proof of the latter), and $\E_g \in (\ZZ/\lambda\ZZ)^\times $ is an integer defined modulo $\lambda$ and coprime with $\lambda$. Here, $\lambda$ is also the order of the multiplier system of the McKay-Thompson series $T_g$. Even when $\lambda>1$, a consistent CHL orbifold can be constructed: it is sufficient to take a symmetry $(\delta,g)$ with a shift $\delta$ of order $N\lambda$ rather than $N$. We refer to \cite{Persson:2015jka} and Appendix \ref{a:TwistAndTwin} for more details.

\subsection{The dilaton and other moduli}\label{sec:dilaton}
The spectra of these ``Monstrous CHL'' models will be discussed in some detail in the next sections. However, one striking feature of these models deserves to be stressed: there are no local massless degrees of freedom \cite{BergmanDistler,Green:1997gi,Harvey:1987da}! This is most easily understood in the light-cone quantization, where massless string states correspond to states with conformal weight $1$ (currents) in the internal bosonic CFT. However, as stressed in the last section, $V^\natural$ has no currents; furthermore, no massless states can be introduced in the orbifold by $(\delta,g)$, since the strings in the twisted sectors have non-zero fractional winding along $S^1$, so that they are necessarily massive.\footnote{This argument fails in the limit $R\to 0$. Indeed, we will see in the following sections that, in some CHL models, massless states can appear in this limit.}  Therefore, all physical states in the light-cone quantization must be massive.

In particular, as noticed in \cite{BergmanDistler,Green:1997gi,Harvey:1987da},  there are no moduli and all parameters of the theory seem to be completely fixed, including the string coupling constant $g_s$. This is puzzling, as the $\ZZ_2\times \ZZ_2$ orbifold procedure leading from the compactification on $\TT^8$ to the model we are considering seems to be perfectly consistent for all (small) values of the string coupling constant. It is not clear what kind of mechanism could fix $g_s$ to a specific value. On the other hand, the alternative idea that the coupling constant is a free parameter not related to any string background seems to be at odds with all we know about string theory.

A somehow analogous issue seems to arise for the radius $R$ of the compactification circle. In this case, however, the resolution is quite clear: while the gauge (specifically, diffeomorphism) invariance in two dimensions is more than sufficient to fix the metric and eliminate all \emph{local} degrees of freedom, the length of the geodesic along the circle is gauge invariant and therefore has physical effects. The possibility of a residual global degree of freedom of zero measure that cannot be fixed by a gauge transformation is a well known phenomenon, occurring, for example, in the gauge fixing of string theory at genus higher than zero. Even in this case, however, it is puzzling that there is no physical state in the string theory corresponding to deformations of $R$.

\bigskip

We propose that both these puzzles can be solved by a more careful treatment of the physical states at zero-momentum. Recall that the  light-cone quantization can be shown to be equivalent to BRST only for non-zero momentum states $k^\mu\neq 0$. At zero momentum, the light-cone gauge is not a good gauge choice, and one has to apply a BRST quantization procedure. 

For simplicity, let us consider the bosonic string compactified on $V^\natural\times\bar V^\natural$, where the same issues appear.
In the BRST formalism, the physical states in closed string theory correspond to the semi-relative BRST cohomology, i.e. the BRST cohomology on the complex of states satisfying $(b_0-\bar b_0)|\psi\rangle=0$. At zero momentum, the BRST cohomology includes all states of the form
\be D^{\mu\nu}:=c_1\alpha^{\mu}_{-1}\bar c_1\bar\alpha^\nu_{-1}|0\rangle\ ,
\ee where $\alpha^\mu_n,\bar\alpha^\nu_n$ are the standard bosonic oscillators in the $d$ uncompactified directions and $c_n,\tilde c_n$ are the ghosts, as well as the \emph{ghost dilaton} \cite{Distler:1991au,Bergman:1994qq,Rahman:1995ee,Belopolsky:1995vi}
\be D_g= (c_1 c_{-1}-\bar c_1 \bar c_{-1})|0\rangle\ .
\ee Notice that all $d^2$ states $D^{\mu\nu}$ are physical at zero momentum; the corresponding string background determines the global geometric properties of our two dimensional space-time, such as the radius $R$. 

The ghost dilaton $D_g$ is the BRST variation of $\chi:=(c_0-\bar c_0)|0\rangle$. However, it is not BRST exact in the semi-relative complex, since $\chi$ is not a `legal' state in this complex; that is, $(b_0-\bar b_0)\chi\neq 0$. 
The zero-momentum limit of the physical dilaton field is the following linear combination \cite{Belopolsky:1995vi}
\be D:=\eta_{\mu\nu}D^{\mu\nu}-D_g=D_m-D_g\ ,
\ee of the ghost dilaton and of the so-called matter dilaton
\be D_m:=\eta_{\mu\nu}D^{\mu\nu}\ .
\ee Indeed, this is the linear combination that transforms as a scalar under gauge transformations. Another interesting linear combination is the trace
\be \mathcal{G}:=\eta_{\mu\nu}\mathcal{G}^{\mu\nu}=D_m-\frac{d}{2}D_g\ ,
\ee of the (Einstein frame) graviton
\be \mathcal{G}^{\mu\nu} =D^{\mu\nu}-\frac{1}{2}\eta^{\mu\nu}D_g\ ,
\ee where $d$ is the number of uncompactified space-time dimensions.

In the framework of closed string field theory, it has been shown that, for any number of space-time dimensions $d$,  a change in the ghost dilaton background $D_g$ has the effect of shifting the string coupling constant \cite{Bergman:1994qq,Rahman:1995ee}. On the other hand, a change of the graviton trace background $\mathcal{G}$ only corresponds to a field redefinition and has no observable physical effect \cite{Belopolsky:1995vi}. In the usual case where $d>2$, this implies that the string coupling constant is determined by the background for the dilaton $D$ (the zero-momentum limit of the physical dilaton field) or, equivalently, by the background for the ghost dilaton $D_g$.   In the case we are considering, where there are only $2$ non-compact directions ($d=2$), the dilaton $D$ coincides with the graviton trace $\mathcal{G}$ and therefore has no observable physical effect. However, a  ghost dilaton background still makes perfect sense in our theory and has the effect of shifting the string coupling constant. 
 
 While this reasoning has been derived in the context of bosonic strings, analogous arguments should hold for the Monstrous heterotic CHL models. We conclude that also in these theories, we are free to set the string coupling constant to any particular value, since this corresponds to the choice of the string (ghost) dilaton background, as usual. In particular, in the next sections, we will consider the Monster CHL models in the free theory limit $g_s\to 0$.

\section{The space-time index I: Fock space construction}\label{s:indexI}

\noindent We shall now consider the Fock space construction of the supersymmetric index $Z$. We do this in two steps: first, we study the case without any CHL-orbifold and then then we generalize this to the twisted indices associated with the Monstrous CHL models. 
\subsection{The untwisted case}
Let us consider compactification of the heterotic string on $V^\natural\times \bar V^{s\natural}$ to two space-time dimensions (without CHL orbifold, to start with). We consider a space-time with a flat metric of Lorentzian signature and the topology of a cylinder $S^1\times \RR$, where $\RR$ is the time direction and $S^1$ is a space-like circle of radius $R$. The theory has $(0,24)$ space-time supersymmetries \cite{BergmanDistler} $Q^i$, $i=1,\ldots,24$, with algebra
\be\label{SUSY} \{Q^i,Q^j\}=\delta^{ij} (P^0_R-P^1_R)\ ,
\ee where $(P^0_R,P^1_R)$ are the contributions to the space-time momenta coming from the world-sheet right-moving sector. This algebra has two kinds of supermultiplets: short (BPS) supermultiplets, that are 1-dimensional, and whose states satisfy 
\be\label{BPS} k^0_R=k^1_R\ ,
\ee  which is essentially a BPS condition for the algebra \eqref{SUSY} \footnote{We denote by $k^0_{L,R}$, $k^1_{L,R}$ the eigenvalues of $P^0_{L,R}$, $P^1_{L,R}$.}; and long supermultiplets with $k^0_R>k^1_R$ of dimension $2^{12}$ and containing half fermions and half bosons.

We want to consider a Hilbert space $\Hh$ corresponding to the `second quantization' of this string theory, i.e. including any number of fundamental strings. In this theory we consider the refined supersymmetric index
\be Z(\beta,b,v, R)=\Tr_{\Hh}(e^{-\beta H} e^{2\pi ibW}e^{2\pi ivM} (-1)^{F})\ ,
\ee where $H$ is the space-time Hamiltonian, $F$ is the space-time fermion number, and $W$ and $M$ are the total winding and momentum numbers along the circle $S^1$. We will use lowercase letters $w$ and $m$ to denote the winding and momentum of a single fundamental string. Here, $\beta$ is the inverse temperature, and $b$ and $v$ are chemical potentials conjugate to the quantum numbers $W$ and $M$ ($b$ can be interpreted as a background $B$-field and $v$ as an off-diagonal component of the space-time metric). We impose periodic boundary conditions for the fermions around the circle $S^1$. The space $\Hh$ carries a representation of the supersymmetry algebra \eqref{SUSY} and, by the usual index arguments, the only states contributing to this trace are the ones in short (BPS) supermultiplets. Thus, we can reduce the trace to the BPS subspace $\Hh_{BPS}$.

Let us consider the $1$-particle BPS states, obtained through a light-cone quantization of the string theory. The mass-shell and level-matching conditions read
\be\label{physical} \begin{cases} 0=-\frac{1}{2}(k^0_L)^2+\frac{1}{2}(k^1_L)^2+h_L-1\ ,\\
0=-\frac{1}{2}(k^0_R)^2+\frac{1}{2}(k^1_R)^2+h_R-\frac{1}{2}\ ,\end{cases}
\ee where $(h_L,h_R)$ are the $(L_0,\bar L_0)$-eigenvalues of the state in the internal CFT $V^\natural\times \bar V^{s\natural}$. Since there is no winding around the time direction, we have
\be k^0_L=k^0_R=E\ ,
\ee where $E$ is the eigenvalue of the space-time Hamiltonian $H$. By imposing the BPS condition \eqref{BPS}, and using the relations
\be k^1_L=\frac{1}{\sqrt{2}}\left(\frac{m}{R}-wR\right)\ ,\qquad k^1_R=\frac{1}{\sqrt{2}}\left(\frac{m}{R}+wR\right)
\ee
 we obtain
\be \begin{cases} 0=-\frac{1}{2}(k^1_R)^2+\frac{1}{2}(k^1_L)^2+h_L-1=-mw+h_L-1\ ,\\
0=h_R-\frac{1}{2}\ .\end{cases}
\ee The only states satisfying $h_R=1/2$ are the Ramond ground states in $\bar V^{s\natural}$. If we set
\be J(\tau)=\sum_{n = -1}^\infty c(n) q^{n}=q^{-1}+0+196884q+\ldots\ ,
\ee then for each $w,m\in\ZZ$ there are $24c(mw)$ fermions
carrying winding $w$ and momentum $m$ along $S^1$ and with energy
\be\label{EnBPS} E=k^1_R=\frac{1}{\sqrt{2}}\left(\frac{m}{R}+wR\right)\ .
\ee 
  In a free theory limit, we can think of the second quantized BPS Hilbert space $\Hh_{BPS}$ as a Fock space built in terms of fermionic oscillators corresponding to the $1$-particle BPS states. In particular, states with energy $E>0$ (respectively, $E<0$) are interpreted as creation (respectively, annihilation) operators.  Notice that
\be c(mw)>0\qquad \Rightarrow\qquad  mw=-1\text{ or } mw>0\ ,
\ee so that the condition $E>0$ implies 
\be m,w> 0\quad \text{or}\quad \begin{cases} m=1,w=-1 & \text{if }R<1\\ m=-1,w=1 & \text{if }R>1\ .
\end{cases}
\ee
The ground state is, by definition, the unique (for $R\neq 1$) state in $\Hh$ that is annihilated by all operators with $E<0$. Notice that this definition depends on the radius $R$. When $R=1$ there are additional zero energy fermionic oscillators and the ground state is degenerate. The space $\Hh_{BPS}$ is constructed by acting on the vacuum in all possible ways with creation operators.  Let us focus on the case  $R>1$, for definiteness; the case $R<1$ is analogous. The relation \eqref{EnBPS} generalizes by linearity to the relation 
\be\label{HamBPS} H=\frac{1}{\sqrt{2}}\left(\frac{M}{R}+WR\right)\ ,
\ee between operators on the BPS space $\Hh_{BPS}$,
so that 
\be\label{ZbetabvR}  Z(\beta,b,v, R)=e^{24(-\beta E_0+iv m_0+ib w_0)}\prod_{\substack{w> 0\\m\in \ZZ}}(1-e^{-\frac{\beta}{\sqrt{2}}(\frac{m}{R}+wR)}e^{2\pi i b w}e^{2\pi ivm})^{24c(mw)}\ ,
\ee where we included the possibility of vacuum momentum $24m_0$, winding $24w_0$ and energy $24E_0=\frac{24}{\sqrt{2}}(\frac{m_0}{R}+w_0R)$.  It is useful to introduce the complex parameters
\be T=b+i\frac{\beta R}{2\sqrt{2}\pi}\qquad U=v+i\frac{\beta }{2\sqrt{2}\pi R}\ ,
\ee so that
\be Z(T,U)=e^{24(2\pi i (m_0U+w_0T))}\prod_{\substack{w> 0\\m\in \ZZ}}(1-e^{2\pi i Um}e^{2\pi iTw})^{24c(mw)}\ .
\ee 
In section \ref{s:oneloop}, we will provide a formula for $Z(T,U)$ (or, rather, its absolute value) in terms of a string $1$-loop path integral. In this context, the time direction is Wick-rotated to a Euclidean time compactified on a thermal circle of radius $\beta$, so that the space-time becomes a Euclidean torus $\TT^2$. The complex parameters $U$ and $T$  are then identified with the complex structure and the (complexified) K\"ahler structure moduli of $\TT^2$.

The vacuum winding and momentum will be computed in section \ref{s:Algebras} to be $(w_0,m_0)=(-1,0)$, so that
\be Z(T,U)=\Bigl(e^{-2\pi i T}\prod_{\substack{w> 0\\m\in \ZZ}}(1-e^{2\pi i Um}e^{2\pi iTw})^{c(mw)}\Bigr)^{24}\ .
\ee Apart from the exponent $24$, this is exactly the product formula for Borcherds' Monstrous Lie algebra! This is not an accident: we will show in section \ref{s:Algebras} that the (first quantized) string BPS states are a representation over this algebra.

\bigskip

As stressed above, when the radius $R$ is varied continuously from $R>1$ to $R<1$, the energy of two fermionic operators change sign (an annihilation operator becomes a creation operator and vice-versa), so that the vacuum state changes. Thus, one might expect a discontinuity of $Z(T,U)$ as one crosses the line $R=1$. Furthermore, the infinite product above is expected to converge only for sufficiently large $\beta$. However, in the alternative derivations of the index $Z$ in the following sections, it will be clear that there is no discontinuity as the radius crosses the line $R=1$ and that $Z(T,U)$ is an analytic function of $T$ and $U$ over all the upper half of the complex plane.

\subsection{The twisted case}

There are two `twists' of the previous construction that will be interesting for us. The simplest modification is to consider 
\be Z_{\id,g}(\beta,b,v, R):=\Tr_{\Hh}(g\,e^{-\beta H} e^{2\pi ibw}e^{2\pi ivm} (-1)^{F})
\ee where we insert an element $g\in\MM$ of the Monster group inside the trace. More precisely, the fermionic operators carry an action of the Monster group $\MM$ and this determines a representation of $\MM$ over $\Hh_{BPS}$ preserving $H$, $M$ and $W$. In particular, the vacuum state is invariant under this action, since the only $1$-dimensional representation of $\MM$ is the trivial one. 

The second modification of the index is to consider the `second quantized' BPS space $\Hh_{CHL(g)}$ constructed from the CHL model associated with an element $g\in \MM$, i.e.
\be Z_{g,\id}(\beta,b,v, R):=\Tr_{\Hh_{CHL(g)}}(e^{-\beta H} e^{2\pi ibW}e^{2\pi ivM} (-1)^{F})\ .
\ee We will show that for $g$ of order $N$ (with $\lambda=1$), these twisted indices take the form
\be\label{infiniteprod}
Z_{g,\id}(T,U)=\Bigl(e^{-2\pi i  T}\prod_{\substack{n> 0\\m\in \ZZ}}(1-e^{2\pi i U\frac{m}{N}}e^{2\pi iTn})^{\hat c_{n,m}(\frac{mn}{N})}\Bigr)^{24}\ ,
\ee where the $\hat c_{n, m}(\frac{m n}{N})$ is the dimension of the $e^{2 \pi i m/N}$-eigenspace of the $g^n$-twisted sector $V^\natural_{g^n}$ at level $L_0-1=\frac{m n}{N}$, c.f. equation \eqref{defF}. We will denote these graded spaces of states by
\be \label{eq:Vdef}
V^\natural_{n, m} = \left\lbrace v \in V^\natural_{g^n} \vert g(v) = e^{2 \pi i \frac{m}{N}} \right\rbrace, \quad n, m \in \mathbb{Z}/(N \mathbb{Z}).
\ee Notice that, by definition, $\hat c_{n, m}(\frac{n m}{N})$ are always nonnegative integers. We will also show that similar equations hold in the case $\lambda \neq 1$. The indices $Z_{g,\id}$ will be the main subject of our investigation.

\subsection{Coupling with gravity?}\label{sec:anomaly}

Let us critically reconsider the construction of sections 3.1 and 3.2. We have considered the physical string states arising from the light-cone quantization of the Monstrous CHL models and from these built a `second quantized' Fock space of states, which describes the spectrum of an arbitrary number of free strings. We have taken the strings to propagate in a fixed geometric background, neglecting any backreaction of the strings on the space-time metric or B-field. Consistent with taking a non-dynamical background, we have ignored the zero-momentum string modes that appear in the BRST quantization of string theory, which would be associated to background fluctuations. This decoupling is consistent as long as the string coupling constant is strictly zero, which we have assumed throughout our computations. From a different viewpoint, the limit $g_s\to 0$ pushes the Planck scale $M_{Planck}$ much higher than the string scale $M_{string}$, so that it makes sense to study the theory at energies in  some intermediate region $M_{string}\ll E\ll M_{Planck}$. This is a convenient set-up for studying the symmetries of the spectrum of the CHL models, as we will do in the following sections. 

\bigskip

It is natural to ask if one can turn the string coupling constant on, so as to consider a system of `second quantized' interacting strings coupled to a dynamical background. Unfortunately, this does not seem to lead to a consistent physical model. The basic reason is that, in a theory with dynamical background fields, the number of spacetime-filling strings cannot be arbitrary, but is fixed by the requirement of anomaly (or tadpole) cancellation\footnote{This is completely analogous to the familiar restriction on the number of spacetime-filling D9-branes in the 10-dimensional type IIB superstring.}. In two-dimensional heterotic compactifications, there is a potential 1-point function for the B-field arising from the compactification of the 10-dimensional Green-Schwarz term. This term in the effective action arises from a 1-loop string amplitude and its coefficient can be computed using the techniques in \cite{Lerche:1987qk,Vafa:1995fj} to be
\be
\frac{1}{24} (24J(\tau)E_2(\tau))_{q^0}= \frac{1}{24} \bigl((24q^{-1}+O(q))(1-24 q+O(q^2))\bigr)_{q^0}=-24\ .
\ee Here, $E_2=1-24q+\ldots$ is the Eisenstein series of weight $2$ and the notation $(\cdot )_{q^0}$ denotes the constant term in the Fourier expansion of the modular form. This result can also be understood in terms of local gravitational anomalies for the two-dimensional effective theory: the 1-point function for the B-field is necessary in order to cancel the contributions to the  anomalies from the $24$ chiral gravitini and dilatini. The tadpole makes the background unstable and the theory inconsistent, but there is a standard procedure to cancel it: one has to insert $24$ spacetime-filling fundamental strings, which couple to the B-field and thereby add the required positive term to the one-point function. Equivalently, the contributions to the gravitational anomalies from the degrees of freedom of the spacetime-filling heterotic strings cancel the ones from the gravitini and dilatini. 

The calculation of the tadpole for the B-field also suggests that the correct ground state winding in our second quantized string theory is $24w_0=-24$. A similar analysis shows that there is no tadpole for the off-diagonal component of the metric, which implies the ground state momentum $m_0=0$.

\bigskip

The outcome of this analysis is that the only sector of our second quantized string theory that can be consistently coupled with gravity and a dynamical B-field is the one with $24$ spacetime-filling strings, i.e. the sector with total winding number $W=0$. We stress once again that it is formally correct to consider the full second quantized free strings in a fixed, non-dynamical background. In a sense, the situation is similar to the case of a two- or six-dimensional quantum field theory whose matter content contributes to a non-vanishing gravitational anomaly: the theory is perfectly consistent as long as it is is decoupled from gravity! The results of the following sections provide strong support in favour of this viewpoint.  On the other hand, the inconsistency of the full construction when the string coupling constant is non-vanishing makes our physical interpretation of the moonshine phenomenon not completely satisfactory. We hope we will be able improve this point in future publications.

\section{The space time index II: 1-loop integral}\label{s:oneloop}
\noindent We shall now define and evaluate a one-loop integral in the Monstrous CHL-models. The result of this integral reproduces the same supersymmetric index as was calculated in the 
previous section using completely different methods. The main benefit of the present, path-integral approach is that T-duality becomes manifest. After evaluating the integral we analyze its 
T-duality symmetries in great detail, revealing that the T-duality groups are directly related to the moonshine groups $\Gamma_g$.

\subsection{The 1-loop integral and the GSO projection}
The supersymmetric index $Z_{g,\id}(T,U)$ is a refined partition function at finite (inverse) temperature $\beta$. In general, one expects any such partition function to be given by a suitable Euclidean path-integral with Euclidean time periodically identified with period $\beta$.   In our context, we need to consider our Monstrous CHL-models with the two space-time directions on a Euclidean torus $\TT^2$ with complex modulus $U$ and K\"ahler modulus $T$. The index $Z_{g,\id}(T,U)$ should be obtained by a path-integral
\be Z_{g,\id}(T,U)=e^{-(S_{tree}+S_{1-loop}+\ldots)}\ ,
\ee where $S_{\ell-loop}$ is the string $\ell$-loop contribution.
In general, each loop contribution is weighted by a power $g_s^{2-2\ell}$ of the string coupling constant $g_s$. Since we are considering a free theory $g_s=0$,  it is natural to expect the path-integral should be one-loop exact. The one loop contribution is given by the standard string path integral on a torus
\be S_{1-loop}= \frac 1 2\int_{\F} \frac{d^2\tau}{\tau_2^2}   \Bigl(\Tr_{NS}(q^{L_0-\frac{c}{24}}\bar q^{\bar L_0-\frac{\bar c}{24}} P_{GSO})-\Tr_R(q^{L_0-\frac{c}{24}}\bar q^{\bar L_0-\frac{\bar c}{24}}P_{GSO})\Bigr)\ee
i.e. a trace over the (GSO projected) full space of states, with different signs for the Ramond and Neveu-Schwarz sector to take the space-time fermion number into account. The GSO projection is rather subtle for fermions with $k_R^2=0$. In two dimensions, the massless Dirac equation relates the spin of the state with the sign of its momentum $k_{1,R}$. Therefore, for massless fermions, the GSO projection is implemented by including states with either positive or negative transverse (internal) fermion number, depending on whether $k_{0,R}=k_{1,R}$ or $k_{0,R}=-k_{1,R}$.  In our specific case, there are $24$ internal Ramond ground states with \emph{internal} fermion number $(-1)^{\bar F}=+1$ and no states with $(-1)^{\bar F} =-1$. Therefore, a properly implemented GSO projection should include $24$ states with $k_{0,R}=k_{1,R}$ and no states with $k_{0,R}=-k_{1,R}$.

In practice, it is however very difficult to implement the GSO-condition directly in the 1-loop path integral. We circumvent this by the following trick. Consider instead the naive path integral
\be S_{1-loop}^{\text{naive}}=\frac{1}{2}\int_{\F}\frac{d^2\tau}{\tau_2^2}\Big[ \Tr_{NS}(q^{L_0-1}\bar q^{\bar L_0-\frac{1}{2}}\frac{1-(-1)^{\bar F}}{2})-\Tr_{R}(q^{L_0-1}\bar q^{\bar L_0-\frac{1}{2}}\frac{1+(-1)^{\bar F}}{2})\Big]\ ,
\label{naivepathint}
\ee  
where the trace is taken over the holomorphic and anti-holomorphic `transverse' CFTs and over the winding-moments in the light-cone directions.
The error we introduce in this way corresponds to the contribution of $24$ copies of each massless (i.e., $k_R^2=0$) fermion with the wrong chirality, i.e.
\be  S_{1-loop}^{\text{naive}}=S_{1-loop}^{\text{true}}+\PP(S_{1-loop}^{\text{true}})\ ,
\ee where $\PP$ is the parity transformation in the space direction.
Let us focus on the case $g=\id$ for clarity. We know that $Z(T,U)$ is given by the exponential $\exp (-S_{1-loop}^{\text{true}})$ of the correct 1-loop path integral. Therefore, the exponential of the `naive' 1-loop contribution corresponds to
\be \exp (-S_{1-loop}^{\text{naive}})=Z(T,U)\times \PP(Z(T,U))\ .
\ee 
Parity reversal changes the sign of momentum and winding in the space direction
\be M\to -M\qquad\qquad W\to -W\ ,
\ee while leaving the Hamiltonian $H$ fixed. From the formula
\be Z(R,\beta,b,v)=\Tr_{\mathcal{H}_{BPS}}((-1)^Fe^{-\beta H}e^{2\pi ibW}e^{2\pi ivM})\ ,
\ee we see that the parity transformation corresponds to
\be \PP(Z(R,\beta,b,v))= Z(R,\beta,-b,-v)\ ,
\ee which, in turn, is equivalent to
\be\label{parity} \PP(Z(T,U))= Z(-\bar T,-\bar U)=\overline{Z(T,U)}\ .
\ee
To conclude, if we consider the `naive' 1-loop integral
 then we have
\be e^{-S_{1-loop}^{\text{naive}}}=|Z(T,U)|^2\ .
\ee Analogous results hold for the cases $g\neq \id$. Therefore, the `correct' GSO projection is simply obtained by picking the holomorphic part of $e^{-S_{1-loop}^{\text{naive}}}$. We will drop the superscript `naive' from now on.

\subsection{Evaluating the 1-loop integral}

In this section, we compute the 1-loop path integral \eqref{naivepathint} explicitly. Let us first focus on the unorbifolded case $g=\id$. The trace factorizes into the product of three contributions from $\bar V^{s\natural}$, from $V^\natural$ and from the winding and momenta along $\TT^2$. The contribution of the oscillators along $\TT^2$ and from the ghosts and superghosts cancel each other, as usual. First  note that  $V^{s\natural}$ happens to have the nice property that  \cite{BergmanDistler}
\be  \Tr_{\bar V^{s\natural},NS}(\bar q^{\bar L_0-\frac{\bar c}{24}}\frac{1-(-1)^{\bar F}}{2})- \Tr_{\bar V^{s\natural},R}(\bar q^{\bar L_0-\frac{\bar c}{24}}\frac{1+(-1)^{\bar F}}{2})=-24\ .
\ee
This completely takes care of the trace over $\bar V^{s\natural}$ in \eqref{naivepathint}.

In the untwisted case ($g=\id$), the sum over winding and momenta along $\TT^2$ is the usual   theta function $\Theta_{\Gamma^{2,2}}(T,U,\tau)= \sum_{(k_L, k_R) \in \Gamma^{2, 2}} q^{k_L^2/2}\bar{q}^{k_R^2/2}$, where  $\Gamma^{2, 2}$ (the Narain lattice) is the even unimodular lattice with signature $(2,2)$.
The rest of the left-moving contribution comes from the Monster module $V^\natural$. As worked out carefully in e.g. \cite{DGH}, the theory simply has chiral partition function $J(\tau)$. 

For our more general CHL models, the computation is slightly more complicated. One needs to sum over the $(\delta,g)$-twisted sectors and then project over the $(\delta,g)$-invariant states.
Putting everything together the 1-loop integral \eqref{naivepathint} can now be written in the following explicit form
\be 
S_{1-loop}=-\frac{24}{N}\int_{\mathcal{F}} \frac{d^2\tau}{2\tau_2}\sum_{ r,s=1}^N \Theta^{\Gamma^{2,2}}_{r,s}(T,U,\tau)T_{g^r,g^s}(\tau),
\label{finaloneloop}
\ee
where $r$ labels the different $(\delta,g)$-twisted sectors, $\frac{1}{N}\sum_{s=1}^N$ projects over the $(\delta,g)$-invariant states, and $T_{g^r,g^s}(\tau)$ is the $g^r$-twisted $g^s$-twined partition function of the Monster CFT:
\be T_{g^r,g^s}(\tau)=\Tr_{V^\natural_{g^r}}(g^sq^{L_0-1}).
\ee
 We have also defined the shifted theta series by
\be \Theta^{\Gamma^{2,2}}_{r,s}(T,U,\tau)= \sum_{\lambda\in r\delta+\Gamma^{2,2}}e^{2\pi i s\delta\cdot \lambda} q^{\frac{k_L^2}{2}}\bar q^{\frac{k_R^2}{2}}=\sum_{m_1,m_2,w_2\in \ZZ}\sum_{w_1\in \frac{r}{N}+\ZZ}
e^{-2\pi i \frac{sm_1}{N}} q^{\frac{k_L^2}{2}}\bar q^{\frac{k_R^2}{2}}
\ee 
where
\begin{align}
k_L^2&=\frac{\left|\begin{pmatrix}
T & \frac{1}{N}
\end{pmatrix}\begin{pmatrix}
w_2 & w_1\\ -m_1 & m_2
\end{pmatrix}\begin{pmatrix}
-U \\ N
\end{pmatrix}
\right|^2}{2T_2U_2}\\
k_R^2&=k_L^2-2m_1 w_1-2 m_2 w_2\ .
\end{align}
The lattice $L$ of winding and momenta is the union $L=\bigcup_{r=1}^N (r\delta +\Gamma^{2,2})$, where $r$ labels the different twisted sectors.

Following Harvey-Moore \cite{Harvey:1995fq}, Borcherds \cite{MR1625724} developed a method for calculating general integrals of the form
\be
\Phi(M, F):=\int_{\mathcal{F}} \frac{dx dy}{y}\big( \overline{\Theta}_{M}(z), \, F(z)\big) ,
\label{thetalift}
\ee
where $z=x+iy\in \mathbb{H}$, $M$ is a lattice, $F$ is a weight $k$ vector-valued modular form (valued in the group ring $\mathbb{C}[M^{\vee}/M]$) and $\overline{\Theta}_{M}$ is a weight $-k$ vector-valued 
Siegel theta series for the lattice $M$. The notation $(\, , \, )$ denotes the scalar product 
\be
(e_\gamma, e_{\gamma'})=\delta_{\gamma+\gamma',0}.
\ee
in the vector space freely generated by
$e_\gamma$, $\gamma\in M^{\vee}/M$.
 In terms of the basis $e_\gamma$ a vector-valued modular function $F$ for a congruence subgroup $\Gamma\subset SL(2,\mathbb{Z})$ can be written as
\be
F(z)=\sum_{\gamma\in M^{\vee}/M} F_\gamma(z)e_\gamma,
\ee
where the components $F_\gamma(z)$ are modular functions for $\Gamma$, transforming in the metaplectic representation of (the double cover of) $SL(2,\mathbb{Z})$ on $\mathbb{C}[M^{\vee}/M]$.
In a similar vein, one defines the vector-valued theta series 
as
\be
\Theta_{M}(\tau)=\sum_{\gamma\in M^{\vee}/M} \theta_{M+\gamma} e_\gamma, 
\ee
where $\theta_{M+\gamma}$ is the ordinary ``shifted'' Siegel theta series. 

We now want to relate the general integral $\Phi(M, F)$ to our one-loop integral \eqref{finaloneloop}. Let us focus on the case $\lambda=1$ for simplicity. In our case the lattice $M$ can be identified with (c.f. Eqn. \ref{vector}) 
\be
M\equiv L^\vee \cong N\mathbb{Z}\oplus \mathbb{Z}\oplus \mathbb{Z}\oplus \mathbb{Z}, 
\ee
which is the dual of the winding-momentum lattice $L$. 

To construct the vector-valued modular form $F_g$ we take the discrete Fourier transform of the generalized moonshine functions \cite{Carnahan2014}:
\be
F_{\gamma}(z)=F_{l,k}(z)=\frac{1}{N}\sum_{j\in \mathbb{Z}/N\mathbb{Z}} e^{-2 \pi i \frac{jk}{N}}T_{g^{l}, g^{j}}(z) =: \sum_{n \in \ZZ/N\ZZ}\hat c_{l, k}(n) e^{2 \pi i z n}
\label{defF}
\ee
Notice that the $F_{l, k}$ are the generating functions for the graded dimensions of $V_{l,k}$, c.f. equation \eqref{eq:Vdef}.
Thus
\be
F_g(z)=\sum_{l,k\in \mathbb{Z}/N\mathbb{Z}} F_{l,k}(z)e_{l,k} 
\ee
is a vector-valued modular form of weight $0$. Similarly, the theta function for the lattice $M\equiv L^{\vee}$ is 
\be
\Theta_{L^{\vee}}(\sigma,\tau,z)=\sum_{\gamma\in L/L^{\vee}} \theta_{L^{\vee}+\gamma}e_\gamma,
\ee
and the components are related to our  Narain theta function by:
\be
\theta_{L^{\vee}+\gamma}(\sigma, \tau, z)=\frac{1}{N}\sum_{s=1}^Ne^{\frac{2\pi i st}{N}}\Theta^{\Gamma^{2,2}}_{r,s}(T,U,\tau).
\ee

With this choice of data the integrand in the theta lift $\Phi(L^\vee, F_g)$ can be written out explicitly as 
\bea
\left(\overline{\Theta}_{L^{\vee}}, F_g\right)&=&\sum_{\gamma\in L/L^{\vee}} \sum_{\delta \in L/L^{\vee}} {\theta}_{L^{\vee}+\delta} F_\gamma \delta_{\gamma-\delta,0}
\nonumber \\
&=& \sum_{l,k\in \mathbb{Z}/N\mathbb{Z}} {\theta}_{L^{\vee}+(l,k)} F_{l,k}
\nonumber \\
&=& \frac{1}{N} \sum_{l,j,k\in \mathbb{Z}/N\mathbb{Z}} e^{-2\pi i \frac{jk}{N}} T_{g^{l}, g^{j}} \theta_{L^{\vee}+(l,k)}
\nonumber \\
&=& \frac{1}{N}\sum_{l,j\in \mathbb{Z}/N\mathbb{Z}} \Theta_{l,j}^{\Gamma^{2,2}}(T, U, z) T_{g^{l}, g^{j}}(z),
\label{integrand}
\eea
where we used the relation
\be
\Theta_{l,j}^{\Gamma^{2,2}} =\sum_{k\in \mathbb{Z}/N\mathbb{Z}} e^{-2 \pi i jk/N} \theta_{L^{\vee}+(l,k)}.
\ee
Thus, we conclude that our integral \eqref{finaloneloop} is indeed of the type \eqref{thetalift} and may be evaluated using the methods of Borcherds \cite{MR1625724}. Omitting the details, we find that our one-loop integral \eqref{finaloneloop}, for large enough imaginary parts $T_2,U_2$, evaluates to
\be
S_{1-loop}=- 48 \log\Big|(e^{-2\pi i T}\prod_{\substack{n> 0\\m\in \ZZ}}(1-e^{2\pi i U\frac{m}{N}}e^{2\pi iTn})^{\hat c_{n,m}(\frac{mn}{N})}\Big|.
\label{oneloopresult}
\ee
The argument of the logarithm can be recognized as the absolute value of the infinite product formula \eqref{infiniteprod} for the twisted index $Z_{g, \id}$, provided that the vacuum contribution is $e^{-2\pi i T}$.
Thus we conclude
\be
e^{-S_{1-loop}}=|Z_{g, \id}(T,U)|^2\ ,
\ee as we expected based on physical arguments.

\subsection{T-dualities and automorphisms of lattices}\label{s:Tdualities}

In this section, we study the groups of T-dualities of the Monster CHL models with Euclidean time compactified on a circle. As we will see, the most general group of T-dualities relates a CHL model  at a given point in the moduli space to a (possibly different) CHL model at different values of the moduli. 

We denote by $w_1,m_1,w_2,m_2$ the winding and momenta along the space-like and the Euclidean time circle of $\TT^2$, respectively. The vectors of winding-momenta span a  four dimensional lattice (the Narain lattice) $L$.
With each vector in $L$ are associated the left- and right-moving momenta $(k_{L},k_{R})\in \RR^2\times\RR^2$, depending on the moduli $T,U$. The difference $k_L^2-k_R^2$, however, is a moduli-independent even integer and defines the quadratic form of signature $(2,2)$
\be\label{qform} (m_1,w_1, m_2, w_2)^2:=k_L^2-k_R^2=2m_1w_1+2 m_2  w_2\ ,\ee on the lattice. A necessary condition for T-duality to preserve the OPE, is that the action on the lattice $L$ is an automorphism, i.e. an invertible linear map that preserves the quadratic form. 

Thus, as for ordinary compactifications on $\TT^2$,  the full T-duality group of any CHL model is a discrete subgroup of $O(2,2,\RR)$ and it can be identified with the group of automorphisms of the lattice $L$ (of signature $(2,2)$) of winding-momenta along $\TT^2$. The index $Z_{g,\id}(T,U)$ is expected to be invariant under the subgroup of \emph{self-dualities} of the CHL model, i.e. the group of T-dualities that related two different points in the moduli space of the \emph{same} CHL model.

The group of T-dualities always contains the parity transformation $\PP$ along the space direction, acting as in \eqref{parity} on the index $Z(T,U)$, as well as T-duality $\bfT$ along the Euclidean time circle. The latter acts by
\be \tilde R\leftrightarrow \frac{1}{\tilde R}
\ee on the radius $\tilde R=\frac{\beta}{2\sqrt{2}}$, and more generally by
\be \bfT:U\leftrightarrow  -\frac{1}{T}\ .
\ee
 Thus, $\bfT$ acts as
\be\label{timeTduality} Z_{g,\id}(T,U)\mapsto \bfT(Z_{g,\id}(T,U))= Z_{g,\id}(-\frac{1}{U},-\frac{1}{T})\ ,
\ee on the index and is always a self-duality of any CHL model, so that $Z_{g,\id}(T,U)$ must be invariant (possibly up to a phase) under $\bfT$.

Every other T-duality in $O(2,2,\RR)$ can be obtained by composing $\PP$ and $\bfT$ with T-dualities in the connected component $SO^+(2,2,\RR)$ of $O(2,2,\RR)$ containing the identity.
There is an isomorphism
\be SO^+(2,2,\RR)\cong (SL(2,\RR)\times SL(2,\RR))/(-1,-1)\ .
\ee To make this isomorphism explicit, rewrite a vector in $\RR^{2,2}$ as a $2\times 2$ matrix
\be X=\begin{pmatrix}
w_2 & w_1\\ -m_1 & m_2
\end{pmatrix}
\ee
 so that its norm \eqref{qform} is simply the determinant
 \be \|X\|^2= 2m_1w_1+2 m_2  w_2 =2\det X\ .
 \ee
 Then, there is an obvious action of $SL(2,\RR)\times SL(2,\RR)$ on $X$ preserving its norm, namely 
 \be X\mapsto \gamma_1X\gamma_2\qquad \gamma_1,\gamma_2\in SL(2,\RR)\ ,
 \ee and its clear that the kernel of this action is $(-1,-1)$.
 We denote by $\tilde{SO}^+(L)\subset SL(2,\RR)\times SL(2,\RR)$ the preimage of $SO^+(L)\subset SO^+(2,2,\RR)$ under the quotient map $SL(2,\RR)\times SL(2,\RR)\to SO^+(2,2,\RR)$, so that
 \be SO^+(L)=\tilde{SO}^+(L)/(-1,-1)\ .
 \ee
A T-duality acting on the lattice $L$ by a general automorphism
\be\label{autom} \begin{pmatrix}
w_2 & w_1\\ -m_1 & m_2
\end{pmatrix}\mapsto \begin{pmatrix}
 a & b\\ c & d
\end{pmatrix} \begin{pmatrix}
w_2 & w_1\\ -m_1 & m_2
\end{pmatrix}\begin{pmatrix}
 a' & b'\\ c' & d'
\end{pmatrix}
\ee must also act on the moduli $T$ and $U$ by
\be\label{action} T\mapsto \frac{dT-\frac{c}{N\lambda}}{-N\lambda b T+a}\qquad U\mapsto \frac{d'U+b'N\lambda}{\frac{c'}{N\lambda} U+a'}\ .
\ee so that the left- and right-moving momenta are preserved
\be k_L^2=\frac{\left|\begin{pmatrix}
T & \frac{1}{N\lambda}
\end{pmatrix}\begin{pmatrix}
w_2 & w_1\\ -m_1 & m_2 
\end{pmatrix}\begin{pmatrix}
-U \\ N\lambda
\end{pmatrix}
\right|^2}{2T_2U_2}\ ,\qquad\qquad k_R^2=k_L^2-2m_1 w_1-2 m_2 w_2\ .
\ee 
In the unorbifolded case (i.e. $g=\id$), preserving the norms $k_L^2,k_R^2$ of all left- and right-moving momenta is both necessary and sufficient for the spectrum and the OPE of the theory to be preserved. For general $g$, this condition is not sufficient; the elements of $\tilde{SO}^+(L)$ preserving a given CHL model generate its group $G_g$ of self-dualities.

In the next subsections, we will study the groups of automorphisms $\tilde {SO}^+(L)$ of the lattices $L$  and then discuss the subgroups of self-dualities. We will first consider the simplest case $\lambda=1$ and then extend the analysis to generic $\lambda$. 

\subsubsection{Case $\lambda=1$}

In this section, we describe the group of T-dualities of a Monster CHL model for a symmetry $g$ of order $N$ with trivial multiplier ($\lambda=1$). The case $\lambda>1$ will be considered in the next section.

In the case of a CHL model with respect to a symmetry $g$ of order $N$, the winding and momenta span a lattice $L$ given by
\be\label{vector} (m_1,w_1,m_2, w_2)\in \ZZ\oplus \frac{1}{N}\ZZ\oplus \ZZ\oplus \ZZ\ .
\ee

The following subgroup of $SL(2,\ZZ)\subset SL(2,\RR)$ 
\be \Gamma_0(N):=\left\lbrace\begin{pmatrix}
a & b\\ c & d
\end{pmatrix}\in SL(2,\ZZ)\mid c\equiv 0\mod N\right\rbrace ,
\ee will be important in the following.
The normalizer $\hat \Gamma_0(N)$ of $\Gamma_0(N)$ in $SL(2,\RR)$ is described in \cite{Conway:1979kx}. It consists of the matrices of the form 
\be\label{normal} \frac{1}{\sqrt{e}}\begin{pmatrix}
ae & b/h\\ cN/h & de
\end{pmatrix}\ ,
\ee where $a,b,c,d\in \ZZ$, $h$ is the maximal integer such that $h|24$ and $h^2|N$, $e\in \ZZ_{>0}$ is an exact divisor of $N/h^2$(denoted by $e||\frac{N}{h^2}$), i.e. $e|\frac{N}{h^2}$ and $(e,\frac{N}{eh^2})=1$,\footnote{In this section, we make use of the standard notation $\text{gcd}(a, b)=:(a, b)$.} and
\be ade^2-bc \frac{N}{h^2}=e\ .
\ee Among the elements in $\hat \Gamma_0(N)/\Gamma_0(N)$, an important role is played by the Atkin-Lehner involutions
\be W_e=\frac{1}{\sqrt{e}}\begin{pmatrix}
ae & b\\ cN & de
\end{pmatrix}\ ,
\ee which obey $W_e^2\in \Gamma_0(N)$ and $W_{e_1}W_{e_2}=W_{e_3}$ modulo $\Gamma_0(N)$, where $e_3:=e_1e_2/(e_1,e_2)^2$.

\begin{theorem}\label{th:easy}
The group $\tilde{SO}^+(L)$ of automorphisms of the lattice $L$ is
\begin{multline}\label{automgroup} \tilde{SO}^+(L)=\{\bigl(\frac{1}{\sqrt{e}}\begin{pmatrix}
ae & b\\ cN & de
\end{pmatrix}, \frac{1}{\sqrt{e}}\begin{pmatrix}
a'e & b'\\ c'N & d'e
\end{pmatrix}\bigr)\in SL(2,\RR)\times SL(2,\RR)\\ a,b,c,d,a',b',c',d'\in \ZZ,\ e\in \ZZ_{>0},\ e||N\}\ .
\end{multline} The group is generated by adjoining to the normal subgroup $\Gamma_0(N)\times \Gamma_0(N)\subset \tilde{SO}^+(L)$ the Atkin-Lehner involutions $(W_e,W_e)$, for all $e||N$.
\end{theorem}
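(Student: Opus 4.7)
The plan is to work entirely in the matrix formulation introduced before the theorem: elements of $L$ are represented as $X = \begin{pmatrix} w_2 & w_1 \\ -m_1 & m_2 \end{pmatrix}$ with $w_1 \in \tfrac{1}{N}\ZZ$ and the remaining entries in $\ZZ$, and an element $(\gamma_1, \gamma_2) \in SL(2,\RR)\times SL(2,\RR)$ lies in $\tilde{SO}^+(L)$ precisely when $\gamma_1 X \gamma_2 \in L$ for every $X \in L$. I will prove the theorem in three steps: (i) verify directly that every pair of the stated form lies in $\tilde{SO}^+(L)$; (ii) derive the converse by testing the action on a $\ZZ$-basis of $L$; and (iii) deduce the generation statement from the resulting parametrization.

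The forward direction (i) is pure expansion. For a pair $\bigl(\tfrac{1}{\sqrt e}\begin{pmatrix} ae & b \\ cN & de \end{pmatrix}, \tfrac{1}{\sqrt e}\begin{pmatrix} a'e & b' \\ c'N & d'e \end{pmatrix}\bigr)$ in the claimed set, the prefactor $1/e$ from the two $1/\sqrt e$'s is compensated by the $e$-scalings on the diagonal entries, and a direct entrywise computation shows that $\gamma_1 X \gamma_2$ has integer entries everywhere except the top-right, which lands in $\tfrac{1}{N}\ZZ$ precisely because the exactness $e\| N$ forces the \emph{a priori} fractions of type $1/(eN)$ to collapse into denominators dividing $N$. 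The normalization $\det\gamma_i = 1$ is the Bezout-type identity $a d e^2 - b c N = e$.

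For the converse (ii), suppose $(\gamma_1,\gamma_2) \in SL(2,\RR)\times SL(2,\RR)$ preserves $L$. Testing the action on the four $\ZZ$-basis elements $E_{11}, \tfrac{1}{N}E_{12}, -E_{21}, E_{22}$ and demanding the four images to lie in $L = \begin{pmatrix} \ZZ & \tfrac{1}{N}\ZZ \\ \ZZ & \ZZ \end{pmatrix}$ yields sixteen divisibility conditions on the pairwise products of entries of $\gamma_1$ and $\gamma_2$. Collecting them, most products lie in $\ZZ$; the nontrivial cases are $a_1 c_2, c_1 c_2, c_1 d_2 \in N\ZZ$ (from the $\tfrac{1}{N}E_{12}$ test) and $N a_1 b_2, N b_1 b_2, N b_1 d_2 \in \ZZ$ (from the other three tests). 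Writing each $\gamma_i = (1/\sqrt{e_i}) M_i$ with $M_i$ in reduced integer form, the conditions $a_1 a_2, d_1 d_2 \in \ZZ$ combined with $\det\gamma_i = 1$ first force $\sqrt{e_1 e_2} \in \QQ$, hence $e_1 e_2$ a perfect square; the stronger $N$-divisibility constraints then sharpen this to $e_1 = e_2 =: e$; the bottom-left divisibility forces $M_i$ to have the shape $\begin{pmatrix} a_i e & b_i \\ c_i N & d_i e \end{pmatrix}$ with integer entries; and the determinant relation $a_i d_i e^2 - b_i c_i N = e$ together with Bezout forces $\gcd(e, N/e) = 1$, i.e.\ $e \| N$.

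Once the parametrization is secured, the generation claim (iii) is routine: given any element of level $e$, left-multiplying by $(W_e^{-1}, W_e^{-1})$ yields an element of level $1$, i.e.\ of $\Gamma_0(N) \times \Gamma_0(N)$, since $W_e^2 \in \Gamma_0(N)$ and conjugation by $W_e$ preserves $\Gamma_0(N)$. The main obstacle is the $e_1 = e_2$ step in the converse: the condition ``$e_1 e_2$ is a perfect square'' is too weak to yield equality, and the finer statement must be extracted from the precise geometry of the $N$-divisibility conditions $a_1 c_2, c_1 c_2, c_1 d_2 \in N\ZZ$. These encode the asymmetric (``diagonal'') structure of $L$ and rule out pairs such as $(I, W_e)$ with $e > 1$. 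Handling this step cleanly will probably require interleaving several conditions together with a case split on the vanishing patterns of the entries of the $M_i$.
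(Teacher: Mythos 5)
Your overall strategy is viable and genuinely different in organization from the paper's: the paper first argues that any $(\gamma_1,\gamma_2)\in\tilde{SO}^+(L)$ must normalize $\Gamma_0(N)\times\Gamma_0(N)$, imports Conway--Norton's explicit description of the normalizer $\hat\Gamma_0(N)$ (matrices $\tfrac{1}{\sqrt e}\left(\begin{smallmatrix} ae & b/h \\ cN/h & de\end{smallmatrix}\right)$ with $h|24$, $h^2|N$, $e\|N/h^2$), and only then runs the divisibility analysis inside that restricted family; you instead work directly from the raw lattice conditions on arbitrary real matrices. Your forward direction, your list of divisibility conditions, and your generation argument (reducing level $e$ to level $1$ by multiplying with $(W_e^{-1},W_e^{-1})$, which is cleaner than the paper's citation of Conway--Norton) are all correct.

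However, there is a genuine gap at exactly the point you flag: the deduction that $e_1=e_2$ and that each $M_i$ has the precise integral shape $\left(\begin{smallmatrix} a_ie & b_i \\ c_iN & d_ie\end{smallmatrix}\right)$ is asserted, not proven, and this is where essentially the entire content of the paper's proof lives. The difficulty is not only the perfect-square condition being too weak; it is that elements with residual denominators genuinely exist in the normalizer and must be excluded. For example, for $N=4$ the matrix $\gamma_1=\left(\begin{smallmatrix} 1 & 1/2 \\ 2 & 2\end{smallmatrix}\right)$ has determinant $1$, can be written as $\tfrac{1}{\sqrt{4}}M_1$ with $M_1=\left(\begin{smallmatrix} 2 & 1 \\ 4 & 4\end{smallmatrix}\right)$ primitive, yet is \emph{not} of the form in the theorem; ruling it out requires combining the conditions $a_1c_2,c_1c_2,c_1d_2\in N\ZZ$ and $Nb_1b_2,\dots\in\ZZ$ with $\det\gamma_2=1$ to derive a parity contradiction. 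In general this is a prime-by-prime argument interleaving the determinant identities $a_id_ie_i-b_ic_i(\dots)=1$ with the sixteen divisibility conditions (the two cases in the paper's Appendix A.1, according to whether a given prime divides the off-diagonal or the diagonal data), not a case split on vanishing patterns of entries --- vanishing patterns are largely irrelevant since $\det\gamma_i=1$ already guarantees each row and column has a nonzero entry. Until that arithmetic is actually carried out, the converse inclusion, and hence the theorem, is not established.
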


\begin{proof} See Appendix \ref{pf:easy}.
\end{proof}

For $N=1$, preserving the norms $k_L^2,k_R^2$ of all left- and right-moving momenta is both necessary and sufficient for the spectrum and the OPE of the theory to be preserved. However, this is not the case for $N>1$. The reason is that the orbifold construction forces the states with winding and momentum $w_1:=\frac n N, \ m_1$ along the spatial circle $S^1$ to be tensored with a state in $V^\natural_{n,m_1}$, the $g=e^{\frac{2\pi i m_1}{N}}$-eigenspace of the $g^{n}$-twisted sector of the internal CFT $V^{\natural}$.
Notice that the subgroup $\Gamma_0(N)\times \Gamma_0(N)$ of $\tilde{SO}^+(L)$ leaves $n$ and $m_1$ fixed modulo $N$. It follows that this subgroup is a genuine T-duality group, establishing an equivalence of the Monster CHL model at two different values of the moduli $T,U$.

The effect of the Atkin-Lehner involutions $(W_e,W_e)\in \tilde{SO}^+(L)$ has been analyzed in \cite{Persson:2015jka}. Let us first introduce some notation. For each holomorphic bosonic VOA $V$ of central charge $c=24$ and symmetry $g\in \Aut(V)$, let us denote by $(V,g)$ the CHL model based on the heterotic compactification on $\TT^2\times(V\times \bar V^{s\natural})$ followed by an orbifold by $(\delta,g)$, where $\delta$ is a shift of the same order as $g$. Then, the transformation $(W_e,W_e)$ establishes an equivalence between the CHL model $(V^\natural,g)$ with moduli $T,U$ and the CHL model $(V',g')$ with moduli $W_e\cdot T,W_e\cdot U$. Here, $V'=V^\natural/\langle g^{N/e}\rangle$ is the orbifold of $V^\natural$ by $g^{N/e}$ and $g'=Qg$ where $Q$ is the quantum symmetry acting by $e^{\frac{2\pi i r}{N/e}}$ on the $(g^{N/e})^r$-twisted sector and the action of $g$ on $V'$ is induced by its action on $V$. Schematically,
\be (V^\natural,g) \xleftrightarrow{(W_e,W_e)} (V'=V^\natural/\langle g^{N/e} \rangle,g'=gQ)\ .
\ee
This result follows immediately by observing that if the action of the Atkin-Lehner involution on the winding momenta is
\be (m_1,w_1, m_2, w_2) \xleftrightarrow{(W_e,W_e)} (m_1',w_1', m_2', w_2')\ ,
\ee then (essentially by definition of $V'$ and $g'$)
\be V'_{n',m_1'}=V^\natural_{n,m_1}\ ,
\ee
where $V'_{n',m_1'}$ is the $g'=e^{\frac{2\pi i m_1'}{N}}$ eigenspace of the ${g'}^{n'}$-twisted sector of the CFT $V'$ and $n'=Nw'$.

From this discussion, it is clear that, for a generic $(\gamma_1,\gamma_2)\in \tilde{SO}^+(L)$, there is a relation between the supersymmetric index relative to the CHL model $(V^\natural,g)$ and one for the CHL model $(V',g')$ by\footnote{Strictly speaking, our argument only implies an identity between the absolute values $|Z^{V^\natural}_{g,\id}|$ and $|Z^{V'}_{g',\id}|$. However, one can check that non-trivial phases only arise when $\lambda>1$.}
\be\label{Tduality} Z^{V^\natural}_{g,\id}(T,U)=Z^{V'}_{g',\id}(\gamma_1\cdot T,\gamma_2\cdot U)\ ,\qquad (\gamma_1,\gamma_2)\in \tilde{SO}^+(L)\ .
\ee It is useful to consider the subgroup $G_g\subseteq \tilde{SO}^+(L)$ of T-dualities such that the corresponding orbifold $V'$ is isomorphic to $V^\natural$ and $g'$ is in the same Monster conjugacy class as $g$. We call $G_g$ the group of self-dualities. For this subgroup, eq.\eqref{Tduality} implies
\be Z^{V^\natural}_{g,\id}(T,U)=Z^{V^\natural}_{g,\id}(\gamma_1\cdot T,\gamma_2\cdot U)\ ,\qquad (\gamma_1,\gamma_2)\in G_g\ .
\ee i.e. the index $Z^{V^\natural}_{g,\id}$ is invariant under the subgroup $G_g$. The group $G_g$ is generated by $\Gamma_0(N)\times\Gamma_0(N)$ as well as the Atkin-Lehner involutions $(W_e,W_e)$ such that $V^{\natural}/\langle g^{N/e}\rangle\cong V^\natural$ and $g'$ is conjugated with $g$.

\subsubsection{Case $\lambda>1$}
Let us consider the case where the level matching condition for the $g$-twisted sector in the Monstrous CFT is not satisfied, i.e. the conformal weights take values in
\be \frac{\E_g}{N\lambda}+\frac{1}{N}\ZZ\ ,
\ee where $\lambda|N$ and $(\E_g,\lambda)=1$. As shown in Appendix \ref{a:TwistAndTwin}, for the Monster CFT $V^\natural$, $\lambda$ is always a divisor of $24$. The CHL model for $\lambda>1$ is constructed by taking a shift of order $N\lambda$ along the space-like circle $S^1$ and then tensoring strings with winding $m$ and momentum $\frac{n}{N\lambda}$ along $S^1$ with states  in the spaces $V^\natural_{n,m}$, defined as in \eqref{Vnm}. One has $V^\natural_{n,m}=0$ unless $m-n\E_g\equiv 0\mod \lambda$. Therefore, the lattice $L$ of  winding-momenta along $\TT^2$ is spanned by
\be \begin{pmatrix}
m_1\\ w_1\\ m_2 \\ w_2
\end{pmatrix}=k\begin{pmatrix}
\lambda\\ 0\\ 0 \\ 0
\end{pmatrix}+n\begin{pmatrix}
\E_g\\ \frac{1}{N\lambda}\\ 0 \\ 0
\end{pmatrix}+\tilde k\begin{pmatrix}
0\\ 0\\ 1 \\ 0
\end{pmatrix}+\tilde n \begin{pmatrix}
0\\ 0\\ 0 \\ 1
\end{pmatrix}\qquad k,n,\tilde k,\tilde n\in \ZZ\ ,
\ee again with quadratic form \eqref{qform}.

It is useful to define the group $\Gamma_0(N|\lambda)\subset SL(2,\RR)$, whose elements are matrices
\be \begin{pmatrix}
a & b/\lambda \\ cN & d
\end{pmatrix}\ .
\ee The group $\Gamma_0(N|\lambda)$ is a subgroup of the normalizer of $\Gamma_0(N\lambda)$ in $SL(2,\RR)$. In particular, as discussed in \cite{Conway:1979kx}, $\Gamma_0(N|\lambda)$ is generated by $\Gamma_0(N\lambda)$, together with
\be \begin{pmatrix}
1 & 1/\lambda \\ 0 & 1
\end{pmatrix}\qquad \text{and}\qquad \begin{pmatrix}
1 & 0 \\ N & 1
\end{pmatrix}\ .
\ee
One can also define the Atkin-Lehner involutions for $\Gamma_0(N|\lambda)$
\be w_e:= \frac{1}{\sqrt{e}}\begin{pmatrix}
ae & b/\lambda\\ cN & de
\end{pmatrix}
\ee where $e||\frac{N}{\lambda}$, that are also in the normalizer of $\Gamma_0(N\lambda)$ in $SL(2,\RR)$. Two transformations
\be \frac{1}{\sqrt{e}}\begin{pmatrix}
ae & b/\lambda\\ cN & de
\end{pmatrix}\qquad \frac{1}{\sqrt{e'}}\begin{pmatrix}
a'e' & b'/\lambda\\ c'N & d'e'
\end{pmatrix}
\ee are in the same $\Gamma_0(N\lambda)$ left (or right) coset if and only if $e=e'$ and there is $\kappa\in \ZZ$, with $(\kappa,\lambda)=1$, such that
\be
 a'\equiv \kappa a,b'\equiv \kappa b,c'\equiv \kappa c,d'\equiv \kappa d\mod \lambda\ .\ee

\begin{theorem}\label{th:notsoeasy}
The group $\tilde{SO}^+(L)$ of automorphisms of the lattice $L$ consists of transformations
\be\label{automgroup2} \bigl(\frac{1}{\sqrt{e}}\begin{pmatrix}
ae & b/\lambda\\ cN & de
\end{pmatrix}, \frac{1}{\sqrt{e}}\begin{pmatrix}
a'e & c'\E_g/\lambda\\ b'\E_g N & d'e
\end{pmatrix}\bigr)\in SL(2,\RR)\times SL(2,\RR)\ee
where  $a,b,c,d,a',b',c',d'\in \ZZ,$  $e\in \ZZ_{>0}$ with $e||\frac{N}{\lambda}$ and
\be
 a'\equiv \kappa a,b'\equiv \kappa b,c'\equiv \kappa c,d'\equiv \kappa d\mod \lambda\ee  for some $\kappa$ coprime to $\lambda$.
The group is generated by adjoining to the normal subgroup $\Gamma_0(N\lambda)\times \Gamma_0(N\lambda)\subset \tilde{SO}^+(L)$ the transformations
\be\label{lambtransf} \left(\begin{pmatrix}
1 & 1/\lambda\\ 0 & 1
\end{pmatrix}, \begin{pmatrix}
1 & 0\\ \E_g N & 1
\end{pmatrix}\right)\ ,\qquad \left( \begin{pmatrix}
1 & 0\\  N & 1
\end{pmatrix},\begin{pmatrix}
1 & \E_g/\lambda\\ 0 & 1
\end{pmatrix}\right)\ , \ee as well as the Atkin-Lehner involutions $(w_e,w_e)$ for all $e||\frac{N}{\lambda}$.
\end{theorem}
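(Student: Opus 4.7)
The plan is to translate the condition that a pair $(\gamma_1,\gamma_2) \in SL(2,\RR)\times SL(2,\RR)$ preserves $L$ under the action $X\mapsto \gamma_1 X\gamma_2$ on the matrix realization $X = \begin{pmatrix} w_2 & w_1 \\ -m_1 & m_2 \end{pmatrix}$, into an explicit system of integrality and congruence conditions on the entries of $\gamma_1$ and $\gamma_2$. I would fix a $\ZZ$-basis for $L$ consisting of the four generators read off from the excerpt: a ``$w_2$-generator'' $\xi_1$, an ``$m_2$-generator'' $\xi_2$, the twisted generator $\xi_3$ with $w_1 = 1/(N\lambda)$ and $m_1 = \E_g$, and $\xi_4$ with $m_1 = \lambda$. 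Writing $\gamma_1,\gamma_2$ with entries $A,B,C,D$ and $A',B',C',D'$ respectively, I would compute each $\gamma_1 \xi_j \gamma_2$ and require the result to be an integral linear combination of the $\xi_i$'s.

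The constraints arising from $\xi_1$, $\xi_2$, $\xi_4$ alone reproduce (with an extra factor of $\lambda$ in the off-diagonal entries) the analysis behind Theorem \ref{th:easy}: they force both $\gamma_1$ and $\gamma_2$ to lie in the normalizer of $\Gamma_0(N\lambda)$ in $SL(2,\RR)$, of the shape $\frac{1}{\sqrt{e}}\bigl(\begin{smallmatrix} ae & b/\lambda \\ cN & de\end{smallmatrix}\bigr)$ with $e$ an exact divisor of $N/\lambda$ and the usual determinant condition $ade^2 - bcN/\lambda = e$. At this stage the two factors are still independent of one another.

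The crucial new step, and the main technical obstacle, is the analysis of $\xi_3$: since $\xi_3$ couples $1/(N\lambda)$ to $\E_g$, imposing $\gamma_1 \xi_3 \gamma_2 \in L$ links the residues modulo $\lambda$ of the entries of $\gamma_1$ and $\gamma_2$. A careful bookkeeping of the four matrix entries of $\gamma_1 \xi_3 \gamma_2$, each of which separately produces an integrality or mod-$\lambda$ constraint on specific combinations of the entries of both factors, together with the fact that $(\E_g,\lambda)=1$ makes multiplication by $\E_g$ invertible in $\ZZ/\lambda\ZZ$, should yield exactly the stated coupling: there exists $\kappa \in (\ZZ/\lambda\ZZ)^\times$ with $(a',b',c',d') \equiv \kappa(a,b,c,d)\bmod \lambda$. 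The swap between $b'$ and $c'$, as well as the $\E_g$-twists appearing in the displayed second factor of \eqref{automgroup2}, emerge naturally from the shape of the off-diagonal entries of $\gamma_1 \xi_3 \gamma_2$.

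Once \eqref{automgroup2} is established set-theoretically, the generation claim is a routine group-theoretic verification. The subgroup $\Gamma_0(N\lambda)\times\Gamma_0(N\lambda)$ corresponds to $e=1$ and $\kappa\equiv 1$; the two pairs in \eqref{lambtransf} lift the standard extra generators of $\Gamma_0(N|\lambda)$ above $\Gamma_0(N\lambda)$ (as recorded in \cite{Conway:1979kx}) into the $\kappa=1$ diagonal of $\tilde{SO}^+(L)$; and the Atkin-Lehner pairs $(w_e,w_e)$ for $e\|N/\lambda$ sweep out the nontrivial $e$-strata. Arbitrary values of $\kappa \in (\ZZ/\lambda\ZZ)^\times$ are then obtained by conjugating these generators by suitable elements of the first factor $\Gamma_0(N|\lambda)$, using that the images of $\Gamma_0(N|\lambda)/\Gamma_0(N\lambda)$ already cover every residue class modulo $\lambda$. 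This yields the asserted generating set for $\tilde{SO}^+(L)$.
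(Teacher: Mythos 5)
Your overall strategy—imposing $\gamma_1 X\gamma_2\in L$ on a $\ZZ$-basis of $L$ and extracting integrality and mod-$\lambda$ congruences—is the same direct computation the paper carries out, but the sketch has two genuine gaps where the real work lives. First, the claim that the constraints from $\xi_1,\xi_2,\xi_4$ alone already force each factor separately into the shape $\frac{1}{\sqrt{e}}\bigl(\begin{smallmatrix} ae & b/\lambda\\ cN & de\end{smallmatrix}\bigr)$ with $e\|\frac{N}{\lambda}$ is unjustified: those three vectors span a degenerate sublattice (no $w_1$ component), and preserving $L$ does not reduce to a sub-instance of Theorem \ref{th:easy} on them. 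The paper instead first shows $\tilde{SO}^+(L)$ normalizes $\Gamma_0(N\lambda)\times\Gamma_0(N\lambda)$, so each factor a priori lies in $\hat\Gamma_0(N\lambda)$ with off-diagonal entries $b/h$ where $h$ is the largest integer with $h|24$, $h^2|N\lambda$ (so possibly $h>\lambda$), and only the \emph{full} system of congruences—including the ones coupling the two factors—together with a prime-by-prime valuation argument cuts this down to $e=e'$, $e\|\frac{N}{\lambda}$ and denominator $\lambda$. You cannot decouple the "shape" step from the "coupling" step.

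Second, and more seriously, the passage from the congruences to a \emph{single} $\kappa$ with $(a',b',c',d')\equiv\kappa(a,b,c,d)\bmod\lambda$ is not mere bookkeeping. The constraints naturally produce \emph{two} scalars: one, $\kappa_1$, relating $(a',c')$ to $(a,b)$, and another, $\kappa_2$, relating $(d',b')$ to $(d,c)$. The two determinant conditions then give $\kappa_1\kappa_2\equiv 1\bmod\lambda$, and the collapse to $\kappa_1\equiv\kappa_2$ (hence a single $\kappa$ with $\kappa^2\equiv 1$) holds precisely because $\lambda$ divides $24$, so that every unit of $\ZZ/\lambda\ZZ$ is its own inverse (eq.\ \eqref{cool24} in the paper). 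For a general modulus $\lambda$ the single-$\kappa$ statement would be false, so a proof that never invokes $\lambda\,|\,24$ cannot be complete. You should make both the $\kappa_1,\kappa_2$ bookkeeping and this arithmetic input explicit; the generation claim at the end can then be delegated to \cite{Conway:1979kx}, as the paper does.
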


\begin{proof} See Appendix \ref{pf:notsoeasy}
\end{proof}

As in the $\lambda=1$ case, the generic T-duality in $\tilde{SO}^+(L)$ is an equivalence between the CHL model $(V^\natural,g)$ and a (possibly) different model $(V',g')$. The index $Z^{V^\natural}_{g,\id}$ is expected to be invariant only under the subgroup $G_g\subset \tilde{SO}^+(L)$ of self-dualities, i.e. where $V'\cong V^\natural$ and $g$ is conjugated with $g'$ within $\Aut(V^\natural)=\MM$. The subgroup $\Gamma_0(N\lambda)\times\Gamma_0(N\lambda)$ fixes the winding and momenta modulo $N\lambda$, so that it must be a (normal) subgroup of $G_g$. 

The effect of the symmetries \eqref{lambtransf} is rather subtle. A direct calculation shows that the action of such dualities on momenta and winding along the spatial circle has the form
\begin{align}
w_1\mapsto& w_1'\equiv w_1+\frac{1}{\lambda}X(m_1,w_1, m_2, w_2)\mod \ZZ\ ,\\
m_1\mapsto& m_1'\equiv m_1 - \E_gNX(m_1,w_1,m_2, w_2)\mod N\lambda\ZZ\ ,
\end{align} where $X:L\to \ZZ$ is an integral-valued linear functional that depends on the particular duality. 
As discussed in Appendix \ref{a:TwistAndTwin}, for all $X\in\ZZ$ there is an isomorphism
\be\label{shift} V^\natural_{n,m}\cong V^\natural_{n+NX,m-\E_g NX}\ ,
\ee  where $V^\natural_{n,m}$ is the $g=e^{\frac{2\pi i m}{N\lambda}}$-eigenspace in the $g^n$-twisted sector, $n,m\in \ZZ/N\lambda\ZZ$. In the CHL orbifold, states with winding $w=\frac{n}{N\lambda}$ and momentum $m$ are tensored with states in $V^\natural_{n,m}$. Thus, the dualities \eqref{lambtransf}, together with \eqref{shift}, are symmetries of the Monster CHL model to itself and the index $|Z_{g,\id}(T,U)|^2$ must be invariant under the corresponding action.

Finally, the Atkin-Lehner involutions $(w_e,w_e)$, where $e||\frac{N}{\lambda}$ establish an equivalence with the model $(V',g')$, where $V'=V^{\natural}/\langle g^{N/e}\rangle$ (notice that $N/e$ is a multiple of $\lambda$, so that the level matching condition is satisfied and the orbifold is consistent) and $g'=g^xQ^y$ is a certain combination of order $N$ of the quantum symmetry $Q$ and $g$ (see \cite{Persson:2015jka} for details). 

To conclude, the group of T-dualities $\tilde{SO}^+(L)$ contains the normal subgroup $G_g$ of self-dualities, generated by $\Gamma_0(N\lambda)\times \Gamma_0(N\lambda)$, the transformations \eqref{lambtransf} and those Atkin-Lehner involutions $(w_e,w_e)$ such that $(V',g')\sim (V^\natural,g)$ up to the non-trivial automorphism of the Monster CHL model. The index $Z^{V^\natural}_{g,\id}(T,U)$ is invariant up to a phase with respect to this group
\be |Z^{V^\natural}_{g,\id}(T,U)|^2=|Z^{V^\natural}_{g,\id}(\gamma_1\cdot T,\gamma_2\cdot U)|^2\qquad (\gamma_1,\gamma_2)\in G_g\ .\ee

\section{The space-time index III: algebras and denominators}\label{s:Algebras}

\noindent In section \ref{s:indexI} we noticed that the (24th root of the) index $Z(T,U)$ \footnote{Occasionally, we will loosely refer to $Z(T, U)$ as the algebra denominator; the reader should be aware that the precise denominator is really its 24th root $Z(T, U)^{1/24}$.} is exactly the denominator of Borcherds' Monstrous Lie algebra. In this section, we will show that this is no coincidence: the Monstrous Lie algebra appears as a spontaneously broken gauge symmetry in the string theory we are considering and the BPS states form a representation for this algebra. An analogous relation exists between each Monstrous CHL model and an infinite dimensional Borcherds-Kac-Moody (BKM) algebra. Using this fact, we will prove that the only states contributing to the index $Z$ are the ones annihilating the winding-momentum $MW$, which is the quadratic Casimir of the algebra. 

\subsection{Monstrous Lie algebras}

The single string BPS states are related to physical string states in the Ramond sector with $k_R^2=0$ and $h_R=\frac{1}{2}$ whose  vertex operator (unintegrated, in the $(-1/2)$--picture) can be written as
\be (c\tilde c \V_\chi e^{-\tilde\phi/2} \tilde S_\alpha^i e^{ikX})(z,\bar z)\ ,
\ee where $c,\tilde c$ are the ghosts, $e^{-\tilde \phi/2}$ is the superghost, $\V_\chi$ is a holomorphic vertex operator corresponding to the state $\chi\in V^\natural$ of conformal weight $h_\chi$, $\tilde S_\alpha^i$, $i=1,\ldots,24$ is a tensor product of one of the $24$ Ramond ground states of the internal SCFT $\bar V^{s\natural}$ (labeled by $i$) and a Ramond ground state of positive chirality in the space-time directions. Furthermore, the space-time momentum $k$ must satisfy the level-matching conditions \eqref{physical}. We denote by
\be |\chi,i,\alpha,k\rangle \ ,
\ee the corresponding state.

These states have no supersymmetric partners: the reason is that the corresponding states in the NS sector are either zero or BRST exact. Explicitly, space-time supersymmetry acts on the BPS states by (we omit the holomorphic part $c \V_\chi  (z)$ of the vertex operator since it plays no role in this computation)
\begin{align} &\oint dz (e^{-\tilde\phi/2} \tilde S^i_\alpha)(\bar z)\, (\tilde c  e^{-\tilde\phi/2} \tilde S^j_\beta e^{ikX})(0)=\delta^{ij}(\Gamma^0\Gamma^\mu)_{\alpha\beta}(\tilde c   \tilde\psi_\mu e^{ikX})(0)\\
&=\delta^{ij}(\tilde c  (\tilde\psi_0-\tilde\psi_1)  e^{ikX})(0)\propto k_{R,\mu}(\tilde c   \tilde\psi^\mu e^{ikX}) (0)\ ,
\end{align} where $\tilde \psi^\mu$ is the weight $1/2$ field related to the current $\bar\partial X^\mu$ by world-sheet supersymmetry. We used the fact that the supersymmetry charge and the BPS state are fermions of the same chirality and that $k_R^0=k_R^1$ by the BPS condition.
Restoring the dependence on the holomorphic part, we conclude that the supersymmetric partners of the BPS states
can be written as
\be\label{null} k_{R,\mu}\tilde \psi^\mu_{-1/2} |\chi,k\rangle\ .
\ee
 and correspond to the (unintegrated, in the $(-1)$--picture) vertex operator
\be\label{nullvert}  (c\tilde c \V_\chi e^{-\tilde\phi} k_{R,\mu}\tilde \psi^\mu e^{ikX}) (z,\bar z)\ .
\ee
  The space-time momentum $k$ satisfies the same conditions \eqref{physical} as above. This state 
  is BRST exact 
\be k_{R,\mu}\tilde \psi^\mu_{-1/2} |\chi,k\rangle=\tilde G_{-1/2}|\chi,k\rangle \ee
where $\tilde G_{-1/2}$ is the world-sheet supersymmetry operator. 
More precisely, the vertex operator \eqref{nullvert} is the BRST variation
\be  c\tilde c \V_\chi e^{-\tilde\phi} k_{R,\mu}\tilde \psi^\mu e^{ikX}=\{\Q_R^{BRST},  c \V_\chi e^{-\tilde\phi} e^{ikX}\}\ ,\qquad \{\Q_L^{BRST},  (c \V_\chi e^{-\tilde\phi} e^{ikX})\}=0 \ ,
\ee where $\Q_L^{BRST}$ and $\Q_R^{BRST}$ are the left- and right-moving components of the BRST operator.  In general, massless ($k^2=0$) BRST exact (null) states in string theory correspond to gauge invariances. The zero momentum limit of such states give rise to global gauge invariance and charge conservation laws. In the present context of compactification to $0+1$ dimensions, the very definition of the mass as $k^2$ is somehow problematic. Notice, however, that the BRST exact states we are considering always carry non-zero momentum and,   in this sense, they are similar to \emph{massive} null states. In analogy with massive null states, the BRST exact states \eqref{nullvert} are expected to generate a spontaneously broken gauge symmetry, which is restored in the limit of tensionless string $\alpha'\to \infty$, where all these states become massless. 

Even if spontaneously broken, we expect the gauge symmetry generators to form an algebra. To see this explicitly,  it is useful to write the vertex operator \eqref{nullvert} in its integrated form and $0$-picture
\be 
\W_\chi:=\int d^2 z\; (\V_\chi k_{R,\mu}(\bar \partial X^\mu)  e^{ikX}) (z,\bar z)=\int d^2 z\;\bar \partial (\V_\chi  e^{ikX}) (z,\bar z)\ .
\ee (For a vector with generic polarization $\epsilon_\mu$, the $0$-picture vertex operator  also contains a term proportional to $\epsilon_{\mu}k_{R,\nu}\tilde \psi^\nu\tilde \psi^\mu$; this term vanishes when $\epsilon_\mu=k_{R,\mu}$ for symmetry reasons). 

By formally inserting this null vertex operator inside some string amplitude, we obtain 
\be 0=\langle \W_{\chi} \V_1\V_2\ldots \rangle = \lim_{\epsilon\searrow 0}\sum_i \langle \V_1\V_2 \ldots \oint_{\gamma_{i,\epsilon}} dz (\V_\chi  e^{ikX})(z,\bar z) \V_i\ldots \rangle\ ,
\ee where $\gamma_{i,\epsilon}$ is a small circle of radius $\epsilon$ around the insertion point of the vertex operator $\V_i$.
Let us consider the case where $\V_i$ is the vertex operator of a BPS state $|\hat\chi,\alpha,\hat k\rangle$ inserted at $z=0$, for some momentum $\hat k$ and state $\hat\chi\in V^\natural$. Then,
\begin{multline}\label{Monstaction} \lim_{\epsilon\searrow 0}\oint_{\gamma_{0,\epsilon}} dz (\V_\chi  e^{ikX})(z,\bar z) |\hat\chi,\alpha,\hat k\rangle\\=\lim_{\epsilon\searrow 0}\oint_{\gamma_{0,\epsilon}} dz (\V_\chi  e^{ik_LX_L})(z) \, \bar z^{k_R\cdot \hat k_R} \exp\Bigl(\sum_{n>0}k_{R,\mu}\frac{\bar\alpha^\mu_{-n}}{n} \bar z^{n}\Bigr)|\hat\chi,\alpha,\hat k\rangle\ ,
\end{multline} where we have written the dependence on $\bar z$ explicitly. Now, since both $k_R^0=k_R^1$ and $\hat k_R^0=\hat k_R^1$, we conclude that $k_R$ and $\hat k_R$ are proportional to each other. Therefore, 
\be k_R\cdot\hat k_R\propto  k_R^2=0\ .
\ee Let us consider the mode expansion
\be (\V_\chi  e^{ik_LX_L})(z)=\sum_{n\in \ZZ} (\V_\chi  e^{ik_LX_L})_{-n} z^{n-1}\qquad \exp\Bigl(\sum_{n>0}k_{R,\mu}\frac{\bar\alpha^\mu_{-n}}{n} \bar z^{n}\Bigr)=\sum_{m\ge 0} (e^{ik_RX_r})_{-m} \bar z^{m}\ ,
\ee where we notice that the (anti-)holomorphic fields $(\V_\chi  e^{ik_LX_L})(z)$ and $\exp\Bigl(\sum_{n>0}k_{R,\mu}\frac{\bar\alpha^\mu_{-n}}{n} \bar z^{n}\Bigr)$ have total conformal weights\footnote{In the remainder of the paper, for ease of notation, we will simply denote $m_1, w_1$ by $m, w$ unless otherwise noted.} $-mw+h_\chi=1$ and $k_R^2=0$, respectively. By replacing this expansion in \eqref{Monstaction}, we obtain
 \begin{align}
 &\lim_{\epsilon\searrow 0}\int_0^{2\pi} dt \sum_{\substack{m\ge 0\\n\in\ZZ}} \epsilon^{m+n} e^{it(n-m)} (\V_\chi  e^{ik_LX_L})_{-n}(e^{ik_RX_r})_{-m}|\hat\chi,\alpha,\hat k\rangle\\
 &=\lim_{\epsilon\searrow 0}\sum_{m\ge 0} \epsilon^{2m} (\V_\chi  e^{ik_LX_L})_{-m}(e^{ik_RX_r})_{-m}|\hat\chi,\alpha,\hat k\rangle
 \\
 &=(\V_\chi  e^{ik_LX_L})_0|\hat\chi,\alpha,\hat k\rangle\ .
\end{align}  
Thus, the action is given by the zero-modes of the holomorphic currents $(\V_\chi  e^{ik_LX_L})(z)$.  These zero-modes form an algebra, with commutation relations given by the usual contour argument, and the space of single particle BPS states is a module over this algebra. It is easy to see that this is exactly the Monster Lie algebra constructed by Borcherds \cite{Borcherds}. Indeed, the starting point of Borcherds' definition is the vertex operator algebra given by the product of $V^{\natural}$ and the lattice VOA based on the unimodular lattice $\Gamma^{1,1}$ of signature $(1,1)$. Borcherds then essentially takes the cohomology with respect to a suitable `BRST' operator, whose class representatives have total conformal weight $1$. Borcherds' BRST operator corresponds to the \emph{left-moving} component $\Q_L^{BRST}$ of the full BRST operator in the string model. In more detail, the level-matching and the right-moving mass-shell conditions force the left-moving momentum $k_L$ to take values in a lattice isomorphic to $\Gamma^{1,1}$ and the left-moving mass-shell condition is exactly Borcherds' physical state condition.

To summarize, the algebra of gauge symmetries corresponding to the null states \eqref{null} is Borcherds' Monster Lie algebra and the space of BPS (first quantized) string states is a module over this algebra. More precisely, for each generator of the Monster Lie algebra, there are $24$ physical BPS single string states, that form $24$ copies of the adjoint representation. In section \ref{s:indexI}, these BPS string states were the starting point for the construction of the second quantized BPS Fock space $\Hh_{BPS}$ and of the index $Z(T,U)$.  As we will explain in the next subsections, $Z(T,U)^{1/24}$ is a denominator for the Monster Lie algebra and one can apply the Generalized Kac-Moody generalization of the standard Weyl-Kac denominator formula to evaluate it.

This construction admits a direct generalization to the other Monster CHL models. In all cases, the superpartners of the BPS states are BRST variations of some fields of conformal weights $(1,0)$ and one obtains an infinite dimensional Lie algebra generated by the zero modes of the corresponding holomorphic currents. These are exactly the algebras considered in section 4 of \cite{Carnahan2014} as part of the proof of the Generalized Monstrous Moonshine conjecture \cite{Norton1987}. In all cases, the first quantized BPS string states form $24$ copies of the adjoint representation of the corresponding CHL algebra.

\subsection{The index as an algebra denominator}\label{ss:algebra}

In the last section, we showed that each CHL model contains an infinite dimensional Lie algebra $\g$ and that each generator of this algebra is associated with $24$ fermionic BPS states. These fermionic BPS states are the ones involved in the construction of the BPS Fock space in section \ref{s:indexI} and in the definition of the index $Z_{g, \id}(T,U)$. In this section, we will show that each index $Z_{g, \id}(T,U)$ is essentially (the 24th root of) the Weyl denominator of the corresponding Lie algebra $\g$; this interpretation will allow us to derive explicit formulae for the index in terms of the McKay-Thompson series. This section may be viewed as a physical reinterpretation of the Lie algebra homology computations of \cite{GarlandLepowsky,Jurisich1996,Jurisich1998,Jurisich2004}.

\bigskip

Let us first fix some notation. Let $\g$ be a (semi-simple/(generalized) Kac-Moody) real Lie algebra with a decomposition
\be \g=\g^-\oplus \h\oplus \g^+\ ,
\ee with respect to a Cartan subalgebra $\h$ and subalgebras $\g^\pm$ corresponding to positive/negative roots. The algebra has a grading $\g=\bigoplus_{\gamma\in \Gamma} \g^\gamma$ with weight space $\Gamma\subset \h^*$ (or possibly in some extended $(\h^e)^*$) and such that each $\g^\gamma$ is finite dimensional. We denote by
$f^a_{\phantom{a}bc}$ the (real) structure constants of $\g$
\be [b,c]=\sum_{a\in \g}f^a_{\phantom{a}bc}a\qquad b,c\in \g\ .
\ee

\medskip

In the case where $\g$ is the algebra of a Monster CHL model, the Cartan subalgebra is generated by the zero-modes of the left-moving currents $\partial X^\mu(z)$, $\mu=0,1$, so that the weight space $\h^*$ is two dimensional. The grading is given by the set of winding-momenta $(n,m)$ or, equivalently, by the left-moving momenta $k^\mu_L$, $\mu=0,1$, which are linear combinations of $n$ and $m$, depending on the radius $R$. Remember, we have defined $n \in \ZZ$ implicitly via $w:= {n \over N}$ (taking $\lambda=1$ for now). The zero component $k^0_L$ (the energy) will be often denoted by $E$. The positive energy condition $E>0$ then selects the subalgebra $\g^-$ of negative  roots. The graded components $\g^\gamma$, with $\gamma=(n,m)$ have dimension $\hat c_{n,m}(\frac{mn}{N})$, where $\hat c_{n,m}(k)$ is the coefficient of $q^k$ in the Fourier expansion of $F_{n,m}(\tau)$ \eqref{defF} or, equivalently, the dimension of the space $V^\natural_{n,m}$ at level $L_0-1=k$.

\medskip

The algebras we are considering can be endowed with a non-degenerate symmetric $\g$-invariant bilinear form $\kappa$ such that, for any $a\in \g^\gamma$, $\kappa(a,b)=0$ unless $b\in \g^{-\gamma}$. Physically, $\kappa(a,b)$ is the coefficient of the two point function of the holomorphic vertex operators associated with $a,b\in \g$. This is non-zero only when the two vertex operators have opposite (left-moving) momentum. On the Cartan subalgebra, $\kappa$ is the usual space-time metric with Lorentzian signature. In particular, for any $k^\mu_L,\hat k^\nu_L\in \h^*$, one has $\kappa(k^\mu_L,\hat k^\nu_L)=\eta_{\mu\nu}k^\mu_L\hat k^{\nu}_L$. Finally, we require a (Cartan) involution $\theta$ acting by multiplication by $-1$ on $\h$ and mapping $\theta \g^{\pm}\to \g^{\mp}$ and such that $-\kappa(a,\theta(a))>0$ for all nonzero $a\in \g^\gamma$, $\gamma\neq (0,0)$.  This involution simply flips the sign of the momentum in the vertex operator $(V_\chi e^{ik_LX_L})\to (V_\chi e^{-ik_LX_L})$.\footnote{We are considering real vertex operators $V_\chi$ of the Monster VOA. In the complexified case, the involution $\theta$ is anti-linear and $-\kappa(a,\theta(b))$ is hermitian.}

\medskip

For each element $a\in \g$ of the algebra, there are $24$ Ramond string states satisfying the BPS condition $(k_R)^2=0$. For simplicity, we will build a BPS space $\Hh$ just from one of these $24$ sets of BPS states. The full BPS space considered in section \ref{s:indexI}, therefore, will be the tensor product of $24$ identical copies of the space $\Hh$ and the corresponding index $Z(T,U)$ the $24$th power of the index of a single copy. 

\medskip

Thus, with each $a\in \g$, we associate a fermionic mode $\eta_a$, corresponding to a physical Ramond string state in the $(-1/2)$-picture. The two point function of any pair of such $(-1/2)$-picture states vanishes, suggesting the anticommutation relations
\be \{\eta_a,\eta_b\}=0\qquad a,b\in \g.
\ee 
Let us consider a ground state $|0\rangle$ such that
\be \eta_a|0\rangle=0\qquad a\in \g^+\oplus \h\ ,
\ee and let $\Hh$ be the Fock space constructed by acting on $|0\rangle$ by $\eta_a$, $a\in \g^-$. In fact, the operators $\eta_a$, $a\in \g^+\oplus \h$ annihilate every state in $\Hh$, so we can simply set them to zero and consider only $\eta_a$, $a\in \g^-$.
 The Fock space $\Hh$ is isomorphic to $\bigwedge \g^-$ and inherits the grading from $\g$, so that $\Hh=\bigoplus_{\gamma\in\Gamma} \Hh_{\gamma}$.\footnote{If $\g$ is an ordinary Kac-Moody algebra, $\bigwedge \g^-$ is an irreducible highest weight $\g$-module, with highest weight the Weyl vector. This can be easily proved using the Weyl-Kac denominator formula. To the best of our knowledge, the analogous statement for Generalized Kac Moody algebras has not  been established. In particular, when $\g$ is the Monster Lie algebra, $\bigwedge \g^-$ might be a reducible $\g$-module. We thank Richard Borcherds and Scott Carnahan for correspondence about this point.} 
 The denominator of the algebra is simply an index in this space
\be\label{WeylDen} Z(z):=\sum_{\gamma\in \Gamma}\\Tr_{\Hh_\gamma}(e^{\kappa(z,\gamma+\rho)}(-1)^F)\ ,\qquad z\in \h^*\ ,
\ee where $\rho\in \h^*$ is the Weyl vector with defining property
\be\label{Weyldef} \kappa(\rho,\gamma)=-\frac{1}{2}\kappa(\gamma,\gamma)\ ,
\ee for all \emph{simple} roots $\gamma$. As we will see, the Weyl vector exists in all the algebras relevant for our construction.

The space of Ramond string states in the $(-1/2)$-picture is dual to the space of Ramond states in the $(-3/2)$-picture. Associated with the latter states, we introduce fermionic operators
$\frac{\partial}{\partial \eta_a}$, $a\in \g^-$ satisfying
\be \frac{\partial}{\partial \eta_a}|0\rangle=0\qquad \left\lbrace\frac{\partial}{\partial \eta_a},\frac{\partial}{\partial \eta_b}\right\rbrace=0\qquad  \left\lbrace\frac{\partial}{\partial \eta_a},\eta_b\right\rbrace=\delta^a_b\quad 
a,b\in \g^-\ .\ee

Finally, we can endow $\Hh$ with the structure of a Hilbert space by defining the adjoint operator as
\be \eta_a^\dag:= -\kappa_{a \theta(b)}\frac{\partial}{\partial \eta_b}\ .
\ee
where we use the shorthand $\kappa_{a b}:= \kappa(a, b)$.
\medskip

The definition of the energy depends on the compactification radius $R$, it can happen that, as we vary $R$, the energy of some of the algebra generators change sign. From the point of view of the algebra, this corresponds to a Weyl transformation, leading to a different choice of the subalgebra $\mathfrak{g}^-$. Apparently, the Hilbert space $\Hh$ depends on this choice. However, notice that two Hilbert spaces $\Hh$ and $\Hh'$ related by a Weyl transformation are actually isomorphic. If $a_1,\ldots, a_r$ are the algebra elements changing sign under a Weyl transformation, then this isomorphism maps the new ground state $|0'\rangle\in\Hh'$ to $\eta_{a_1}\ldots\eta_{a_r}|0\rangle\in \Hh$ and exchanges \be\label{WeylExc} \eta_{a_i}\leftrightarrow \kappa_{a_ib} \frac{\partial}{\partial \eta'_{b}}\qquad \frac{\partial}{\partial \eta_{a_i}}\leftrightarrow  \kappa^{a_ib}\eta'_{b}\ , \qquad i=1,\ldots,r\ .\ee

Let us define the left-moving momenta operators $P_\mu$, $\mu=0,1$, as
\be\label{WeylInv} P_{\mu}:=\Bigl(\frac{1}{2}\sum_{a,b\in \g^-} f^a_{\phantom{a}\mu b}(\eta_a\frac{\partial}{\partial \eta_b}-\frac{\partial}{\partial \eta_b}\eta_a) \Bigr)_{reg}=\sum_{a,b\in \g^-} f^a_{\phantom{a}\mu b}\eta_a\frac{\partial}{\partial \eta_b}-\frac{1}{2}\Bigl(\sum_{a\in \g^-} f^a_{\phantom{a}\mu a}  \Bigr)_{reg}\ ,
\ee and the winding and momentum number operators 
\be M=\frac{R}{\sqrt{2}} (P^0+P^1)\qquad W=\frac{1}{\sqrt{2}R} (P^0-P^1)\ .
\ee Here, $f^a_{\phantom{a}\mu b}$ are the structure constants for the Cartan generator $\partial X_\mu$ and the subscript $reg$ denotes a suitable regularization procedure. It is easy to check that the definition \eqref{WeylInv} is independent of the choice of the subalgebra $\g^-$, provided one identifies the corresponding operators as in \eqref{WeylExc}. In particular, $M$ and $W$ are the appropriate winding and momenta in order to define an index $Z(T,U)$ that is continuous in the moduli, as expected from the 1-loop integral definition of the previous section.

The vector 
\be\label{noconstant} \rho_\mu:=-\frac{1}{2}\Bigl(\sum_{a\in \g^-} f^a_{\phantom{a}\mu a}  \Bigr)_{reg}\ ,
\ee is a normal ordering constant that determines the winding and momentum $(n_0,m_0)$ of the vacuum state $|0\rangle$. If $\g$ was a finite dimensional Lie algebra, $\rho$ would correspond to the Weyl vector. We will argue that, in order to match with the 1-loop expression, this must be true also in the case of the CHL Lie algebras. Notice that each graded component $\Hh_\gamma$, $\gamma\in \Gamma$ is an eigenspace for $P^\mu$ with a \emph{shifted} eigenvalue
\be\label{PmuonH} P^\mu|\phi\rangle=(\gamma^\mu+\rho^\mu)|\phi\rangle\ ,\qquad \phi\in \Hh_\gamma\ .
\ee It is now clear that the index $Z(T,U)^{1/24}$ exactly corresponds to the Weyl-Kac-Borcherds denominator \eqref{WeylDen}, with $\gamma+\rho=(n+ n_0,m + m_0)$ and $z=(T,U)$.

\bigskip

 The final ingredients needed in order to explicitly compute the index $Z(T,U)$ are the `boundary operator' 
\be d:=\frac{1}{2}\sum_{a,b,c\in \g^-} f^a_{\phantom{a}bc}\eta_a\frac{\partial}{\partial \eta_b}\frac{\partial}{\partial \eta_c}\ ,
\ee and its adjoint
\be d^\dag:=\frac{1}{2}\sum_{a,b,c\in \g^+} f^a_{\phantom{a}bc} \eta^b\eta^c \frac{\partial}{\partial \eta^a}\ .
\ee Here,  the indices are lowered and raised using $\kappa_{ab}$ and its inverse $\kappa^{ab}$
\be \eta^a:=\kappa^{ab}\eta_b\ ,\qquad \qquad \frac{\partial}{\partial \eta^a}:=\kappa_{ab}\frac{\partial}{\partial \eta_b}\ ,
\ee and we used the algebra automorphism
\be f^a_{\phantom{a}bc}=f^{\theta(a)}_{\phantom{\theta(a)}\theta(b)\theta(c)}\ .
\ee If we identify $\Hh$ with the $\bigwedge \g^-:=\oplus_i\bigwedge^i\g^- $, then $d\equiv \oplus_i d_i$ represents the standard boundary operator $d_i:\bigwedge^i\g^-\to\bigwedge^{i-1}\g^-$ acting by
\be d_i(a_1\wedge\ldots\wedge a_i)=\sum_{1\le r<s\le i} (-1)^{r+s} [a_r,a_s]\wedge a_1\wedge\ldots \wedge\check{a}_r\wedge \ldots\wedge  \check{a}_s\wedge \ldots\wedge a_i\ .
\ee The  operators $d$ and $d^\dag$ are nilpotent 
\be d^2=0=(d^\dag)^2\ ,
\ee
as follows by direct calculation and by the Jacobi identity
\be \sum_{c\in \g^{\pm}} (f^a_{\phantom{a}bc}f^c_{\phantom{c}de}+f^a_{\phantom{a}ec}f^c_{\phantom{c}bd}
+f^a_{\phantom{a}dc}f^c_{\phantom{c}eb})=0\ ,\qquad a,b,d,e\in \g^{\pm}
\ee
for the algebras $\g^-$ and $\g^+$. Furthermore, they preserve the grading of $\Hh$
\be d(\Hh_\gamma)\subseteq \Hh_\gamma\qquad d^\dag (\Hh_\gamma)\subseteq \Hh_\gamma\ ,
\ee since $f^a_{\phantom{a}bc}\neq 0$ implies $\gamma_a=\gamma_b+\gamma_c$ and each $\eta_a$ (respectively, $\frac{\partial}{\partial \eta_a}$) raise (resp., lower) the weight by $\gamma_a$.
The homology groups $H_i(\g)=\ker d_i/{\rm Im}d_{i+1}$ on the nilpotent operator  $d$ represent the standard homology of the Lie algebra $\g$. 

We can decompose $\Hh$ into irreducible representations of the algebra generated by $d,d^\dag$. It is clear that only the $1$-dimensional representations contribute to the index $Z$, since the higher dimensional representations contain the same number of fermions and bosons, all with the same weight. States belonging to a $1$-dimensional representation are annihilated by both $d$ and $d^\dag$, so that they are contained in the kernel of $\{d,d^\dag\}$ (in fact, $\ker\{d,d^\dag\}=\ker d\cap \ker d^\dag$, using the standard argument that
\be \langle\psi| \{d,d^\dag\}|\psi\rangle=\|d|\psi\rangle\|^2+\|d^\dag|\psi\rangle\|^2\ge 0\ ,
\ee and the right-hand side vanishes only when $d|\psi\rangle=0=d^\dag|\psi\rangle$.
). 

It is a standard theorem (see for example \cite{GarlandLepowsky}) that the anticommutator $\{d,d^\dag\}$ is given by the quadratic Casimir of the algebra. When the algebra admits a Weyl vector $\rho\in \h^*$ satisfying \eqref{Weyldef}, the Casimir has the form
\be\label{Casimiro} \{d,d^\dag\}_{\rvert \Hh_{\gamma}}=-\frac{1}{2}\bigl(\kappa(\gamma+\rho,\gamma + \rho)-\kappa(\rho,\rho)\bigr)\ .\ee As we will see, all the algebras we are interested in have a Weyl vector $\rho\in \h^*$ which coincides with the normal ordering constant \eqref{noconstant} and is null $\kappa(\rho,\rho)=0$. Therefore, using \eqref{PmuonH}, we conclude that the only states contributing to the index $Z_{g,\id}(T,U)$ will be the ones annihilated by $P_\mu P^\mu=2MW$.

\subsection{Explicit formulas for the indices}\label{s:explformulas}

In this section, we will use the information of section \ref{ss:algebra}  to obtain a third alternative formula for the indices $Z_{g, \id}(T,U)$.

 \subsubsection{The index for the Monster Lie algebra}
 
For the Monster Lie algebra, the infinite product formula \eqref{infiniteprod} obeys the famous identity (independently proved in the 1980's by Koike, Norton, and Zagier)
\be Z(T,U)^{1/24}= J(T)-J(U)\ ,
\ee where the vacuum contribution is given by $e^{-2\pi iT}$, as required by the 1-loop formula, corresponding to a normal ordering constant $(n_0,m_0)=(-1,0)$. Borcherds \cite{Borcherds} used this identity to deduce that the simple roots of the algebra are the ones of the form $(1,m)$, $m\in \ZZ$, with multiplicity $c(m)$. For any algebra, the one particle states corresponding to simple roots are always in the kernel of $\{d,d^\dag\}$. In this case, their contribution the index is 
\be -\sum_{m\in \ZZ} c(m) e^{2\pi i Um}= -J(U) \ .\ee It is clear then that the vector $\rho=(-1,0)$ is indeed the Weyl vector, since it satisfies the defining property \eqref{Weyldef} for all simple roots. As anticipated, the Weyl vector coincides with the vacuum winding-momentum $(n_0,m_0)$ and is null. Therefore, all non-vanishing contributions to $Z(T,U)$ come from states satisfying
 \be 0=\kappa(\gamma,\gamma+2\rho)=2 M W = 2 m_{tot}(n_{tot}-1)\ ,
 \ee where we have defined $m_{tot}, n_{tot}$ as the sum of momenta and windings, respectively, of multiparticle states. In particular, the term $J(T)-e^{-2\pi i T}$ comes entirely from two particle states.
 
Using  this description of the algebra, it is easy to compute a similar formula for the twined indices \cite{Borcherds}
\be\label{Zidg} Z_{\id,g}(T,U)^{1/24}=T_{\id,g}(T)-T_{\id,g}(U)\ ,
\ee for any $g\in \MM$. 

Of course, it would be ideal to derive the Koike-Norton-Zagier formula directly in our heterotic  construction from the outset. The main impediment to this is an independent proof of the existence of a Weyl vector for the Monster Lie algebra, i.e. to derive the vacuum winding and momentum. We leave this point to future work.
 
\subsubsection{CHL models}

Let us consider now the CHL case, for $g$ of order $N$ (in this section, we will restrict to $\lambda=1$, for simplicity). The root lattice $\Gamma$ is given by
 \be \gamma= (k^0_L,k^1_L)=\frac{1}{\sqrt{2}}\left(\frac{m}{R}+\frac{nR}{N},\frac{m}{R}-\frac{nR}{N}\right)\ ,\qquad m,n\in \ZZ\ ,
 \ee so that
 \be \kappa(\gamma,\gamma')=\frac{mn'+m'n}{N}\ ,
 \ee and the root multiplicity is
 \be {\rm mult}(n,m)= \hat c_{n,m}\left(\frac{nm}{N}\right)\ ,
 \ee where $\hat c_{n,m}(l)$ is the Fourier coefficient of $F_{n,m}$, the (discrete Fourier transform of the) generalized McKay-Thompson series (see eq.\eqref{defF}).
Physically, they represent the dimensions of $V^\natural_{n,m}$ (the $g=e^{\frac{2\pi im}{N}}$ eigenspace  in the $g^n$-twisted sector) at level $L_0-1= l$. 
 
 The positive roots satisfy
 \be \frac{m}{R}+\frac{n R}{N}>0\ .
 \ee
This condition, and therefore the Fock space used in the definition of $Z$, changes discontinuously whenever $R$ crosses a `wall of marginal stability', i.e. when the energy of  some single particle state changes sign. On the other hand, the path-integral description suggests that the index $Z$ itself should be a continuous function of $R$. Mathematically, the choice of $R$ determines which Weyl chamber is the fundamental one. Continuity of $Z$ corresponds to the statement that the denominator of the algebra is invariant under Weyl transformations, although this invariance is not manifest in its product formula description. Thus, we expect to be able to obtain the same expression for the index for different values of $R$. We will consider, in particular, the regimes $R\gg 0$ and $1/R\gg 0$.

\subsubsection{The $R\gg 0$ regime}

  For $R$ sufficiently large, a necessary and sufficient condition for single particle states to have positive energy, i.e.
\be  m>-n\frac{R^2}{N}\ ,  \ee
   is to have positive winding $n>0$. This follows from the following observations:
 \begin{enumerate}
 \item There are no states with zero winding $n=0$. This can be seen by noticing that $F_{0,m}(\tau)= \frac{1}{N}\sum_{k}e^{-\frac{2\pi imk}{N}} T_{\id,g^k}(\tau)$ has vanishing constant term, so that the multiplicity $\hat c_{0,m}(0)=0$. Note that, in general, there can be states with zero momentum $m=0$. This happens if some of the twisted-twining McKay-Thompson series $T_{g^r,g^s}(\tau)$ have constant terms.
 \item For $n<0$ and $R\gg 0$, the positive energy condition implies $m\gg 0$ and therefore $\frac{mn}{N}\ll 0$. On the other hand, all Fourier coefficients $\hat c_{n,m}(\frac{mn}{N})$ vanish when $\frac{mn}{N}$ is a sufficiently large negative number, so these states have zero multiplicity (in fact, by unitarity, we expect $\hat c_{n,m}(l)=0$ for $l<-1$).
 \end{enumerate}
The second quantized index is
\be\label{indexProd} Z_{g, \id}(T,U)=\Bigl(e^{-2\pi i T}\prod_{\substack{n> 0\\m\in \ZZ}}(1-e^{2\pi i U\frac{m}{N}}e^{2\pi iTn})^{\hat c_{n,m}(\frac{mn}{N})}\Bigr)^{24}\ ,
\ee
where $T,U$ are defined as (note the appearance of the factors $N$ with respect to the Monstrous Lie algebra)
\be T=b+i\frac{\beta R}{2\sqrt{2}\pi N}\qquad U=v+i\frac{\beta N}{2\sqrt{2}\pi R}\ .
\ee In this product formula, we set the constant orderings $(n_0,m_0)$ to $(-1,0)$ as for the (unorbifolded) Monster Lie algebra. We will justify this choice below.
 
 \medskip
 
Using the expression \eqref{Zidg}, it is easy to derive an alternative formula for $Z_{g,\id}(T,U)$. Indeed, in the 1-loop picture, $Z_{\id,g}$ can be simply computed by requiring the fields (and the strings) to transform by a $g$ transformation as one moves around the Euclidean time circle. This is exactly the same as taking a CHL orbifold by $(\delta,g)$ where the shift $\delta$ is taken along the Euclidean time direction, rather than along the space-like circle. This means that $Z_{g,\id}$ and $Z_{\id,g}$ are simply related by a rotation of the Euclidean $\TT^2$ torus that exchanges the space with Euclidean time. Such a rotation acts by $U\leftrightarrow -\frac{1}{U}$, $T\leftrightarrow T$ on the moduli, so that
\be\label{indexForm} Z_{g,\id}(T,U)^{1/24}=Z_{\id,g}(T,-\frac{1}{U})^{1/24}=T_{\id,g}(T)-T_{g,\id}(U)\ .
\ee Strictly speaking, since the derivation is based on the 1-loop expression for the index, we might expect a non-trivial phase to arise. This phase can be excluded by considering the index $Z_{g,\id}(T,U)$ in the large radius, low temperature limit $R,\beta\to \infty$, $\beta/R$ fixed (i.e. $T\to i\infty$, with $U$ fixed). In this limit, we expect the CHL model to be indistinguishable from the unorbifolded case, since all twisted states become infinitely massive and the effect of the orbifold projection is negligible when the momentum is approximately continuous. By requiring $Z_{g,\id}(T,U)^{1/24}\sim Z(T,U)^{1/24}\sim e^{-2\pi i T}$ in this limit, we find that \eqref{indexForm} must hold with no additional phase.  This reasoning also shows that the appropriate vacuum winding number is $n_0=-1$. The identity $Z_{g,\id}(T,U)=Z_{\id,g}(T,-\frac{1}{U})$ can also be derived directly by manipulating the infinite product formulas of both sides (see, for example, the proofs of Lemma 3.12 and Proposition 3.13 in \cite{Carnahan2014}).

We can use \eqref{indexForm} to describe the set of simple roots of the algebra in the regime $R\gg 0$. Since all positive roots have $n\ge 1$, then all roots with $n=1$ are simple. Single particle states corresponding to simple roots of the form $(1,m)$ give a non-vanishing contribution
\be -\sum_{m\in \ZZ} \hat c_{1,m}\left(\frac{m}{N}\right) e^{2\pi i U \frac{m}{N}}=-\sum_{m=1}^N F_{1,m}(U)=-T_{g,\id}(U)\ ,\ee
to the index. Here, we have to set $m_0=0$ in order to match with \eqref{indexForm}, so that the vacuum winding-momentum is $(n_0,m_0)=(-1,0)$, as anticipated in \eqref{indexProd}. In general, a single particle state corresponding to a simple root $(n,m)$ contributes
\be -e^{2\pi i T(n-1)+mU}\ ,
\ee to the index. By \eqref{indexForm}, the contributions of all simple roots that are \emph{not} of the form $(1,m)$ must be included in $T_{\id,g}(T)$, i.e. they must depend only on $T$. This means that all simple roots are either of the form $(1,m)$ or $(n,0)$. It follows that the vector $\rho=(-1,0)\in \h^*$ satisfies \eqref{Weyldef} for any simple root and is therefore a Weyl vector for the algebra. Thus, any CHL algebra has a (null) Weyl vector and it coincides with the normal ordering constants $(n_0,m_0)$ defining the vacuum winding and momentum.

The only states contributing to $(Z_{g, \id})^{1/24}$ are the ones satisfying
\be  0= \kappa(\gamma+\rho,\gamma + \rho)= 2 {M W \over N} =\frac{2}{N}m_{tot}(n_{tot}-1)\ .
\ee 
Indeed, $(Z_{g, \id})^{1/24}$ is the sum of a function depending only on $T$ (the contribution from states with $m_{tot}=0$) and a function depending only on $U$ (from states with $n_{tot}=1$).

\subsubsection{The $1/R\gg 0$ regime}

The positive energy condition
\be n>-\frac{Nm}{R^2}
\ee implies, for $1/R\gg 0$,
\be m>0,n\in \ZZ\qquad \text{or}\qquad m=0,n>0\ .
\ee In particular,  one can argue that there are no positive energy states with negative momentum $m<0$. The argument is analogous to the one used in the  $R\gg 0$ regime, just with the winding and momenta exchanged: if $m<0$, the positive energy condition requires $n\gg 0$, but the multiplicity $\hat c_{n,m}(\frac{mn}{N})$ vanishes for $mn\ll 0$.  

In fact, the two regimes $R\gg 0$ and $1/R\gg 0$ are similar, upon exchanging winding and momentum. The main qualitative difference is that, while there are no states with $n=0$ (because $\hat c_{0,m}(0)=0$ for all $m$), we cannot exclude the presence of states with zero momentum $m=0$. This phenomenon is related to the fact that, while the $R\to \infty $ limit corresponds to a two dimensional heterotic model compactified on $V^\natural$, the $R\to 0$ limit might correspond to a two dimensional heterotic model either on a $c=24$ VOA with currents (if there are states with $m=0$) or on one without currents (if there are no states with $m=0$). It is convenient to distinguish these two cases.

\medskip

\paragraph{\textit{Without zero momentum states.}}
Let us assume first that there are no states with $m=0$, i.e. that $\hat c_{n, 0}(0)=0$ for all $n$. Then, positive energy single particle states have necessarily $m>0$, so that the single-particle states with $m=1$ are simple roots for any winding $n$ and the Weyl vector $\rho$ corresponds to $(n_0,m_0)=(0,-1)$. 

The (single- or multi-particle) states contributing to the index $Z^{1/24}$ satisfy
\be 0=\kappa(\gamma+ 2\rho,\gamma)=\frac{2MW}{N}=\frac{2}{N}(m_{tot}-1)n_{tot}\ ,
\ee so that $Z^{1/24}$ is the sum of a function of $T$ (states with $m_{tot}=1$) and a function of $U$ (states with $n_{tot}=0$). 

The states with $n_{tot}=0$ consist of the vacuum, contributing $-e^{-\frac{2\pi i U}{N}}$, and multi-particle states contributing $-e^{\frac{2\pi i Um_{tot}}{N}}$, $m_{tot}>0$,  -- no single-particle states with $n=0$ exists. Notice that we have to assume that the vacuum in the $1/R\gg 0$ case has negative fermion number, in order to match with the $R\gg 0$ analysis.\footnote{The action of the fermion number $(-1)^F$ on the vacuum state $|0\rangle$ is a matter of convention. When $R$ crosses a domain wall, the fermion number of the vacua $|0\rangle$ and $|0'\rangle$ differ by a factor $(-1)^n$, where $n$ is the number of single particle fermionic states whose energy changes sign. In our case, since we have $24$ identical copies of each fermion, $n$ is always a multiple of $24$, so that all vacua have the same fermion number (that we conventionally fix to be positive). However, in the calculations, we often focus on the Fock space and index (that we denote by $Z^{1/24}$) built from a single copy of each fermion. In this case we might have vacua with different fermion numbers. By convention we fix the fermion number of the vacuum for $R\gg0$ to be positive.
}  In order to match with the $R\gg 0$ regime, all such contributions should sum up to $-T_{g,\id}(U)$. This implies that $T_{g,\id}(\tau)$ must be of the form 
\be\label{cuspa} T_{g,\id}(\tau)=q^{-\frac{1}{N}}+O(q^{1 \over N})\ ,
\ee with the polar term $q^{-\frac{1}{N}}$ coming from the vacuum contribution and the other terms coming from multi-particle states. 

The only states with $m_{tot}=1$ are single-particle states and their contribution to the index $(Z_{g, \id}(T, U))^{1/24}$ is
\be \sum_{n\in \ZZ} e^{2\pi iTn} \hat c_{n,1}\left(\frac{n}{N}\right)\ .
\ee By comparing with the $R\gg 0$ case, we find the identity
\be\label{stranona} \sum_{n\in \ZZ} e^{2\pi iTn} \hat c_{n,1}\left(\frac{n}{N}\right)=T_{\id,g}(T)=e^{-\frac{2\pi iT}{N}}+0+\ldots
\ee

Eq.\eqref{stranona} implies that, whenever $\hat c_{n,0}(0)=0$ for all $n$, the algebra  in the $1/R\gg 0$ regime has always one simple root $(-1,1)$ with multiplicity $\hat c_{-1,1}(-\frac{1}{N})=1$. Similarly, by \eqref{cuspa}, for $R\gg 0$, the algebra has always one simple root $(1,-1)$, again with multiplicity $\hat c_{1,-1}(-\frac{1}{N})=1$.
Roots with $\kappa(\gamma,\gamma)=\frac{2mn}{N}<0$ are called \emph{real} roots.  Notice that the only oscillators $\eta_a$ whose energy can (and do) change sign as we vary $R$ are the ones corresponding to real roots. Further implications of the existence of real roots in the algebra will be discussed in section \ref{s:Weyl}.

\medskip

\paragraph{\textit{With zero momentum states.}}

Let us consider the case where $\hat c_{n,0}(0)\neq 0$ for some $n>0$. Let $\hat n>0$ be the smallest winding such that $\hat c_{\hat n,0}(0)\neq 0$ and let $\tilde{n}>0$ the smallest winding such that $(\tilde{n},1)$ has non-zero multiplicity ${\rm mult}(\tilde{n},1)=\hat c_{\tilde n,1}\left(\frac{\tilde{n}}{N}\right)\neq 0$. Then, $(\hat n,0)$ and $(\tilde{n},1)$ are necessarily simple roots, so that the Weyl vector must be $(n_0,m_0)=(-\tilde{n},0)$. The ground states contributes $e^{-2\pi i \tilde{n} T}$ to the index $Z_{g,\id}(T,U)^{1/24}$ and comparison with the $R\gg 0$ regime shows that $\tilde{n}= 1$, so that $\hat c_{1,1}\left(\frac{1}{N}\right)\neq 0$ and $(n_0,m_0)=(-1,0)$. The condition $\kappa(\rho,\gamma)=-\kappa(\gamma,\gamma)/2$ implies that, as in the $R\gg 0$ regime, the simple roots are of the form $(1,m)$ or $(n,0)$. The states contributing to $(Z_{g,\id})^{1/24}$ satisfy
\be  \kappa(\gamma + 2 \rho, \gamma)=\frac{2}{N}MW=\frac{2}{N}m_{tot}( n_{tot}-1)=0\ ,
\ee so that states with $m_{tot}=0$ give a function of $T$ and states with $n_{tot}=1$ give a function of $U$. There is also a constant term from states with $(n_{tot},m_{tot})=(1,0)$, that are necessarily single-particle states; it is convenient to include this contribution in the function of $U$. Since the positive energy single-particle states have $m\ge 0$, all  contributions with $m_{tot}=0$ come from states built (in all possible ways) from single particles of zero momentum. These contributions  are given by
\be e^{-2\pi i T}\prod_{n>0} (1-e^{2\pi i Tn})^{\hat c_{n,0}(0)}+\hat c_{1,0}(0)\ ,
\ee where we subtract the contribution $-\hat c_{1,0}(0)$ from the (fermionic) states with $(n,m)= (1,0)$. By comparing with the $R\gg 0$ regime, we find that
\be\label{piustrana} T_{\id,g}(T)=e^{-2\pi i T}\prod_{n=1}^\infty (1-e^{2\pi i Tn})^{\hat c_{n,0}(0)}+\hat c_{1,0}(0)\ .
\ee This formula can be written as a product of $\eta$-functions (see appendix \ref{pf:etaproduct})
\be\label{etaproduct} T_{\id,g}(T)=\prod_{\ell|N}\eta(\ell T)^{\alpha(\ell)}+\alpha(1)\ ,
\ee where
\be \alpha(\ell)=\sum_{d|\ell}\hat c_{d,0}(0)\mu(\ell/d)\ ,
\ee with $\mu$ the M\"obius function. Using M\"obius inversion formula, it is easy to verify that the modular weight of this product is zero
\be \sum_{\ell|N}\alpha(\ell)=\hat c_{N,0}(0)=0\ .
\ee  The exponents $\alpha(\ell)$ are also related to the number of currents of the orbifold CFT $V^\natural/\langle g\rangle$, which is given by
\begin{equation}
\begin{split}\label{nrcurrents} \sum_{n=1}^N \hat c_{n,0}(0)=\sum_{d|N}\sum_{\substack{n=1\\(n,N)=d}}^N \hat c_{d,0}(0)=
\sum_{d|N} \hat c_{d,0}(0)\varphi\left(\frac{N}{d}\right)=
\sum_{d|N}\sum_{\ell|d} \alpha(\ell)\varphi\left(\frac{N}{d}\right)\\=\sum_{\ell|N}\alpha(\ell)\sum_{k|\frac{N}{\ell}} \varphi\left(\frac{N}{k\ell}\right)=\sum_{\ell|N}\alpha(\ell)\frac{N}{\ell}\ ,\end{split}
\end{equation}
where $\varphi(n)$ is the Euler totient function counting the numbers coprime to $n$ in $\ZZ/n\ZZ$.

Let us consider the states with $n_{tot}=1$. Each such state gives a contribution of the form $e^{\frac{2\pi iUm_{tot}}{N}}$, $m_{tot}\ge 0$; in particular, there are no poles in the limit $U\to i\infty$.  By comparison with the $R\gg0$ regime, these contributions should sum up to $-T_{g,\id}(U)$; this implies that $T_{g,\id}(\tau)$ is of the form
\be\label{nopole} T_{g,\id}(\tau)=const+O(q^{1/N})\ .
\ee This can be verified explicitly using \eqref{etaproduct}, since
\be\label{Tge} T_{g,\id}(\tau)=T_{\id,g}(-\frac{1}{\tau})=\prod_{\ell|N}\Bigl(\ell^{-1/2}\eta(\frac{\tau}{\ell})\Bigr)^{\alpha(\ell)}+\alpha(1)=\alpha(1)+O(q^{\sum_{\ell|N}\frac{\alpha(\ell)}{24\ell}})
\ee and noticing that 
\be\label{orderfirst} \sum_{\ell|N}\frac{\alpha(\ell)}{24\ell}=\frac{1}{24N}\sum_{\ell|N}\frac{\alpha(\ell)N}{\ell}>0
\ee since the right-hand side is, up to the $24N$ factor, the number of currents \eqref{nrcurrents} of the orbifold $V^\natural/\langle g\rangle$. Actually, the number of currents \eqref{nrcurrents} can be evaluated exactly. As we argued above, whenever there are zero momentum states, one has $\hat c_{1,1}(\frac{1}{N})\neq 0$, i.e. the coefficient of $q^{\frac{1}{N}}$ in $T_{g,\id}(\tau)$ must be non-zero. By \eqref{Tge}, this implies that the order \eqref{orderfirst} of the first non-constant coefficient is at most $1/N$, so that the number of currents  \eqref{nrcurrents} must be at most $24$ \be 0<\sum_{\ell|N}\frac{\alpha(\ell)N}{\ell}\le 24\ .\ee
The only VOA satisfying this bound is the Leech lattice VOA, which has exactly $24$ currents. We conclude that, whenever the orbifold $V^\natural/\langle g\rangle$ is consistent and has currents, it must be the Leech lattice VOA.

By \eqref{nopole}, $T_{g,g^k}(\tau)$ and $F_{1,r}(\tau)$ have no poles as $\tau\to i\infty$, for all $r,k\in \ZZ$, so that $\hat c_{1,r}(l)=0$ for $l<0$. Using these observations, it is easy to check that all contributions to $-T_{g,\id}(U)$ come from single-particle states with $n=1$, $m\ge 0$
\be -\sum_{m\ge 0} \hat c_{1,m}\left(\frac{m}{N}\right)e^{\frac{2\pi i m U}{N}}=-\sum_{m\in \ZZ} \hat c_{1,m}\left(\frac{m}{N}\right)e^{\frac{2\pi i m U}{N}}=-\sum_{r=1}^NF_{1,r}(U) =-T_{g,\id}(U)\ .
\ee
The absence of a polar term in $T_{g,\id}(U)$ implies that there are no simple real roots with $\kappa(\gamma,\gamma)=\frac{mn}{N}<0$. As a consequence, all simple (and therefore all positive) roots have $m,n\ge 0$.  This means that there is no oscillator $\eta_a$ for which the energy can change sign as we vary $R$. This is compatible with the observation that the Weyl vector $\rho=(-1,0)$ is the same in the $R\gg 0$ and $1/R\gg 0$ regime.

\subsubsection{The $\lambda \neq 1$ case}\label{sec:lambdaneq1}

Much of the analysis of the previous subsections carries over directly to the $\lambda \neq 1$ case if one simply makes the replacement $N \mapsto N \lambda$. In this subsection we will spell out the less automatic aspects of the generalization. In this case, the root lattice is given by
\be 
\gamma =\left(k^0_L, k^1_L \right) = {1 \over \sqrt{2}}\left({k \lambda + n \mathcal{E}_g \over R} + {n R \over N \lambda}, {k \lambda + n \mathcal{E}_g \over R} - {n R \over N \lambda} \right), \ n, k \in \mathbb{Z}
\ee
and, to emphasize the similitude with the previous subsections, we will use the notation $w := {n \over N \lambda}$ and $m:= k \lambda + n \mathcal{E}_g$.

The argument in the $R\gg 0$ regime directly applies to this case, after making the aforementioned substitution. In particular, the Weyl vector is $(n_0,m_0)=(-1,0)$ and the index is
\be Z_{g, \id}(T,U)=\Bigl(e^{-2\pi i T}\prod_{\substack{n> 0\\m\in \ZZ}}(1-e^{2\pi i U\frac{m}{N\lambda}}e^{2\pi iTn})^{\hat c_{n,m}(\frac{mn}{N\lambda})}\Bigr)^{24}=(T_{\id,g}(T)-T_{g,\id}(U))^{24}\ .
\ee Recall that $\hat c_{n,m}(l)=0$ unless $m- \E_g n\equiv 0\mod \lambda$. The infinite product formula then implies that
\be\label{unaphase} Z_{g, \id}\left(T-\frac{\E_g}{\lambda},U+N\right)^{1/24}=e^{\frac{2\pi i\E_g}{\lambda}}Z_{g,\id}(T,U)^{1/24}\ ,
\ee where the phase $e^{\frac{2\pi i\E_g}{\lambda}}$ comes from the vacuum contribution.

The ${1 \over R} \gg 0$ regime is slightly more involved, but the case without zero momentum states is very similar to its $\lambda=1$ counterpart. Namely, one computes as before
\be T_{g,\id}(U)=e^{-\frac{2\pi i U}{N\lambda}}+0+O(e^{\frac{2\pi i U}{N\lambda}})\ ,
\ee
and 
\begin{align}\label{lambdaneq1}
\sum_{n \in \mathbb{Z}}e^{2 \pi i T n}\hat{c}_{n, 1}\left(\frac{n}{N \lambda} \right) 
= T_{\id, g}(T)\ .
\end{align}
Notice that the coefficients $\hat{c}_{n, 1}\left(\frac{n}{N \lambda} \right)$ in this sum can be non-zero only for $n\E_g\equiv 1\mod\lambda$.   For theories without currents we can show that in fact the $\mathcal{E}_g \equiv -1 \textrm{ mod }\lambda$ condition always holds: we know the right hand side of \ref{lambdaneq1} starts with a pole of the form $e^{- 2 \pi i T}$ so the left hand side must have $\hat{c}_{-1, 1}\left( {-1 \over N \lambda}\right) \neq 0$. Then, the values $m=1, n=-1$ must satisfy $m\equiv n\E_g\mod \lambda$, and this enforces the condition $\mathcal{E}_g \equiv -1 \textrm{ mod }\lambda$.

\bigskip
On the other hand, assume that $\hat{c}_{n, 0}(0) \neq 0$, for some $n>0$ (to satisfy the positive energy condition). In this case, since $(\lambda, \mathcal{E}_g)=1$, we must have $\lambda \vert n$, so we can write $n = \bar{n} \lambda$ for some $\bar{n} \in \mathbb{N}$. This condition is necessary but not sufficient to have currents; in particular, it is violated when $n \equiv 0 \textrm{ mod } N$, so we allow for the possibility that $\hat{c}_{n, 0}(0)=0$ even if $\lambda\vert n$. The $T$-dependent part of the index again reduces to 
\be 
T_{\id,g}(T)=e^{- 2 \pi i T}\prod_{\bar n>0}\left(1 - e^{2 \pi i T  \bar n \lambda} \right)^{\hat{c}_{\bar n \lambda, 0}(0)}=\prod_{\ell|N}\eta(\ell T)^{\alpha(\ell)}
\ee
where now we have the condition $ \lambda\vert \ell$. The proof of \eqref{etaproduct} can be repeated without any essential modification. 
Turning to the states with $n_{tot}=1$, we can repeat the remainder of the argument after performing the $N \mapsto N\lambda$ substitution and find that $\hat{c}_{1, m}(l)= 0, l<0$ and the $-T_{g, \id}(U)$ contribution comes from the states
\be 
-\sum_{m \geq 0} \hat{c}_{1, m}\left(\frac{m}{N \lambda} \right) e^{\frac{2 \pi i m U }{ N \lambda}} = - \sum_{r=1}^{N \lambda}F_{1, r}(U) = -T_{g, \id}(U).
\ee

\section{Examples} \label{sec:examples}

\noindent In this subsection we compute some low-order, representative examples of the index to illustrate the properties that were discussed abstractly in the previous subsection. We include cases with and without currents, and cases where $\lambda \neq 1$.
\subsection{Elements of order $2$} 

We warm-up with the simplest possible examples. There are two classes of involutions in the Monster, class 2A and 2B. Let us start with the CHL model for class 2A. The McKay-Thompson series are
\be T_{\id,2A}(\tau)=\frac{\eta(\tau)^{24}}{\eta(2\tau)^{24}}+2^{12}\frac{\eta(2\tau)^{24}}{\eta(\tau)^{24}} + 24=\frac{1}{q}+4372q+96256q^2+\ldots
\ee
\be T_{2A,\id}(\tau)=T_{\id,2A}(\tau/2)=\frac{1}{q^{1/2}}+4372q^{1/2}+96256q+\ldots
\ee
so that
\begin{align} F_{0,0}(\tau)&=\frac{1}{q}+0+100628q+\ldots & F_{0,1}(\tau)&=96256q+\ldots\\
F_{1,0}(\tau)&=96256q+\ldots & F_{1,1}(\tau)&=\frac{1}{q^{1/2}}+4372q^{1/2}+\ldots
\end{align}
Consider the $R\gg 0$ regime. Let us compute the contribution to the index $(Z_{g, \id}(T,U))^{1/24}$ from states with $m_{total}=0$. The first step is to find the positive real roots, which correspond to $m<0$, $n>0$ and have multiplicity $\hat c_{n,m}(\frac{m n}{2})$. The only non-vanishing polar coefficients are $\hat c_{0,0}(-1)=1$ and $\hat c_{1,1}(-1/2)=1$. Now, it is impossible to have $\frac{mn}{2}=-1$ with $m,n\equiv 0\mod 2$, so that there is no root corresponding to $\hat c_{0,0}(-1)$. The only solution to $\frac{mn}{2}=-1/2$ with $m,n\equiv 1\mod 2$ is $(m,n)=(-1,1)$ and this root has multiplicity $\hat c_{1,1}(-1/2)=1$ (as always the case for real roots). There are no roots with $m=0$, since $\hat c_{n,0}(0)=0$ for all $n$. Thus, the only way to get $m_{tot}=0$ is, apart from the vacuum, a $2$-particle state of the form $(m_{tot},n_{tot})=(-1,1)+(1,n)$. There are $\hat c_{1,1}(-1/2)\hat c_{1,n}(\frac{n}{2})=\hat c_{1,n}(\frac{n}{2})$ such states. Therefore,
\begin{align} (Z(T,U))^{1/24}=&-T_{2A,\id}(U)+e^{-2\pi i T} +\sum_{n>0}c_{1,n}(\frac{n}{2})e^{2\pi i Tn}=-T_{2A,\id}(U)+F_{1,0}(2T)+F_{1,1}(2T)\\
=&-T_{2A,\id}(U)+T_{2A,\id}(2T)=-T_{2A,\id}(U)+T_{\id,2A}(T)\ .
\end{align}
Note that
\be T_{\id,2A}(\tau)=F_{0,1}(2\tau)+F_{1,1}(2\tau)\ .
\ee This is a special case of the general identity $T_{\id,g}(T)=\sum_{l=1}^N F_{l,1}(NT)$ that can be proved from \eqref{stranona}.

\medskip

Let us consider the CHL model for class 2B. The McKay-Thompson series are
\be T_{\id,2B}(\tau)=\frac{\eta(\tau)^{24}}{\eta(2\tau)^{24}}+ 24=\frac{1}{q}+276q-2048q^2+\ldots
\ee
\be T_{2B,\id}(\tau)=24+4096q^{1/2}+98304 q+\ldots
\ee so that
\begin{align} F_{0,0}(\tau)&=\frac{1}{q}+0+98580q+\ldots & F_{0,1}(\tau)&=98304q+\ldots\\
F_{1,0}(\tau)&=24+98304q+\ldots & F_{1,1}(\tau)&=4096q^{1/2}+\ldots
\end{align}
The only non-vanishing polar term is $\hat c_{0,0}(-1)=1$, but there is no $m,n$ such that $\frac{m n}{2}=-1$ with $m,n\equiv 0\mod 2$ and $m<0$, so there is no real positive root. The only way to obtain states with $m_{tot}=0$ is to consider multiparticle states formed from single particles with $m=0$. The latter appear for odd $n$ and have multiplicity $\hat c_{1,0}(0)=24$. The contribution from such states is given by
\begin{align} (Z_{g, \id}(T,U))^{1/24}=&-T_{2B,\id}(U)+e^{-2\pi i T}\prod_{\substack{n>0\\\substack{n\text{ odd}}}}(1-e^{2\pi iTn})^{24} + 24\\
=&-T_{2B,\id}(U)+e^{-2\pi i T}\prod_{n>0}\frac{(1-e^{2\pi iTn})^{24}}{(1-e^{2\pi iT(2n)})^{24}} + 24\\=&
-T_{2B,\id}(U)+\frac{\eta(T)^{24}}{\eta(2T)^{24}} + 24=-T_{2B,\id}(U)+T_{\id,2B}(T)\ .
\end{align}
In particular, we have the identity
\be T_{\id,2B}(T)=e^{-2\pi i T}\prod_{n=1}^{\infty}(1-e^{2\pi iT n})^{\hat c_{n,0}(0)} + \hat c_{1,0}(0)\ ,
\ee in agreement with \eqref{piustrana}.

\subsection{Elements of order $3$}

Again we will start with the simplest case, which is the conjugacy class $3A$. The expression for the McKay-Thompson series is
\be
T_{\id, 3A} = 12 + \left(\frac{\eta(q)}{\eta(q^3)}\right)^{12} + 729  \left(\frac{\eta(q^3)}{\eta(q)}\right)^{12} = \frac{1}{q}+783 q+8672 q^2+65367 q^3+O\left(q^4\right)
\ee
Similarly to the previous subsection, we have
\be 
T_{3A, \id}(\tau) = T_{\id, 3A}(\tau/3) = {1 \over q^{1/3}} + 783 q^{1/3} + 8672 q^{2/3} + O(q)
\ee
As before, we compute the functions $F_{n, m}$, for which the first few terms are
\begin{align*}
F_{0, 0}(\tau) &= \frac{1}{q}+66150 q+7170368 q^2+O\left(q^3\right) & F_{0, 1}(\tau) &= 65367 q+7161696 q^2+O\left(q^3\right) \\
F_{0, 2}(\tau) &= 65367 q+7161696 q^2+O\left(q^3\right) & F_{1, 0}(\tau) &= 65367 q+O\left(q^2\right) \\
F_{1, 1}(\tau) &= 783 q^{1/3 }+371508 q^{4/3}+O\left(q^{7/3}\right) & F_{1, 2}(\tau) &= \frac{1}{q^{1/3}}+8672 q^{2/3}+1741787 q^{5/3} + O\left(q^{8/3}\right)  \\
F_{2, 0}(\tau) &= 65367 q+O\left(q^2\right) & F_{2, 1}(\tau) &=\frac{1}{q^{1/3}}+8672 q^{2/3}+1741787 q^{5/3}+O\left(q^2\right) \\
F_{2, 2}(\tau) &= 783 q^{1/3}+371508 q^{4/3}+O\left(q^2\right) \\
\end{align*}

To write down the index, we again focus on the regime $R\gg 0$, and so choose states with $m<0, n>0$. Looking at the polar terms, we have the multiplicity $\hat{c}_{2, 1}(-1/3)=1$ of a positive real root with $m = -1 \equiv 2 \textrm{ mod } 3, n = 1 \equiv \textrm{ mod } 3$. Essentially identically to $2A$, we have two-particle states with $m_{tot}=0$ of the form $(-1, 1) + (1, n)$ with multiplicity $\hat{c}_{2, 1}(-1/3)\hat{c}_{1, n}(n/3) = \hat{c}_{1, n}(n/3)$. 
Plugging all this in, we have
\begin{align*}
Z_{g, \id}(T, U)^{1/24} &= e^{- 2 \pi i T}\left(\prod_{n>0, m\in \mathbb{Z}} (1- e^{2\pi i U {m \over 3}}e^{2\pi i T n}) \right)^{\hat{c}_{n, m}({m n \over 3})} \\
& = -T_{3A, \id}(U) + e^{-2 \pi i T} + \sum_{n>0}\hat{c}_{n, 1}(n/3)e^{2 \pi i T n} \\
& = -T_{3A, \id}(U) + F_{0, 1}(3T) + F_{1, 1}(3T) + F_{2, 1}(3T) \\
&= T_{\id, 3A}(T) - T_{3A, \id}(U)
\end{align*}
\bigskip
In the case of 3B, the McKay-Thompson series is 
\be 
T_{\id, 3B} = \left(\frac{\eta(q)}{\eta(q^3)} \right)^{12} + 12 = \frac{1}{q}+54 q-76 q^2-243 q^3+1188 q^4 +O\left(q^{5}\right)
\ee
with
\be 
T_{3B, \id}= 12+729 q^{1/3}+8748 q^{2/3}+65610 q+370332 q^{4/3}+1743039 q^{5/3}+O\left(q^2\right)
\ee
and
\begin{align*}
F_{0, 0}(\tau) &= \frac{1}{q}+65664 q+7164536 q^2 + O(q^3) & F_{0, 1}(\tau) &= 65610 q+O\left(q^2\right) \\
 F_{0, 2}(\tau) &= 65610 q+O\left(q^2\right) & F_{1, 0}(\tau) &= 12+65610 q+O\left(q^2\right) \\
F_{1, 1}(\tau) &= 729 q^{1/3}+370332 q^{4/3}+O\left(q^2\right) &  F_{1, 2}(\tau) &=8748 q^{2/3}+1743039 q^{5/3}+O\left(q^2\right)  \\
F_{2, 0}(\tau) &= 12 + 65610 q+O\left(q^2\right) & F_{2, 1}(\tau) &=  8748 q^{2/3}+1743039 q^{5/3}+O\left(q^2\right) \\
F_{2, 2}(\tau) &= 729 q^{1/3}+370332 q^{4/3}+O\left(q^2\right).
\end{align*}
As in the 2B case, there are no polar terms other than the one with multiplicity $1=\hat{c}_{0, 0}(-1)$, which does not have a solution for $m, n$ that satisfies our conditions. Therefore, there are no positive real roots here either. In order to get states with $m_{tot}=0$ we again consider combinations of single particle states with $m=0$. 
These have the form
\be 
e^{-2 \pi i T}\prod_{n=1}^{\infty}(1- e^{2 \pi i Tn})^{\hat{c}_{n, 0}(0)} + \hat{c}_{1, 0}(0)
\ee
Inspecting $F_{1, 0}$ we see that $\hat{c}_{1, 0}(0) = 12$ and $\hat{c}_{2, 0}(0)=12$ as well, so we are only excluding states of the form $n \equiv 0 \textrm{ mod } 3$. Thus we can re-write this contribution as
\be 
\left({\eta(q) \over \eta(q^3)}\right)^{12} + 12 = T_{\id, 3B}(T)
\ee
and the total index is
\be 
(Z_{g, \id}(T, U))^{1/24} = T_{\id, 3B}(T) - T_{3B, \id}(U)
\ee

\bigskip

Finally, we write the McKay-Thompson series for 3C. This example has $\lambda \neq 1$ and $N=\lambda=3$:
\be 
T_{\id, 3C}=\left(\frac{\eta(q^3)}{\eta(q^6)}\right)^8 + 2^8 \left(\frac{\eta(q^6)}{\eta(q^3)}\right)^{16} = \frac{1}{q}+248 q^2+4124 q^5+O\left(q^{8}\right)
\ee
Applying the S-transformation we get
\be 
T_{3C, \id} = \frac{1}{q^{1/9}}+248 q^{2/9}+4124 q^{5/9}+34752 q^{8/9}+O\left(q^{11/9}\right)
\ee
In this case, we have essentially 81 $F_{n, m}$ sectors but most of them will be zero. 
For compactness of notation, let us define the following shorthand:
\bea
\alpha(\tau) &= 248 q^{2/9} + 213126 q^{11/9} + \ldots \\
\beta(\tau) &= 4124 q^{5/9} + 1057504 q^{14/9} + \ldots \\
\gamma(\tau) &= {1 \over q^{1/9}} + 34752 q^{8/9} + 4530744 q^{17/9} + \ldots \\
\delta(\tau) &= 65628 q + 7164504 q^2 + \ldots \\
\kappa(\tau) &= {1 \over q} + 65628 q + 7164752 q^2 + \ldots
\eea 
The nonzero $F_{n, m}$ in this notation are:
\begin{align}\nonumber
F_{0, 0}(\tau) &= \kappa(\tau) & F_{0, 3}(\tau) &= \delta(\tau) & F_{0, 6}(\tau) &= \delta(\tau) \\ 
F_{1, 2}(\tau) &= \alpha(\tau)  & F_{1, 5}(\tau) &= \beta(\tau)  & F_{1, 8}(\tau) &= \gamma(\tau)   \\ \nonumber
F_{2, 1}(\tau) &= \alpha(\tau) & F_{2, 4}(\tau) &= \gamma(\tau) & F_{2, 7}(\tau) &= \beta(\tau) \\ \nonumber
F_{3, 0}(\tau) &= \delta(\tau) & F_{3, 3}(\tau) &= \kappa(\tau) & F_{3, 6}(\tau) &= \delta(\tau) \\ \nonumber
F_{4, 2}(\tau) &= \gamma(\tau) & F_{4, 5}(\tau) &= \alpha(\tau) & F_{4, 8}(\tau) &= \beta(\tau) \\ \nonumber
F_{5, 1}(\tau) &= \beta(\tau) & F_{5, 4}(\tau) &= \alpha(\tau) & F_{5, 7}(\tau) &= \gamma(\tau) \\ \nonumber
F_{6, 0}(\tau) &= \delta(\tau) & F_{6, 3}(\tau) &= \delta(\tau)  & F_{6, 6}(\tau) &= \kappa(\tau) \\ \nonumber
F_{7, 2}(\tau) &= \beta(\tau) & F_{7, 5}(\tau) &= \gamma(\tau) & F_{7, 8}(\tau) &= \alpha(\tau) \\\nonumber
F_{8, 1}(\tau) &= \gamma(\tau) & F_{8, 4}(\tau) &= \beta(\tau) & F_{8, 7}(\tau) &= \alpha(\tau) \nonumber
\end{align}
We will focus on the $R\gg 0$ regime again and so let $m<0, n>0$. There are no solutions for ${mn \over 9} = -1$ when $m, n$ are both congruent (mod $9$) to $0, 3, 6$. We can find a solution for ${m n \over 9} = -{1 \over 9}$ subject to our constraints, though: $m = -1 \equiv 8 \textrm{ mod } 9, n = 1 \equiv 1 \textrm{ mod } 9 $. This example then proceeds in the same way as a Fricke-invariant case of order $N\lambda = 9$.  
The two particle states with $m_{tot}=0$ can again be written as $(-1, 1) + (1, n)$ with multiplicity $\hat{c}_{n, 1}(n/9)$. 
Putting it together we again have the $U$-dependent piece
\be 
-\sum_{m \in \mathbb{Z}}\hat{c}_{1, m}\left({m \over 9}\right) e^{2 \pi i U {m \over 9}} = - \sum_{m=1}^{9}F_{1, m}(U) = -T_{3C, \id}(U)
\ee
and combining with the $T$-dependent piece:
\begin{align*}
Z_{g, \id}(T, U)^{1/24} &= e^{- 2 \pi i T}\left(\prod_{n>0, m\in \mathbb{Z}} (1- e^{2\pi i U {m \over 9}}e^{2\pi i T n}) \right)^{\hat{c}_{n, m}({m n \over 9})} \\
& = -T_{3C, \id}(U) + e^{-2 \pi i T} + \sum_{n>0}\hat{c}_{n, 1}(n/9)e^{2 \pi i T n} \\
& = -T_{3C, \id}(U) + F_{2, 1}(9T) + F_{5, 1}(9T) + F_{8, 1}(9T) \\
&= T_{\id, 3C}(T) - T_{3C, \id}(U)
\end{align*}
\subsection{Some elements of order $4$}
We also study the McKay-Thompson series corresponding to the conjugacy class $4D$. It is given by the expression
\be 
T_{\id, 4D} = \left({\eta(q^2) \over \eta(q^4)} \right)^{12} = {1 \over q} -12 q + 66 q^3 - 232 q^5 + O(q^7)
\ee
while
\be 
T_{4D, \id} = 64 q^{1/8}+768 q^{3/8}+4992 q^{5/8}+24064 q^{7/8}+O\left(q^{9/8}\right)
\ee
with $N=4, \lambda=2$. Somewhat similarly to the previous case, there are naively 64 sectors, but most will vanish by the arguments in Section \ref{sec:lambdaneq1}. For compactness, we will only display the minimum number of independent, nonvanishing $F_{n,m}(\tau)$. The others can be obtained by various symmetries relating the Hilbert spaces of the different CHL models. In particular (c.f. Appendix \ref{a:TwistAndTwin}), we have the relation $V_{n, m}\stackrel{\cong}{\longrightarrow} V_{n+N,m-\E_g N}$, which in this subsection becomes $V_{n, m}\stackrel{\cong}{\longrightarrow}V_{n+4,m \pm 4}$\footnote{In principle, this sign differs depending on whether we are looking at a case with or without currents, but the distinction is immaterial in these examples.}.
 
We have: 
\begin{align}
F_{0, 0}(\tau) &= \frac{1}{q}+49284 q + 5372928 q^2 + O(q^3) & F_{0, 2}(\tau)&= 49152 q + 5373952 q^2 + O(q^3) \\ \nonumber
F_{0, 4}(\tau) &= 49296 q + 5372928 q^2 + O(q^3) & F_{0, 6}(\tau)&= 49152 q + 5373952 q^2 + O(q^3) \\ \nonumber
F_{1, 1}(\tau) &= 64 q^{1/8} + O(q^{9/8})  &   F_{1, 3}(\tau) &= 768 q^{3/8} + O(q^{11/8}) \\ \nonumber
F_{1, 5}(\tau) &= 4992 q^{5/8} + O(q^{12/8}) & F_{1, 7}(\tau)&= 24064 q^{7/8} + O(q^{15/8}) \\ \nonumber
F_{2, 0}(\tau) &= 12 + 49152 q + O(q^2) & F_{2, 2}(\tau)&= 2016 q^{1/2} + O(q^{3/2}) \\  \nonumber
F_{2, 4}(\tau) &= 12 + 49152 q + O(q^2) & F_{2, 6}(\tau)&= 2080 q^{1/2} + O(q^{3/2}) \\  \nonumber
F_{3, 1}(\tau) &= 768 q^{3/8} + O(q^{11/8}) &   F_{3, 3}(\tau) &= 64 q^{1/8} + O(q^{9/8}) \\ \nonumber
F_{3, 5}(\tau) &= 24064 q^{7/8} + O(q^{15/8}) & F_{3, 7}(\tau)&= 4992 q^{5/8} + O(q^{12/8}) \\ \nonumber
\end{align}
In this case let's focus on the ${1 \over R} \gg 0$ regime. In this case we require $m>0, n \in \mathbb{Z}$ or $m=0, n>0$. Note that we \textit{do} have zero momentum states in this example. In particular, we have $\hat{c}_{2, 0}(0) = 12 = \hat{c}_{6, 0}(0)$. $\hat{c}_{1, 1}({1 \over 8}) = 64$, much as before. 
In this case we have, for the $T$-dependent piece 
\be 
e^{-2 \pi i T} \prod_{n>0}(1 - e^{2 \pi i T n})^{\hat{c}_{n, 0}(0)} = e^{-2 \pi i T} \prod_{n = 1, 3, \ldots}(1 - e^{2 \pi i T (2 n)})^{12} = \left({\eta(2 T) \over \eta(4 T)} \right)^{12} = T_{\id, 4D}(T)
\ee
The $U$-dependent term comes from the states with $n = 1$ and sum to 
$-\sum_{m \geq 0}\hat{c}_{1, m}(m/8) e^{2 \pi i m U/8} = -T_{4D, \id}(U)$. 

\bigskip
As a final example, consider the McKay-Thompson series for $4B$,
\be 
T_{\id, 4B} = \left({\eta(q^2) \over \eta(q^4)} \right)^{12} + 64\left({\eta(q^4) \over \eta(q^2)} \right)^{12} = {1 \over q} + 52 q+834 q^3+4760 q^5+24703 q^7 + O(q^9)
\ee
with 
\be 
T_{4B, \id} = {1 \over q^{1/8}}+52 q^{1/8}+834 q^{3/8}+4760 q^{5/8}+24703 q^{7/8}+94980 q^{9/8} + O(q^{11/8})
\ee
and
\begin{align}
F_{0, 0}(\tau) &= \frac{1}{q}+50340 q + 5397504 q^2 + O(q^3) & F_{0, 2}(\tau)&= 48128 q + 5349376 q^2 + O(q^3) \\ \nonumber
F_{0, 4}(\tau) &= 50288 q + 5397504 q^2 + O(q^3) & F_{0, 6}(\tau)&= 48128 q + 5349376 q^2 + O(q^3) \\ \nonumber
F_{1, 1}(\tau) &= 52 q^{1/8} + 94980 q^{9/8} + O(q^{17/8})  &   F_{1, 3}(\tau) &= 834 q^{3/8} + O(q^{11/8}) \\ \nonumber
F_{1, 5}(\tau) &= 4760 q^{5/8} + O(q^{12/8}) & F_{1, 7}(\tau)&= {1 \over q^{1/8}} + 24703 q^{7/8} + O(q^{15/8}) \\ \nonumber
F_{2, 0}(\tau) &= 48128 q + O(q^2) & F_{2, 2}(\tau)&= {1 \over q^{1/2}} + 2160 q^{1/2}+ O(q^{3/2}) \\  \nonumber
F_{2, 4}(\tau) &= 48128 q + O(q^2) & F_{2, 6}(\tau)&= 2212 q^{1/2} + O(q^{3/2}) \\  \nonumber
F_{3, 1}(\tau) &= 834 q^{3/8} + O(q^{11/8}) &   F_{3, 3}(\tau) &= 52 q^{1/8} + 94980 q^{9/8} + O(q^{17/8}) \\ \nonumber
F_{3, 5}(\tau) &={1 \over q^{1/8}} + 24703 q^{7/8} + O(q^{15/8}) & F_{3, 7}(\tau)&=  4760 q^{5/8} + O(q^{12/8}) \\ \nonumber
\end{align}
In this case we can compute directly in the $R\gg 0$ regime again. There is one positive root of multiplicity $\hat{c}_{1,7}(-1/8)=1$ with $n=1, m = -1 \equiv 7 \textrm{ mod } 8$; the other poles do not satisfy the congruence conditions in the $R\gg 0$ regime. If we build up the contribution from the multi-particle states as before we get
\be 
\sum_{n}F_{n, 1}(8 T) = F_{1, 1}(8 T) + F_{3, 1}(8T) + F_{5, 1}(8 T) + F_{7, 1}(8 T) = T_{\id, 4B}(T)
\ee
and again we get the same kind of contribution from the $U$-dependent side.

\section{Genus zero and Hauptmodul properties}\label{sec:Haupt}

\noindent In this section, we describe the properties of the McKay Thompson series $T_{\id,g}$ that can be deduced from physics arguments, and in particular from the properties of the index $Z_{g,\id}(T,U)$. Starting from these properties, we will then prove that each of these functions must be the Hauptmodul for a genus zero group.

\subsection{Space-like T-duality and Weyl reflections}\label{s:Weyl}

The analysis of section \ref{s:explformulas} shows that there are two classes of algebras that can emerge from the Monstrous CHL models. 

\medskip

The first class corresponds to the case where there the algebra has some roots with zero momentum, i.e. $\hat c_{n,0}(0)\neq 0$ for some $n$. Equivalently, this is the case where the orbifold $V^\natural/\langle g^\lambda\rangle$ has currents.  As argued in section \ref{s:explformulas}, both in the $R\gg 0$ and in the $1/R\gg0$ regimes, the positive roots are characterized by $n>0,m\ge 0$, the simple roots  are of the form $(n=1,m)$ or $(n,m=0)$ and the Weyl vector is $(n_0,m_0)=(-1,0)$. In particular there are no positive roots with $mn<0$ (\emph{real} roots), i.e. no oscillator $\eta_a$ whose energy can change its sign as we vary the radius $R$.

\medskip

The second class arises when $\hat c_{n,0}(0)=0$ for all windings $n$, i.e. when there are no roots with zero momentum. Equivalently, this is the case where the orbifold $V^{\natural}/\langle g^\lambda\rangle$ has no currents. In this case, at $R\gg0$, the positive roots are characterized by $n>0$, the simple roots are the ones with $n=1$ and the Weyl vector is $(n_0,m_0)=(-1,0)$.  At $1/R\gg 0$, the positive roots have $m>0$, the simple roots have $m=1$ and the Weyl vector is $(n_0,m_0)=(0,-1)$. There is a single pair $\pm \gamma= \pm(1,-1)$ of real roots,  both with multiplicity $1$, and we denote by $a\in \g_{(1,-1)}$ and $\theta(a)\in \g_{(-1,1)}$ the corresponding generators. In particular, $(1,-1)$ is positive at $R\gg 0$ and $(-1,1)$ is positive at $1/R\gg0$; the remaining positive roots are characterized by $m,n>0$ in both regimes.\footnote{To be precise, in section \ref{s:explformulas}, we proved that $\gamma$ is the only \emph{simple} real root and that it has multiplicity one. It is easy to show that there are no other \emph{positive} (not only simple) real roots. Indeed, if $a$ is the generator of $\g_{(1,-1)}$, the commutator of $a$ with any positive root with $m>0$ either has still $m>0$ or vanishes (remember that there are no roots with $m=0$ in this algebra). Since all positive roots are obtained by commutators of simple roots, we conclude.} This means that, as we cross the critical value of $R=\sqrt{N\lambda}$, the energy of the oscillator $\eta_a$ changes sign, so that the energy of the excited state $\eta_a|0\rangle$ gets lower than the energy of  $|0\rangle$ and becomes the new ground state. From the point of view of the algebra, this phase transition corresponds to a change of the subalgebra $\g^-$ (corresponding to the subset of positive definite oscillators), i.e. to a change of the Weyl chamber.

\bigskip

It is known that, for each real root $\gamma$ of a BKM algebra, there is a Weyl reflection $r_\gamma$, i.e. an automorphism of the algebra such that $r_{\gamma}(\gamma)=-\gamma$ and, more generally,
\be r_{\gamma}(\gamma')=\gamma'-2\gamma \frac{\kappa(\gamma,\gamma')}{\kappa(\gamma,\gamma)}\ .
\ee  The two Weyl chambers relative to $R>\sqrt{N\lambda}$ and $R<\sqrt{N\lambda}$ are related by the Weyl reflection $r_\gamma$ corresponding to the real root $\gamma=(1,-1)$. This Weyl reflection acts by
\be r_{\gamma}(n,m)=(n,m)-2(1,-1)\frac{m-n}{-2}=(m,n)\ .
\ee Since this is an automorphism of the algebra, the multiplicities ${\rm mult}(n,m)$ and ${\rm mult}(r_\gamma(n,m))$ must be the same, i.e.
\be \hat c_{n,m}\left(\frac{mn}{N\lambda}\right)=\hat c_{m,n}\left(\frac{mn}{N\lambda}\right)\ .
\ee This implies the following symmetry for the index
\begin{align}
(Z_{g, \id}(T,U))^{1/24}&=(e^{-2\pi i T}-e^{-2\pi i \frac{U}{N\lambda}})\prod_{m,n> 0}(1-e^{2\pi i U\frac{m}{N\lambda}}e^{2\pi iTn})^{\hat c_{n,m}(\frac{mn}{N\lambda})}\\
&=-(e^{-2\pi i \frac{U}{N\lambda}}-e^{-2\pi i T})\prod_{m,n> 0}(1-e^{2\pi i U\frac{m}{N\lambda}}e^{2\pi iTn})^{\hat c_{m,n}(\frac{mn}{N\lambda})}\\
&=-\left(Z_{g, \id}\left(\frac{U}{N\lambda},N\lambda T\right)\right)^{1/24}\ ,
\end{align} where we have separated the contribution of the ground state and the real positive root from the contribution of the imaginary positive roots.
Therefore, the Weyl transformation exchanges the winding and momenta along the space-like circle and transforms $T\rightarrow \frac{U}{N\lambda}$ and $U\leftrightarrow N\lambda T$, which (for $b=v=0$) corresponds to $R\leftrightarrow \frac{N\lambda}{R}$. 

The physical interpretation is clear: the Weyl reflection corresponds to T-duality along the space-like circle. T-duality along a single direction is not in the component $\tilde{SO}^+(L)$ of the T-duality group connected to the identity. However, we can compose it with T-duality \eqref{timeTduality} along the Euclidean time direction, which is always a self-duality for all Monster CHL models. The composition of the two T-dualities gives
\be Z_{g,\id}(T,U)=Z_{g,\id}\left(-\frac{1}{N\lambda T},-\frac{N\lambda}{U}\right)\ .
\ee It is easy to identify this T-duality  with the Fricke involution $(W_{N\lambda},W_{N\lambda})\in \tilde{SO}^+(L)$. 

The same reasoning as for the Fricke involution in section \ref{s:Tdualities} shows that T-duality along the space-like circle exchanges the CHL model relative to $(V^\natural,g)$ at the radius $R$ with the CHL model relative to the orbifold $(V',g')$ at the radius $N\lambda/R$, where $V'=V^\natural/\langle g\rangle$ and $g'$ is the quantum symmetry $Q$. The fact that the Weyl reflection is an automorphism of the associated BKM algebra implies that the BKM algebras based on these two models are isomorphic. This suggests that also the underlying CFTs are isomorphic, so that this T-duality is really a self-duality, i.e. an equivalence of the \emph{same} CHL model at different values of the moduli.\footnote{This implication is far from trivial: in general, it is not known whether isomorphic BKM algebras can only arise from isomorphic CFTs. We thank Scott Carnahan for clarifying this point to us.} This is indeed the case, as we will now show. The generators $a$ and $\theta(a)$, corresponding to the real roots $(1,-1)$ and $(-1,1)$, and the Cartan generator $H_a=\frac12 [a,\theta(a)]$ form a $\mathfrak{sl}_2$ subalgebra of the CHL algebra. At the self-dual radius $R=R_{sd}:=\sqrt{N\lambda}$, the corresponding 1-particle BPS states have all zero energy, which by the BPS condition implies $k^\mu_R=0$.  This means that the CFT $(V^\natural\times S^1)/(\delta,g)$ at the self-dual radius contains three purely holomorphic currents with winding and momenta $(-1,1)$, $(0,0)$ and $(1,-1)$, respectively; the zero energy BPS states are formed by tensoring these three currents with one of the Ramond ground states of $\bar V^{s\natural}$.  The zero modes of these three holomorphic currents generate a $SU(2)$ group which is a symmetry of $(V^\natural\times S^1)/(\delta,g)$ at the self-dual radius. This $SU(2)$ symmetry contains a $\ZZ_2$ subgroup acting by $k^1_L\to -k^1_L$ and fixing $k^0_L,k^0_R,k^1_R$ and can be identified with space-like T-duality. The fact that T-duality at the self-dual radius becomes part of a continuous symmetry group  is a familiar phenomenon in string theory. However, its occurrence in CHL orbifolds depends on the existence of the holomorphic currents generating this continuous group.  Any CFT whose radius $R$ is infinitesimally close to $R_{sd}$ can be obtained as a conformal perturbation of the one at the self-dual radius. The model at radius $R=R_{sd}(1+\epsilon+O(\epsilon^2))$ and the model at radius $1/R=R_{sd}(1-\epsilon+O(\epsilon^2))$ are then equivalent, since the deformations $\pm \epsilon R_{sd}$ are related by the $\ZZ_2$ subgroup of the $SU(2)$ symmetry. Thus, for $R$ in a neighbourhood of $R_{sd}$, the CFT $(V^\natural\times S^1)/(\delta,g)$ at radius $R$ is equivalent to the same CFT at radius $1/R$. But the latter is also equivalent to the CFT $(V'\times S^1)/(\delta,g')$. Following the chain of equivalences, we conclude that the CFTs based on $(V^\natural,g)$ and the one based on $(V',g')$ are equivalent at the same radius $R$. This can only happen if $V'\cong V^\natural$ and this isomorphism can be chosen so that $g$ and $g'$ are the same symmetry.

As a consequence, whenever the orbifold $V^\natural/\langle g^\lambda\rangle$ has no currents, both space-like T-duality and the Fricke involution $(W_{N\lambda},W_{N\lambda})\in \tilde{SO}^+(L)$ are self-dualities of the model. In particular, T-duality along the space-like circle alone corresponds to the Weyl reflection with respect to the (unique) positive real root of the BKM algebra. 

\bigskip

One of the most striking consequences of this construction is that whenever the orbifold $V'=V^\natural/\langle g\rangle$ is consistent (i.e. $\lambda=1$) and has no currents, then it is isomorphic to $V^\natural$.   Furthermore, the quantum symmetry $Q$ is in the same conjugacy class as $g$. This results extends to the case $\lambda>1$: if the orbifold $V'=V^\natural/\langle g^\lambda\rangle$ has no currents, then it is isomorphic to $V^\natural$, and the symmetry $gQ$ on the orbifold $V'$ is mapped, via this isomorphism, to an element in the same class as $g$.

More generally, one can show that for every $h\in \Aut(V^{\natural})$ that commutes with $g$ and fixes  the generator $a$ of the root $(1,-1)$, the induced symmetry $h'\in \Aut(V')$ is in the same Monster class as $h$. The idea is that, in this case, $h$ commutes with the $SU(2)$ symmetry at the self-dual radius containing the space-like T-duality.  Therefore, the isomorphism $V^\natural\cong V'$ induced by T-duality must map $h'\in \Aut(V')$ to $h\in \Aut(V^{\natural})$, so that these two symmetries must be in the same Monster class.

In Appendix \ref{pf:WeylCompatible}, we will prove that whenever $g$ and $h$ commute and have coprime orders, then $h$ fixes $a$
\be\label{WeylCompatible} \gcd(o(g),o(h))=1\qquad \Rightarrow\qquad h(a)=a\ ,
\ee so that the argument above applies\footnote{In the remainder of this section, we will use the shorthand $(a, b):=\text{gcd}(a, b)$.}.
 Now, consider some $g\in \Aut(V^\natural)$ of order $N$ and multiplier $\lambda$, which is the product $g=h_1h_2$ of symmetries $h_1,h_2$ of coprime orders $N_1,N_2$ and multipliers $\lambda_1=(N_1,\lambda)$ and $\lambda_2=(N_2,\lambda)$. Then, $e:=\frac{N_1}{\lambda_1}$ is an exact divisor of $\frac{N}{\lambda}$ and every exact divisor can be obtained in this way. Suppose that $V'=V^\natural/\langle h_1^{\lambda_1}\rangle$ has no currents. Then, $V'\cong V^\natural$ and the symmetry $g'=(Qh_1)h_2$ of $V'$ is in the same Monster class as $h_1h_2$. But this is exactly the condition for the CHL model based on $g$ to be self-dual under the Atkin-Lehner involution $(w_e,w_e)$ (notice that $\langle g^{N/e}\rangle=\langle h_1^{\lambda_1}\rangle$ ). We conclude that the CHL model based on $g$ is self-dual under $(w_e,w_e)$ if and only if $V^\natural/\langle g^{N/e}\rangle$ has no currents.

\subsection{Decompactification limits}\label{s:decompactify}

In this section we will analyze the behaviour of the index $Z_{g,\id}(T,U)$ at the boundary of the moduli space $\HH\times \HH$. We will focus on the region of the moduli space where $T_2,U_2\gg 1$, i.e. the low temperature regime $\beta\gg 1$. This condition assures that the leading contribution to the index is given by the vacuum state. When the temperature grows, one expects a Hagedorn-type phase transition. In particular, the infinite product formula for $Z_{g,\id}(T,U)$ diverges above the Hagedorn temperature, signalling that the Fock space description of the index cannot be trusted in this regime.  High temperature regions are most naturally described in dual pictures, for example using T-duality along the Euclidean time (thermal) circle. 

The first limit we consider is
\be\label{decompa} T\to i\infty \qquad U\text{ fixed},\quad U_2\gg 1\ ,
\ee  corresponding to
\be \beta,R\to \infty\qquad \frac{\beta}{R}\text{ fixed}.
\ee This can be interpreted as a decompactification limit where the volume of the Euclidean $\TT^2$ torus becomes infinite. For $R\gg 0$, the ground state has energy
\be\label{vacuumenergy} E_0=-\frac{R}{\sqrt{2}}\ ,
\ee and its contribution $e^{-\beta E_0}$ to the index diverges as $R\to \infty$
\be Z_{g,\id}(T,U)^{1/24}=T_{\id,g}(T)-T_{g,\id}(U)\sim e^{2\pi T_2} \to \infty\ ,\qquad R\to \infty\ .
\ee 

More generally, we are interested in studying the possible divergences of the index $Z_{g,\id}(T,U)$ in the limits
\be\label{Atkindecompa} W_e\cdot T\to i\infty\ , \qquad\qquad W_e\cdot U\text{ fixed},\quad  (W_e\cdot U)_2\gg 1\ ,
\ee where $W_e$ is some Atkin-Lehner involution for $\Gamma_0(N\lambda)$, whose action on $T,U$ is as in \eqref{action}. Eq.\eqref{Atkindecompa} can be interpreted as a decompactification limit for the T-dual CHL model $(V',g')$ related to $(V^\natural,g)$ by the duality $(W_e,W_e)$. 
The index $Z_{g,\id}(T,U)$ diverges in the limit \eqref{Atkindecompa} if and only if the McKay-Thompson series $T_{g}$ has a pole (i.e., is \emph{unbounded}) at the cusp $W_e\cdot i\infty$. 

Let us focus first on the Fricke involution $W_{N\lambda}$, for which \eqref{Atkindecompa} becomes
\be\label{Frickedecompa} T\to 0\ , \qquad\qquad  U\text{ fixed},\quad  U_2\ll 1\ .
\ee This is a  high temperature limit, whose direct analysis is quite complicated.  Therefore, it is useful to perform a T-duality $\bfT:T\leftrightarrow -\frac{1}{U}$ along the Euclidean time circle, so that \eqref{Frickedecompa} becomes
\be\label{decompa2} U\to i\infty\ ,\qquad T\text{ fixed},\quad T_2\gg 1\ ,
\ee which is a small radius, low temperature limit
\be \beta\to \infty,\ R\to 0,\qquad \beta R\text{ fixed}\ .
\ee Since the T-duality $\bfT$ along the Euclidean time is always a self-duality of any CHL model, the two limits \eqref{Frickedecompa} and \eqref{decompa2} are equivalent.

We can smoothly vary the moduli from the large radius limit \eqref{decompa} to the small radius limit \eqref{decompa2} while keeping the temperature low $\beta\gg 0$, so that the main contribution to the index is always given by the ground state of the model. It is clear that if the state with energy \eqref{vacuumenergy} remains the ground state of the theory as we shrink the radius $R$ all the way to zero, then the index $Z_{g,\id}(T,U)$ is necessarily bounded in the limit \eqref{decompa2}. As discussed in section \ref{s:Algebras}, this is the behaviour expected in the case where the dual theory $V'=V/\langle g^\lambda\rangle$ has currents.

Therefore, the only possibility for the index $Z_{g,\id}(T,U)$ to diverge in the small radius limit \eqref{decompa2} is that the model undergoes a phase transition of the kind described in section \ref{s:Weyl}. Recall that this happens if and only if the dual model $V'$ has no currents. Suppose such a phase transition occurs for a given CHL model. This means that there exists some excited state, whose energy gets lower than \eqref{vacuumenergy} as the radius $R$ crosses a critical (self-dual) value $R_{sd}$; such excited state becomes the new ground state for $R<R_{sd}$. The low temperature region $T_2,U_2>1$ is divided into two different phases, separated by a critical manifold \be T=\frac{U}{N\lambda}\ ,\ee of codimension one in the moduli space. The critical manifold can be characterized as the locus of the moduli space that is fixed under T-duality $T\leftrightarrow \frac{U}{N\lambda}$ along the  space-like circle. As discussed in section \ref{s:Weyl}, at the critical manifold, the model has an enhanced gauge symmetry containing, in particular, this space-like T-duality, which is therefore a self-duality of the model. The space-like T-duality exchanges the limits \eqref{decompa} and \eqref{decompa2}, which are therefore equivalent. In section \ref{s:Weyl}, we also argued that this phase transition implies that the orbifold theory $V'=V^\natural/\langle g^\lambda\rangle$ is isomorphic to $V^\natural$ and that the CHL model is self-dual under the Fricke involution $(W_{N\lambda},W_{N\lambda})$. This argument provides a neat physical understanding of the relationship between Fricke invariance and unboundedness of a McKay-Thompson series. The index $Z_g(T,U)$ is divergent in the limit \eqref{decompa2} if and only if there is a phase transition at the critical manifold $T=\frac{U}{N\lambda}$, and the latter occurs if and only if the Fricke involution is a self-duality.

\bigskip

Similar arguments apply to the limits  \eqref{Atkindecompa} for more general Atkin-Lehner involutions $W_e=\frac{1}{\sqrt{e}}\left(\begin{smallmatrix}
ae & b\\ N\lambda c & de
\end{smallmatrix} \right)$, where $e||N\lambda$. Let us sketch the basic steps of the reasoning. By composing the Atkin-Lehner duality and the T-duality $\bfT$ along the Euclidean time circle, we obtain a T-duality along a circle $S$ of the Euclidean torus $\TT^2$. For a general $W_e$, the circle $S$ is not aligned along the space-like direction. The fixed locus for the T-duality along $S$ is a critical submanifold of codimension $1$ in the moduli space where phase transitions can possibly occur. 
In general, the critical manifold does not intersect the low temperature region $T_2,U_2\gg 1$, so that its physical interpretation is not clear. For this reason, it is useful to introduce a different duality frame, where the Euclidean torus $\TT^2$ is rotated in such a way that the circle $S$ is the space-like direction. The rotation acts by
\be T\to T'=T\qquad \qquad U\to U'=\frac{-c U+ae}{-dU+b\frac{N}{e}}\ ,
\ee on the moduli, and the critical manifold is now given by the equation
\be\label{AtkinCritical} T'=\frac{U'}{e}\ .
\ee 
Since the rotation mixes the space and the Euclidean time direction, the physics in the rotated frame looks quite different than in the original one. In particular,  the critical manifold \eqref{AtkinCritical} now intersects the low temperature region $T_2',U_2'\gg 1$ in the rotated frame.   Furthermore, the CHL orbifold in the rotated frame involves shifts both in the space and in the time direction. The net result is that the index $Z_{g,\id}(T,U)$ is interpreted in the rotated Frame as a `twisted-twined index' 
\be Z_{g,\id}(T,U)=Z_{h_1,h_2}(T',U')\ .
\ee Here, $Z_{h_1,h_2}$ is the index `twined' by the symmetry $h_2=g^{e}$ in the CHL model $(V^\natural, h_1)$, with $h_1=g^{N\lambda/e}$. 
The limit \eqref{decompa} is interpreted as a large radius, low temperature limit both in the original and in the rotated frame. The limit \eqref{Atkindecompa} is equivalent (upon taking T-duality $\bfT$ in the original Euclidean time direction) to $U'\to i\infty$, with $T'$ fixed, which is interpreted as a low temperature, small radius limit in the rotated frame. The index $Z_{h_1,h_2}(T',U')$ can have a pole in this small radius limit if and only if  the  CHL model $(V^\natural,h_1)$ undergoes a phase transition at the critical submanifold \eqref{AtkinCritical}. As argued in section \ref{s:Weyl}, the occurrence of this phase transition implies the self-duality of the CHL model  $(V^\natural,h_1)$ under the associated space-like T-duality \be\label{spaceTduality} T'\leftrightarrow \frac{U'}{e} ,\ee and also self-duality of the CHL model $(V^\natural,g)$ under the Atkin-Lehner involution $(W_e,W_e)$. We conclude that the index $Z_{g,\id}(T,U)$ diverges in the  limit \eqref{Atkindecompa} if and only if the CHL model is self-dual under $(W_e,W_e)$; in this case the limits \eqref{Atkindecompa} and \eqref{decompa} are physically equivalent.

\subsection{The McKay-Thompson series are Hauptmoduln}\label{s:McHaupt}
The Monstrous moonshine conjecture claims that the McKay-Thompson series $T_{\id,g}$ are modular under a group $\Gamma_g\subset SL(2,\RR)$ of genus zero in the normalizer of $\Gamma_0(N)$ and that they are Hauptmoduln for such group, i.e. degree $1$ holomorphic maps from $\overline{\HH /\Gamma}_g$ to the Riemann sphere $\overline \CC$.  In this section, we will derive this conjecture using the properties of the index $Z_{g,\id}(T,U)$. It is useful to work with the \emph{eigengroup} $\Gamma_g'$ under which $T_{\id,g}$ is invariant up to a $24$-th root of unity, rather than the \emph{fixing group} $\Gamma_g$ under which $T_{\id,g}$ is exactly invariant.

\bigskip

The group of self-dualities $G_g$ of the CHL model $(V^\natural,g)$ contains $\Gamma_0(N\lambda)\times \Gamma_0(N\lambda)$ as well as the transformations \eqref{lambtransf}. Since $|Z_{g,\id}(T,U)|^2$ is invariant under $G_g$, the series $T_{\id,g}$ must be invariant (up to a phase) under the projection $\text{proj}_1(G_g)$ of this group onto the first factor of $SL(2,\RR)\times SL(2,\RR)$. This projection contains the group $\Gamma_0(N|\lambda)$, so that $T_{\id,g}$ must be modular, up to a multiplier, under this group. Modularity of $T_{\id,g}$ under $\Gamma_0(N)$ (up to a multiplier) also follows by standard CFT arguments -- in fact, this was one of our starting points in the construction of the CHL orbifolds. In particular, $T_{\id,g}$ is invariant under $\Gamma_0(N\lambda)$ and transforms with a multiplier of order $\lambda$ under $\Gamma_0(N)$ (see appendix \ref{a:TwistAndTwin}).  The group $\Gamma_0(N|\lambda)$ is generated by $\Gamma_0(N)$ together with the transformation $\tau\mapsto \tau+\frac{1}{\lambda}$. Using Eq.\eqref{unaphase} in section \ref{s:explformulas}, one can prove that 
\be T_{\id,g}\left(\tau+\frac{1}{\lambda}\right)=e^{-\frac{2\pi i }{\lambda}}T_{\id,g}(\tau)\ .
\ee This implies that $T_{\id,g}$ is a modular form for $\Gamma_0(N|\lambda)$ with multiplier of order $\lambda$, as expected.

\bigskip

Let us consider the limits of the McKay-Thompson series $T_{\id,g}$ at the boundary of the quotient space $\overline{\HH/\Gamma_0(N|\lambda)}$. This boundary consists of a finite number of points, corresponding to the orbits of $\QQ\cup\{\infty\}$  (cusps)  under $\Gamma_0(N|\lambda)$. We say that a McKay-Thompson series is \emph{bounded} (respectively, \emph{unbounded}) at a certain cusp if the limit of $T_{\id,g}(\tau)$ at the cusp is finite (resp., infinite).
 It is obvious by construction that $T_{\id,g}$ has a single pole at the cusp at $\infty$. Furthermore, at the cusp $0$, $T_{\id,g}(\tau)$ is either bounded  or it has a single pole with coefficient $1$. More precisely,
\be\label{zerocusp} T_{g,\id}(\tau)=\begin{cases} O(q^0) & \text{if }V^\natural/\langle g^\lambda\rangle\text{ contains currents}\\
q^{-\frac{1}{N\lambda}}+0+O(q^{\frac{1}{N\lambda}}) & \text{otherwise} \ .\end{cases}
\ee

\bigskip

Let us consider now the limit of $T_{\id,g}$ at the other cusps, different from $0$ and $\infty$. Consider first the case $\lambda=1$. Each cusp for $\Gamma_0(N)$  has a representative of the form $\frac{a}{c}$, where $a,c\in \ZZ_{>0}$ with $c|N$ and $(a,c)=1$ (of course, it can happen that two rational numbers of this form are equivalent under $\Gamma_0(N)$). The group $\Gamma_0(N|\lambda)$ is conjugated to $\Gamma_0(N/\lambda)$ by $\left(\begin{smallmatrix}
\lambda & 0\\ 0 & 1
\end{smallmatrix}
\right)$, so that each cusp of $\Gamma_0(N|\lambda)$ has a representative of the form $\frac{a}{\lambda c}$, where $a,c\in \ZZ_{>0}$, with $c|\frac{N}{\lambda}$ and $(a,c)=1$.
In particular, all cusps of the form $\frac{a}{N\lambda}$, with $(a,N\lambda)=1$, are equivalent to $\infty$ and all cusps of the form $\frac{a}{\lambda}$ are equivalent to $0$; thus, another useful set of representatives is given by $0,\infty$ and rationals of the form $\frac{a}{\lambda c}$ as above, with the restriction $1<c<N/\lambda$. 

Some of the cusps $\cc=\frac{a}{c\lambda}$ are related to $\infty$ by some Atkin-Lehner involution for $\Gamma_0(N|\lambda)$, i.e. $\cc=w_\cc\cdot \infty$. Such an involution exists if and only if $c$ is an exact divisor of $N/\lambda$. In particular, $0$ is always related to $\infty$ by the Fricke involution.

Using the properties above, we can prove the following:

\begin{lemma}\label{t:lecuspe} If $T_{\id,g}$ is unbounded at a cusp $\cc$ for $\Gamma_0(N|\lambda)$, then there exists an Atkin-Lehner involution $w_\cc$ for $\Gamma_0(N|\lambda)$ such that $\cc=w_\cc\cdot \infty$.
\end{lemma}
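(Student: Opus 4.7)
The plan is to convert the question about unboundedness of $T_{\id,g}$ into a question about divergence of the spacetime index $Z_{g,\id}$, and then invoke the physical decompactification analysis of Section \ref{s:decompactify} together with the algebraic classification of T-dualities in Theorem \ref{th:notsoeasy} to conclude.

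First, I would use the identity $Z_{g,\id}(T,U)^{1/24} = T_{\id,g}(T) - T_{g,\id}(U)$ established in Section \ref{s:Algebras}. For $U$ chosen at a generic interior point of the moduli space (so that $T_{g,\id}(U)$ is finite), unboundedness of $T_{\id,g}$ at $\cc$ translates directly into divergence of $Z_{g,\id}(T,U)$ as $T \to \cc$. It therefore suffices to characterize the cusps at which the index itself diverges.

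Second, to study a given cusp $\cc$, I would pick some $\gamma \in SL(2,\RR)$ with $\gamma \cdot i\infty = \cc$ and reinterpret the limit $T \to \cc$ as a limit in a suitable duality frame. The infinite-product formula \eqref{infiniteprod} makes the trivial divergence at $T \to i\infty$ manifest through the vacuum contribution $e^{-2\pi i T}$; for any other cusp, the dominant divergent contribution must come from a BPS oscillator whose energy crosses zero as the moduli are continuously deformed from $i\infty$ to $\cc$. By the analysis of Sections \ref{s:Weyl} and \ref{s:decompactify}, such a level-crossing corresponds to a Weyl reflection with respect to a positive real root of the BKM algebra, and physically to a phase transition accompanied by an enhanced gauge symmetry at the critical submanifold. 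The corresponding T-duality must therefore be a self-duality of the CHL model, i.e.\ an element of $G_g$.

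Third, I would invoke the classification in Theorem \ref{th:notsoeasy} of the automorphism group $\tilde{SO}^+(L)$: modulo the stabilizer of $i\infty$ in $\text{proj}_1$, which is generated by $\Gamma_0(N\lambda)$ together with the ``$\lambda$-transformations'' \eqref{lambtransf}, the only remaining $\text{proj}_1$-generators are the Atkin-Lehner involutions $w_e$ for $e \| N/\lambda$. These are exactly the Atkin-Lehner involutions of $\Gamma_0(N|\lambda)$, and the one that sends $i\infty$ to $\cc$ is the desired $w_\cc$.

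The main obstacle is to make the physical argument in the second step mathematically rigorous: one must rule out the possibility that the sum over BPS states conspires to produce a divergence at a cusp $\cc$ that is not in the Atkin-Lehner orbit of $i\infty$. Since the infinite product \eqref{infiniteprod} does not converge uniformly near all cusps, its singularity structure at $\cc$ is not directly visible; instead, the argument must proceed through the denominator interpretation of $Z_{g,\id}$, tracking which real roots of $\mathfrak{m}_g$ can produce a polar contribution. I expect that a careful case analysis of the two regimes identified in Section \ref{s:explformulas} (with and without currents in $V^\natural/\langle g^\lambda\rangle$), combined with the Weyl-vector properties of $\mathfrak{m}_g$ and the anticommutator identity \eqref{Casimiro}, will close this gap and complete the proof.
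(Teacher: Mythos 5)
There is a genuine gap, and it sits exactly where you flag it yourself. Your strategy converts unboundedness of $T_{\id,g}$ at $\cc$ into divergence of $Z_{g,\id}$ and then appeals to the decompactification/phase-transition analysis of Sections \ref{s:Weyl} and \ref{s:decompactify} plus the classification in Theorem \ref{th:notsoeasy}. But the decompactification analysis is only ever set up at limits of the form \eqref{Atkindecompa}, i.e.\ at cusps already of the form $w_e\cdot i\infty$: it tells you \emph{whether} the index diverges at such a cusp (currents vs.\ no currents), not that divergence is impossible elsewhere. Likewise Theorem \ref{th:notsoeasy} classifies automorphisms of the winding-momentum lattice; it says nothing about which cusps of $\HH/\Gamma_0(N|\lambda)$ can carry a pole of $T_{\id,g}$ unless you have already shown that any divergence is induced by a lattice automorphism --- which is what needs proving. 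Your final paragraph, ``one must rule out the possibility that the sum over BPS states conspires to produce a divergence at a cusp $\cc$ that is not in the Atkin-Lehner orbit of $i\infty$,'' is not a remaining technicality: it \emph{is} the lemma. Note also that the lemma is an input to Theorem \ref{t:mainth}, which is where the physical self-duality argument is actually deployed, so routing the lemma's proof back through that machinery risks circularity.

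The paper closes this by a short, purely arithmetic argument that does not mention the index at all. Write the cusp as $\cc=\frac{a}{\lambda c}$ with $c\mid\frac{N}{\lambda}$, $(a,c)=1$; the goal is to show $c\,\|\,\frac{N}{\lambda}$, since exact divisibility is precisely the condition for an Atkin-Lehner involution $w_e$ (with $e=\frac{N}{\lambda c}$) sending $\infty$ to $\cc$ to exist. Unboundedness at $\cc$ means $T_{g^{\lambda c},g^{\delta}}(\tau)=\xi A q^{-\lambda c/N}+O(q^0)$ with $A\neq 0$. One then computes the phase acquired under $\tau\mapsto\tau+k$ for a judiciously chosen $k=\omega N/(\lambda^2c^2)$ in two ways: directly from the polar term, giving $e^{-2\pi i k\lambda c/N}$, and via the $\Gamma_0(N)$-covariance of $T_{\id,g}$ with its order-$\lambda$ multiplier, giving $e^{2\pi i\omega\E_g/\lambda}$. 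Equating the two yields the congruence $\frac{\omega}{c\lambda}\equiv\frac{-\E_g\omega}{\lambda}\bmod\ZZ$, and a prime-by-prime valuation argument then forces $v_p(\lambda c)=v_p(N)$ for every prime $p\mid c$, i.e.\ $c$ exactly divides $N/\lambda$. If you want to keep your index-based framing, you would still need to supply an argument of exactly this type (tracking the multiplier system at an arbitrary cusp) to exclude the non-Atkin-Lehner cusps; the BKM/Weyl-vector machinery does not substitute for it.
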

See Appendix \ref{pf:lecuspe} for the proof.

\bigskip

We are now ready to prove that the fixing groups $\Gamma_g$ are genus zero and that the McKay-Thompson series are the corresponding Hauptmoduln. The main ingredient is the result of section \ref{s:Weyl} that the Atkin-Lehner dualities $(w_e,w_e)\in \tilde{SO}^+(L)$, $e||\frac{N}{\lambda}$ are self-dualities whenever the orbifold $V^\natural/\langle g^{N/e}\rangle$ has no currents. 

\begin{theorem}\label{t:mainth} Let $g\in \MM$ have order $N$ and $T_{\id,g}$ be modular under $\Gamma_0(N|\lambda)$. Then the eigengroup $\Gamma_g'$ of $T_{\id,g}$ is the projection $\mathrm{proj}_1(G_g)$ to the first $SL(2,\RR)$ factor of the group $G_g\subset SL(2,\RR)\times SL(2,\RR)$ of self-dualities of the corresponding CHL model. Furthermore, the fixing group $\Gamma_g$ has genus zero and $T_{\id,g}$ is a Hauptmodul.
\end{theorem}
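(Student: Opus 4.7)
The plan is to establish the theorem in three steps.

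Step 1 is the identification $\Gamma_g'=\mathrm{proj}_1(G_g)$. Here I would work with the additive denominator formula \eqref{additivedenomin}, $Z_{g,\id}(T,U)^{1/24}=T_{\id,g}(T)-T_{g,\id}(U)$. Invariance of $|Z_{g,\id}|^2$ under $G_g\subset SL(2,\RR)\times SL(2,\RR)$ implies that $Z_{g,\id}^{1/24}$ is invariant up to a $24$th root of unity; projecting onto the $T$-factor then yields that $T_{\id,g}$ is invariant (up to such a root of unity) under $\mathrm{proj}_1(G_g)$, giving $\mathrm{proj}_1(G_g)\subseteq\Gamma_g'$. The reverse containment uses the Conway--Norton description of the normalizer of $\Gamma_0(N)$ in $SL(2,\RR)$, whose generators are $\Gamma_0(N|\lambda)$ together with Atkin--Lehner involutions $w_e$, $e||N/\lambda$. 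Since $\Gamma_0(N|\lambda)\subseteq\mathrm{proj}_1(G_g)$ via the transformations \eqref{lambtransf}, it remains to show that $w_e\in\Gamma_g'$ implies $(w_e,w_e)\in G_g$. For this I would apply the divergence/self-duality equivalence of section~\ref{s:decompactify}: $w_e\in\Gamma_g'$ forces $T_{\id,g}$ to be unbounded at $w_e\cdot\infty$ (since it is unbounded at $\infty$), and unboundedness of $Z_{g,\id}^{1/24}$ at $w_e\cdot T\to i\infty$ is precisely the condition that $(w_e,w_e)$ be a self-duality of the CHL model.

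Step 2 is to show that on $\overline{\HH/\Gamma_g'}$ the function $T_{\id,g}$ is unbounded at a single cusp class, that of $\infty$. By Lemma~\ref{t:lecuspe}, every cusp of $\Gamma_0(N|\lambda)$ at which $T_{\id,g}$ is unbounded has a representative $w_\cc\cdot\infty$ for some Atkin--Lehner involution $w_\cc$. The same divergence/self-duality correspondence gives $(w_\cc,w_\cc)\in G_g$, hence $w_\cc\in\mathrm{proj}_1(G_g)=\Gamma_g'$, so the cusp $w_\cc\cdot\infty$ is $\Gamma_g'$-equivalent to $\infty$.

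Step 3 is to conclude the genus zero and Hauptmodul properties. The function $T_{\id,g}$ is meromorphic on $\overline{\HH/\Gamma_g}$, with polar part $q^{-1}$ at $\infty$ and vanishing constant term, and by Step 2 is bounded at every cusp outside the orbit of $\infty$. The resulting degree one meromorphic map $T_{\id,g}:\overline{\HH/\Gamma_g}\to\overline{\CC}$ forces $\overline{\HH/\Gamma_g}$ to be isomorphic to the Riemann sphere, and $T_{\id,g}$ to generate its function field.

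\emph{Main obstacle.} The crux is the divergence/self-duality dictionary of section~\ref{s:decompactify}, which is what converts the analytic datum (a cusp where $T_{\id,g}$ blows up) into the group-theoretic datum (membership of the corresponding Atkin--Lehner involution in $G_g$); without this bridge the stringy T-duality group and the classical eigengroup remain disconnected. The remaining work is a standard packaging of the Conway--Norton normalizer together with the cusp analysis of Lemma~\ref{t:lecuspe}. A subsidiary bookkeeping issue when $\lambda>1$ is the passage from $\Gamma_g'$ to $\Gamma_g$, which I would handle by noting that the leading expansion $T_{\id,g}=q^{-1}+O(q)$ pins down the stabilizer of $\infty$, controlling how this cusp splits under the finite cover $\overline{\HH/\Gamma_g}\to\overline{\HH/\Gamma_g'}$.
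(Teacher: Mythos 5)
Your proposal is correct and follows essentially the same route as the paper's proof: the containment $\mathrm{proj}_1(G_g)\subseteq\Gamma_g'$ from invariance of the index, the reverse containment via the Conway--Norton generators and the unboundedness-implies-self-duality dictionary of section~\ref{s:decompactify}, Lemma~\ref{t:lecuspe} to locate the unbounded cusps, and the single-pole argument for the Hauptmodul property. The only minor difference is that the paper handles the passage from $\Gamma_g'$ to $\Gamma_g$ explicitly, by composing $\gamma'$ with the translation $\tau\mapsto\tau+r/\lambda$ in the stabilizer of $\infty$ to cancel the multiplier, which is the same idea you sketch more loosely.
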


\begin{proof} It is clear that $\text{proj}_1(G_g)$  is contained in the eigengroup $\Gamma'_g$, since $G_g$ leaves the index $Z_{g,\id}(T,U)$ invariant up to a phase. Furthermore, we argued above that $\text{proj}_1(G_g)$ contains $\Gamma_0(N|\lambda)$ as a normal subgroup. Since $\Gamma_g'$ is generated by $\Gamma_0(N|\lambda)$ and some Atkin-Lehner involutions $w_e$, $e||\frac{N}{\lambda}$, we only have to prove that, whenever $w_e\in \Gamma_g'$ then $(w_e,w_e)\in G_g$. To this aim, notice that if $w_e\in \Gamma_g'$, then the cusp $w_e\cdot \infty$ is unbounded. This happens only if the orbifold $V^\natural/\langle g^{N/e}\rangle$ has no currents. By the argument in section \ref{s:Weyl}, this implies that the model is self-dual under $(w_e,w_e)$. Therefore, $w_e\in \Gamma_g'$ implies $w_e\in\text{proj}_1(G_g)$, so that $\Gamma_g'=\text{proj}_1(G_g)$.

As for the genus zero properties, by lemma \ref{t:lecuspe} the only possibly unbounded cusps for $T_{\id,g}$, modulo $\Gamma_0(N|\lambda)$, are at $w_e\cdot \infty$ for some $e||\frac{N}{\lambda}$. This cusp is unbounded if and only if the orbifold $V^\natural/\langle g^{N/e}\rangle$ has no currents, and in this case the group of self-dualities $G_g$ contains  $(w_e,w_e)$ and the eigengroup $\Gamma'_g$ contains $w_e$. Therefore, all unbounded cusps for $T_{\id,g}$ are related to $\infty$ by some $\gamma'$ in the modular eigengroup $\Gamma_g'$. It is easy to see that every such unbounded cusp $\cc$ must be related to $\infty$ also by an element $\gamma$ of the fixing group $\Gamma_g$. Indeed, suppose that $\cc = \gamma'\cdot \infty$, $\gamma'\in \Gamma'_g$, and  that the series $T_{\id,g}$ is invariant under $\gamma'$ up to a phase $e^{\frac{2\pi i  r}{\lambda}}$. Using the fact that the element $\left(\begin{smallmatrix}
1 & 1/\lambda\\ 0 & 1
\end{smallmatrix}  \right)\in \Gamma_g'$ fixes $\infty$ and has multiplier $e^{-\frac{2\pi i }{\lambda}}$, it is clear that $\gamma:=\gamma'\circ  \left(\begin{smallmatrix}
1 & r/\lambda\\ 0 & 1
\end{smallmatrix}  \right)$ has trivial multiplier and still maps $\infty$ to $\cc$. Since $T_{\id,g}$ has only a single pole at $\infty$ modulo $\Gamma_g$, then it must be one-to-one as a holomorphic function from $\overline{\HH/\Gamma}_g$ to the Riemann sphere $\overline\CC$. Therefore, $\overline{\HH/\Gamma}_g$ must have the topology of a sphere and $T_{\id,g}$ is a Hauptmodul.
\end{proof}

\section{Outlook}
\noindent The results of this paper point to many interesting directions for future research. One of the main points of the paper was to provide a physical derivation of the genus zero property of Monstrous moonshine in the context of the spacetime properties of certain heterotic CHL-models. Another physical interpretation of genus zero was proposed previously by Duncan and Frenkel \cite{MR2905139}. Inspired by an earlier conjecture by Witten \cite{Witten:2007kt}, Duncan and Frenkel proposed that there exists a class of twisted chiral quantum gravity theories in $AdS_3$, whose partition functions are given by the McKay-Thompson series $T_g(\tau)$. If true, this implies that all the McKay-Thompson series have a Rademacher sum, interpreted physically as a sum over black hole states in the gravitational theory. Thus, the Duncan-Frenkel conjecture implies that the genus zero property of moonshine can rephrased as the statement that each McKay-Thompson series $T_g$ coincides with the Rademacher sum attached to the invariance group $\Gamma_g$. Given that our analysis identifies the product formula for the McKay-Thompson series with the supersymmetric index counting BPS-states, it is natural to expect that there is a relation between our results and those of Duncan-Frenkel. In particular, we would expect that whenever our twisted partition function $Z_g(T,U)$ is \emph{unbounded}, then the associated McKay-Thompson series $T_g(\tau)$ is Rademacher summable. From this point of view it would also be interesting to write the one-loop integrals $S_{1-loop}$ as explicit BPS-state sums, using the formalism developed in \cite{Angelantonj:2011br,Angelantonj:2012gw,Angelantonj:2013eja}.

We have shown that the space of BPS-states in our heterotic CHL-models form a module for the Monstrous Lie algebras $\mathfrak{m}_g$. This differs from the original proposal of Harvey and Moore, whose starting point was the observation that for string theory compactified on a manifold $X$, there is a product on the space of BPS-states itself,
\be
\mathcal{H}_{\mathrm{BPS}}\otimes \mathcal{H}_{\mathrm{BPS}}\to \mathcal{H}_{\mathrm{BPS}},
\label{BPSproduct}
\ee
which is a realization of the fact that two BPS-states can combine into a bound state which is also BPS. The space of BPS-states is graded by the charge lattice $\Gamma$ (essentially the integer cohomology $H^{*}(X; \mathbb{Z})$):
\be
\mathcal{H}_{\mathrm{BPS}}=\bigoplus_{\gamma\in \Gamma}\mathcal{H}_{\mathrm{BPS}}(\gamma).
\ee
For  states $\mathcal{B}_1, \mathcal{B}_2$ of charges $\gamma_1, \gamma_2$ their product $\mathcal{B}_1\otimes \mathcal{B}_2 \mapsto \mathcal{B}_3$
yields a  bound state $\mathcal{B}_3$ of charge $\gamma_3=\gamma_1+\gamma_2$. Harvey and  Moore argued that the product (\ref{BPSproduct}) on the space of BPS-states is defined via the \emph{correspondence conjecture}, which asserts that the three BPS-states $(\mathcal{B}_1, \mathcal{B}_2, \mathcal{B}_3)$ must fit into an
exact sequence $0\to \mathcal{B}_1\to \mathcal{B}_3\to \mathcal{B}_2\to 0$
which means that the bound state $\mathcal{B}_3$ should be viewed as an \emph{extension} of $\mathcal{B}_1$ and $\mathcal{B}_2$. The product on $\mathcal{H}_{\mathrm{BPS}}(\gamma)$ should  reflect this property and 
a natural candidate is therefore
\be
\mathcal{B}_1\otimes \mathcal{B}_2=\sum_{\mathcal{B}_3} |\{ 0\to \mathcal{B}_1\to \mathcal{B}_3\to \mathcal{B}_2\to 0\}| \mathcal{B}_3,
\ee
where we consider BPS-states as cohomology classes $\mathcal{B}_i\in H_{L^2}^{*}(\mathcal{M}(\gamma))$, where $\mathcal{M}(\gamma)$ is the moduli space of (semi-)stable sheaves on $X$, and the structure constants are given by the dimension of 
the space of extensions. This product is the characteristic feature of a \emph{Hall algebra}, as was first noted in \cite{Fiol:2000wx}. It would be very interesting to understand the precise relation between our analysis and the BPS-algebra of Harvey and Moore. Is it possible to endow the Monstrous Lie algebras $\mathfrak{m}_g$ with a Hall algebra structure?

Carnahan has recently proven \cite{2008arXiv0812.3440C,Carnahan2014,2012arXiv1208.6254C} Norton's generalized moonshine conjecture, showing in particular that all generalized McKay-Thompson series $T_{g,h}(\tau)$ are Hauptmoduln for genus zero subgroups $\Gamma_{g,h}\subset SL(2,\mathbb{R})$. To be precise, this may be considered as a ``weak version'' of generalized moonshine, since there are possible modular phases of $T_{g,h}$ which are not specified by Carnahan's theorem. It was conjectured by Gannon that these phases can be completely captured by a 3-cocycle $\alpha$, determining a class in $H^3(\mathbb{M}, U(1))$. However, very little is known about this cohomology group so at present this seems out of reach. The  generalized version of Mathieu moonshine for $M_{24}$ \cite{Eguchi:2010ej,Cheng:2010pq,Gaberdiel:2010ch,Gaberdiel:2010ca,Eguchi:2010fg,Gannon:2012ck} was established in \cite{Gaberdiel:2013nya,Gaberdiel:2013nya,Persson:2013xpa}, where an explicit cocycle $\alpha\in H^3(M_{24}, U(1))$ was constructed and shown to reproduce all modular phases of the twisted twining genera $\phi_{g,h}(\tau, z)$. Thus, this may be viewed as a ``strong'' form of generalized moonshine for $M_{24}$. It would be very interesting to investigate whether the Monstrous CHL-models constructed in the present paper can be used to shed light on the analogous strong form of generalized moonshine for the Monster group $\mathbb{M}$.

It would be illuminating to attach a better physical interpretation to the Lie algebra homology operators $d, d^{\dagger}$, considered in \cite{GarlandLepowsky} and employed in \ref{ss:algebra}. In particular, we notice that their definition is reminiscent of (part of) the usual definition of the standard worldsheet BRST operator, if one makes the substitutions $b \mapsto \eta, c \mapsto {\partial \over \partial \eta}$. Of course, the $\eta_a, {\partial \over \partial \eta_a}$ have the statistics of ordinary fermions, while the BRST operator is comprised of ghost fields. (In fact, \cite{GarlandLepowsky} introduce a second, equivalent, complex with operators $D, D^{\dagger}$ that match the usual worldsheet BRST operator and its conjugate after the aforementioned substitution; the arguments of \ref{ss:algebra} carry through with these operators in the same way, but with a larger technical burden.) If we take this correspondence seriously, an exciting, but currently quite speculative, possibility is that $d, d^{\dagger}$ (or $D, D^{\dagger}$) are indeed BRST-like operators for the spontaneously broken gauge symmetry and reduce to ordinary BRST operators in the tensionless limit, when the gauge symmetry is restored. Regardless, it would be interesting to elucidate the physical importance of these operators more fully.

Finally, an outstanding open question is to determine whether or not our Monstrous heterotic model has a IIA dual and, if so, what the dual theory is. The absence of currents in this model makes this a difficult question to approach. A perhaps more manageable, yet related, question is to investigate analogous constructions with $c=24$ CFTs with currents on the left, arising from compactification on the Niemeier lattices. If this latter class of theories admit IIA duals whose geometry includes $K3$ surfaces, it is conceivable that our methods may help shed light on the proposed relationships between umbral moonshine \cite{Cheng:2012tq, Cheng:2013wca} (including $M_{24}$ moonshine) and $K3$ geometry, by string-string duality  \cite{Cheng:2013kpa, Datta:2015hza, Harrison:2013bya,Cheng:2014zpa, Paquette:2014rma, cheng2015landau}.

\section*{Acknowledgements}

\noindent We are grateful to Richard Borcherds, Scott Carnahan, Miranda Cheng, Matthias Gaberdiel, Terry Gannon, Shamit Kachru and Michael Tuite for helpful discussions and correspondence. We also thank Matthias Gaberdiel for sharing some related unpublished notes on heterotic compactifications on $V^\natural$. N.M.P. is supported by a National Science Foundation Graduate Research Fellowship. N.M.P. and R.V. thank the University of Amsterdam for hospitality during the development of this work, and N.M.P. gratefully acknowledges the Delta Institute for Theoretical Physics for additional support. D.P. thanks Stanford University for hospitality while this work was initiated, and all of us thank the organizers and participants of the workshop on ``(Mock) Modularity, Moonshine and String Theory'' at Perimeter Institute, and the LMS-EPSRC Symposium on ``New Moonshines, Mock Modular Forms and String Theory'', at Durham University.

\section*{Appendices}
\appendix

\noindent In these appendices, we will present several technical proofs referenced in the main text. In addition, Appendix \ref{a:TwistAndTwin} contains details about the twisted twining genera and their multiplier systems. Throughout these appendices, as in several parts of the main text, we will employ the notation $(a, b):= \text{gcd}(a, b)$.

\section{Proofs}

\subsection{Proof of Theorem \ref{th:easy}}\label{pf:easy}

Let us consider the subgroup of automorphisms of $L$ given by elements of the form $(\gamma,1)\in SL(2,\RR)\times SL(2,\RR)$. It is easy to see that the image of the action
\be X\mapsto \begin{pmatrix}
a & b\\ c & d
\end{pmatrix}X\ ,\qquad X\in L\ ,
\ee is in $L$ for all $X\in L$ if and only if $\left(\begin{smallmatrix}
a & b\\ c & d
\end{smallmatrix}\right)\in \Gamma_0(N)$. By an analogous reasoning, we find that an element $(1,\gamma_2)$ is an automorphism of $L$ if and only if $\gamma_2\in \Gamma_0(N)$. Therefore,
\be \Gamma_0(N)\times \Gamma_0(N)\subseteq \tilde{SO}^+(L)\ .
\ee Furthermore, if $(\gamma_1,\gamma_2)\in \tilde{SO}^+(L)$, then for any $\gamma\in\Gamma_0(N)$ also $(\gamma_1\gamma\gamma_1^{-1},1)$ is in $\tilde{SO}^+(L)$, so that $\gamma_1\gamma\gamma_1^{-1}\in \Gamma_0(N)$. Analogously, $\gamma_2\gamma\gamma_2^{-1}\in \Gamma_0(N)$. Therefore, both $\gamma_1$ and $\gamma_2$ are contained in the normalizer $\hat\Gamma_0(N)$ of $\Gamma_0(N)$ in $SL(2,\RR)$ and
\be \Gamma_0(N)\times \Gamma_0(N)\subseteq \tilde{SO}^+(L)\subseteq \hat\Gamma_0(N)\times \hat\Gamma_0(N)\ .
\ee 
Let us consider the action of a generic element of $\hat\Gamma_0(N)\times \hat\Gamma_0(N)$ on $L$
\be X\mapsto \frac{1}{\sqrt{ee'}}\begin{pmatrix}
ae & b/h\\ cN/h & de
\end{pmatrix} X\begin{pmatrix}
a'e' & b'/h\\ c'N/h & d'e'
\end{pmatrix}\ ,\qquad X\in L\ ,
\ee where, as above, $a,b,c,d,a',b',c',d'\in \ZZ$, $e,e'\in \ZZ_{>0}$, $e|| \frac{N}{h^2}$, $e'||\frac{N}{h^2}$ and
\be\label{detcond} ade-bc \frac{N}{eh^2}=1\qquad a'd'e'-b'c' \frac{N}{e'h^2}=1\ .
\ee
A direct calculation shows that the image of this action is in $L$ for all $X\in L$ if and only if $ee'$ is a perfect square 
\be ee'=z^2\qquad z\in \ZZ_{>0}
\ee and the following congruences hold
\be\label{congru} \frac{aec'}{hz}\in \ZZ\qquad \frac{deb' }{hz}\in \ZZ\qquad \frac{ba'e'}{hz}\in \ZZ\qquad \frac{cd'e'}{hz}\in \ZZ\ .
\ee Notice that, since both $e$ and $e'$ divide $N/h^2$, then $ee'|\frac{N^2}{h^4}$, so that also $z$ divides $N/h^2$.

From these relations, we can now restrict the form of the matrices that comprise $\tilde{SO}^+(L)$ to be exactly that of the matrices in Theorem \ref{th:easy}. Let $p$ be a prime divisor of $hz$, and $p^r$, with $r>0$, be the maximal power for which $p^r|hz$. Let us consider two cases.
\begin{enumerate}
\item Assume first that $p|c'$. Then, for the second of \eqref{detcond} to be satisfied, it is necessary that $a'$, $d'$, and $e'$ be all coprime to $p$. By \eqref{congru}, this implies that $p^r|b$ and $p^r|c$. But then, by the first of \eqref{detcond},  $a$, $d$, $e$ must be all coprime to $p$ (and since both $e$ and $e'$ are coprime to $p$, then also $z$ is and $p^r|h$). Thus, by \eqref{congru}, $p^r|b'$ and $p^r|c'$. Using an analogous reasoning, we conclude that if any of $b$, $c$, $b'$, or $c'$ is divisible by $p$, then all of them must be divisible by $p^r$, and $a,d,e,a',d',e'$, and $z$ are coprime to $p$. 
\item Now, suppose that $p$ does not divide any of $b$, $c$, $b'$, or $c'$. Then, by \eqref{congru}, $p^r$ divides $ae$, $de$, $a'e'$, and $d'e'$. By \eqref{detcond}, both $N/(h^2 e)$ and $N/(h^2 e')$ must be coprime to $p$. Let $p^s$ be the maximal power of $p$ dividing $N/h^2$. Then, $p$ is a prime factor both of $e$ and $e'$ (and therefore also $z$) and with the same power $s$. By \eqref{congru}, $a,d,a',d'$ are divisible by $p^{r-s}$, which is the maximal power of $p$ dividing $h$. 
\end{enumerate}
From this analysis, we deduce that if $p$ is a prime factor of $e$ (and therefore of $z$ and $hz$), then it also divides $e'$ with the same power. We conclude that
\be e=e'=z\ .
\ee Furthermore, $a,d,a',d'$ have the same greatest common divisor with $h$ (let us call it $k$)
\be (a,h)=(a',h)=(d,h)=(d',h)=:k\ ,
\ee 
and analogously 
\be (b,h)=(b',h)=(c,h)=(c',h)=h/k\ ,
\ee  and $(k,h/k)=1$.
We conclude that the elements in $\tilde{SO}^+(L)$ have the form
\be \left(\frac{1}{\sqrt{e}}\begin{pmatrix}
\alpha k e & \beta/k\\ \gamma N/k & \delta k e
\end{pmatrix},\ \frac{1}{\sqrt{e}}\begin{pmatrix}
\alpha' k e & \beta'/k\\ \gamma' N/k & \delta' k e
\end{pmatrix}\right)
\ee
where $\alpha,\beta,\gamma,\delta,\alpha',\beta',\gamma',\delta'\in \ZZ$, $k\in \ZZ_{>0}$ with $k||h$, $e\in \ZZ_{>0}$ with $e||\frac{N}{h^2}$ and
\be \alpha \delta k^2 e-\beta\gamma \frac{N}{ek^2}=1\qquad \alpha' \delta' k^2 e-\beta'\gamma' \frac{N}{ek^2}=1\ .
\ee Upon defining $\epsilon:=k^2e$, we finally obtain
\be \left(
\frac{1}{\sqrt{\epsilon }}\begin{pmatrix}
\alpha \epsilon & \beta\\ \gamma N & \delta \epsilon
\end{pmatrix},\  \frac{1}{\sqrt{\epsilon }}\begin{pmatrix}
\alpha' \epsilon & \beta'\\ \gamma' N & \delta' \epsilon
\end{pmatrix}\right)\ ,
\ee 
with
\be \alpha \delta \epsilon-\beta\gamma \frac{N}{\epsilon}=1\qquad \alpha' \delta' \epsilon-\beta'\gamma' \frac{N}{\epsilon}=1\ .
\ee Conversely, every $\epsilon||N$ can be written as $\epsilon=ek^2$, for some $k||h$ and $e||\frac{N}{h^2}$, so that \eqref{automgroup} follows. The fact that $\tilde{SO}^+(L)$ is generated by $\Gamma_0(N)\times \Gamma_0(N)$ and the Atkin-Lehner involutions follows from \cite{Conway:1979kx}.

\subsection{Proof of Theorem \ref{th:notsoeasy}}\label{pf:notsoeasy}

It is easy to verify by a direct calculation that the normal subgroup of $\tilde{SO}^+(L)$ generated by elements of the form $(1,\gamma)$ and elements of the form $(\gamma,1)$ is $\Gamma_0(N\lambda)\times \Gamma_0(N\lambda)$. This means that $\tilde{SO}^+(L)$ is a subgroup of the normalizer of $\Gamma_0(N\lambda)\times \Gamma_0(N\lambda)$ in $SL(2,\RR)\times SL(2,\RR)$:
\be \Gamma_0(N\lambda)\times \Gamma_0(N\lambda)\subseteq \tilde{SO}^+(L)\subseteq \hat\Gamma_0(N\lambda)\times \hat\Gamma_0(N\lambda)\ .
\ee The elements of $\hat \Gamma_0(N\lambda)$ are given in \eqref{normal}, where $h$ is the maximal integer such that $h^2$ is a divisor of $N\lambda$ and $h|24$ (in particular,  $\lambda|h$).
Let us consider the action of a generic element of $\hat\Gamma_0(N\lambda)\times \hat\Gamma_0(N\lambda)$ on $L$
\be X\mapsto \frac{1}{\sqrt{ee'}}\begin{pmatrix}
ae & b/h\\ cN\lambda /h & de
\end{pmatrix} X\begin{pmatrix}
a'e' & b'/h\\ c' N\lambda/h & d'e'
\end{pmatrix}\ ,\qquad X\in L\ ,
\ee where, as above, $a,b,c,d,a',b',c',d'\in \ZZ$, $e,e'\in \ZZ_{>0}$, $e|| \frac{N\lambda}{h^2}$, $e'||\frac{N\lambda}{h^2}$ and
\be\label{detcond2} ade-bc \frac{N\lambda}{eh^2}=1\qquad a'd'e'-b'c' \frac{N\lambda}{e'h^2}=1\ .
\ee   A direct calculation shows that the image of this action is in $L$ for every $X\in L$ if and only if $ee'$ is a square
\be ee'=z^2\qquad z\in \ZZ_{>0}
\ee and the following congruences are satisfied
\begin{align}\label{first}
\frac{aec'-ba'e'\E_g}{hz}&\in \ZZ\ ,  &
\frac{cd'e'-deb' \E_g}{hz}&\in \ZZ\ ,\\
\label{second} \frac{ba'e'}{\frac{h}{\lambda}z}&\in \ZZ\ , &
 \frac{deb'}{\frac{h}{\lambda}z}&\in \ZZ\ , 
\end{align}
\be\label{last} \frac{\E_g z(ad'-da')+\frac{N\lambda}{h^2z}(cc'-bb'\E_g^2)}{\lambda}\in \ZZ\ .
\ee Eqs.\eqref{first} and \eqref{second} imply the weaker conditions
\be\label{reduced} aec'\in \frac{hz}{\lambda}\ZZ\ ,  \qquad
cd'e'\in \frac{hz}{\lambda}\ZZ\ ,\qquad
ba'e'\in \frac{hz}{\lambda}\ZZ\ ,\qquad
deb'\in \frac{hz}{\lambda}\ZZ\ .
\ee
Let $p$ be a prime factor of $\frac{hz}{\lambda}$ and $p^r$, with $r>0$, the maximal power for which $p^r|\frac{hz}{\lambda}$. Then, using a reasoning analogous to the proof of Theorem \ref{th:easy}, \eqref{reduced} and \eqref{detcond2} imply that either $p^r$ divides $b$, $b'$, $c$, $c'$ and is coprime to $a$, $a'$, $d$, $d'$, $e$, $e'$; or $p^r$ divides $ae$, $de$, $a'e'$, $d'e'$ and is coprime to $b$, $b'$, $c$, $c'$, $\frac{N\lambda}{eh^2}$, and $\frac{N\lambda}{e'h^2}$. In particular, if $p^s$ is the maximal power of $p$ dividing $\frac{N\lambda}{h^2}$ and if $p$ divides $e$, then $p^s$ is an exact divisor of both $e$ and $e'$. It follows that
\be e=e'=z\ .
\ee Furthermore,
\be \left(a,\frac{h}{\lambda}\right)=\left(a',\frac{h}{\lambda}\right)=\left(d,\frac{h}{\lambda}\right)=\left(d',\frac{h}{\lambda}\right)=:k\ ,
\ee 
and
\be \left(b,\frac{h}{\lambda}\right)=\left(b',\frac{h}{\lambda}\right)=\left(c,\frac{h}{\lambda}\right)= \left(c',\frac{h}{\lambda}\right)=\frac{h}{\lambda k}\ ,
\ee with $(k,\frac{h}{\lambda k})=1$. Thus, upon defining 
\be a=\alpha k\quad d=\delta k \quad a'=\alpha' k\quad d'=\delta' k
\ee
\be b=\beta\frac{h}{\lambda k}\quad c=\gamma\frac{h}{\lambda k}\quad
b'=\beta'\frac{h}{\lambda k}\quad c'=\gamma'\frac{h}{\lambda k}
\ee and $\epsilon:=ek^2=e'k^2$ we obtain
\be \left(\frac{1}{\sqrt{e}}\begin{pmatrix}
ae & \frac{b}{h}\\ c\frac{N\lambda}{h} & de
\end{pmatrix}, \frac{1}{\sqrt{e'}} \begin{pmatrix}
a'e' & \frac{b'}{h}\\ c' \frac{N\lambda}{h} & d'e'\end{pmatrix}\right)=\left(\frac{1}{\sqrt{\epsilon}}\begin{pmatrix}
\alpha \epsilon & \frac{\beta}{\lambda }\\ \gamma N & \delta \epsilon
\end{pmatrix} ,\frac{1}{\sqrt{\epsilon}}\begin{pmatrix}
\alpha' \epsilon & \frac{\beta'}{\lambda}\\ \gamma' N & \delta' \epsilon
\end{pmatrix}\right)\ ,
\ee where 
 $\alpha,\beta,\gamma,\delta,\alpha',\beta',\gamma',\delta'\in \ZZ$,  $\epsilon\in \ZZ_{>0}$ with $\epsilon||\frac{N}{\lambda}$ (since $e|| \frac{N\lambda}{h^2}$ and $k||\frac{h}{\lambda}$, then $ek^2||\frac{N\lambda}{h^2}\frac{h^2}{\lambda^2}$)
 and
\be\label{detcond3} \alpha \delta \epsilon-\beta\gamma \frac{N}{\epsilon\lambda}=1\qquad \alpha' \delta' \epsilon-\beta'\gamma' \frac{N}{\epsilon\lambda}=1\ .
\ee The congruence conditions \eqref{first}--\eqref{last} become
\be\label{newcong} \frac{\alpha \gamma'-\beta \alpha'\E_g}{\lambda}\in \ZZ\ ,\qquad \frac{\gamma\delta'-\delta\beta'\E_g}{\lambda}\in \ZZ\ ,
\ee and
\be\label{newcong2} \frac{\epsilon\E_g(\alpha\delta'-\delta\alpha')+\frac{N}{\epsilon\lambda}(\gamma\gamma'-\beta\beta'\E_g^2)}{\lambda}\in \ZZ\ .
\ee Eqs.\eqref{detcond3} and \eqref{newcong}, together with $(\E_g,\lambda)=1$, imply
\be\label{samelcm} (\alpha,\lambda)=(\alpha',\lambda)\ ,\qquad
(\delta,\lambda)=(\delta',\lambda)\ ,\qquad
(\beta,\lambda)=(\gamma',\lambda)\ ,\qquad
(\gamma,\lambda)=(\beta',\lambda)\ .
\ee Set $f:= (\alpha,\lambda)=(\alpha',\lambda)$. The first equation in \eqref{samelcm} implies that
\be\label{prop} \alpha'\equiv \kappa_1\alpha\mod \lambda
\ee for some   integer $\kappa_1$; furthermore, eq.\eqref{newcong} determines $\kappa_1$ modulo $\frac{\lambda}{f}$. By \eqref{prop} and the first of \eqref{newcong}, we obtain
\be \gamma'\equiv \beta\E_g\kappa_1 \mod \frac{\lambda}{f}\ ,
\ee so that there is an integer $y$ such that
\be \gamma'\equiv \beta\E_g\kappa_1+y\frac{\lambda}{f}\mod \lambda\ .
\ee Since $\kappa_1$ is only defined modulo $\lambda/f$, we are free to shift $\kappa_1\to \kappa_1+x\frac{\lambda}{f}$ for some integer $x$, and obtain
\be \gamma'\equiv \beta\E_g\kappa_1+\frac{\lambda}{f}(y+x\beta\E_g)\mod \lambda\ .
\ee Let us choose $x$ such that
\be y+x\beta\E_g\equiv 0\mod f\ .
\ee Such an integer $x$ always exists, because $(\beta,\alpha)=1$ by \eqref{detcond3} and $(\E_g,\lambda)=1$ by construction, so that $\beta\E_g$ is coprime to $f=(a,\lambda)$ and is therefore invertible mod $f$. We conclude that there is an integer $\kappa_1$ satisfying \eqref{prop} and such that
\be\label{prop2} \gamma'\equiv \beta\E_g\kappa_1\mod \lambda\ .
\ee Eqs.\eqref{prop} and \eqref{prop2} determine $\kappa_1$ modulo $\lambda$. By a similar reasoning, one can show that there exists $\kappa_2$ such that
\be\label{prop3} \delta'\equiv\kappa_2\delta,\quad \E_g\beta'\equiv \kappa_2\gamma\mod \lambda\ .
\ee By \eqref{detcond3},\eqref{prop},\eqref{prop2},\eqref{prop3}, we obtain
\be 1\equiv \alpha' \delta' \epsilon-\beta'\gamma' \frac{N}{\epsilon\lambda}\equiv \kappa_1\kappa_2(\alpha \delta \epsilon-\beta\gamma \frac{N}{\epsilon\lambda})\equiv \kappa_1\kappa_2\mod \lambda\ .
\ee Finally, notice that, for every divisor $\lambda$ of $24$,
\be\label{cool24} \kappa_1\kappa_2\equiv 1\mod \lambda \qquad \Leftrightarrow \qquad \Bigl((\kappa_1,\lambda)=1 \text{ and }\kappa_1\equiv\kappa_2\mod \lambda\Bigr)\ .
\ee We conclude that eqs.\eqref{detcond3} and \eqref{newcong}  imply that there exists $\kappa\in \ZZ$, with $(\kappa,\lambda)=1$, such that
\be\label{bigprop} \alpha'\equiv \kappa \alpha\ , \quad
\beta'\equiv \kappa\E_g \gamma\ , \quad
\gamma'\equiv \kappa \E_g\beta\ , \quad
\delta'\equiv \kappa \delta\mod \lambda\ ,
\ee where in the second equality we used that, by \eqref{cool24}, $(\E_g,\lambda)=1$ implies $\E_g^2\equiv 1\mod \lambda$. Vice versa, \eqref{bigprop} obviously implies \eqref{newcong} and \eqref{newcong2}. We conclude that the elements of $\tilde{SO}^+(L)$ are as described in the statement of the theorem. The description of the generators of $\tilde{SO}^+(L)$ follows from \cite{Conway:1979kx}.

\subsection{Proof of Eq.\eqref{etaproduct}}\label{pf:etaproduct}
The coefficient $\hat c_{n,0}(0)$ depends only on the greatest common divisor $d:=(n,N)$, i.e.
\be \hat c_{n,0}(0)=\hat c_{d,0}(0)\ .
\ee Indeed, each $g^n$ is conjugated to either $g^d$ or to its inverse within the Monster group, so that the $g^n$-twisted sector is isomorphic to the $g^d$-twisted sector (or its dual). In particular, the dimensions of the $g$-invariant subspaces at level $L_0-1=0$ are the same. Thus,
\begin{align} &\prod_{n>0} (1-q^n)^{\hat c_{n,0}(0)}=\prod_{d|N}\prod_{\substack{n>0\\(n,N)=d}} (1-q^n)^{\hat c_{d,0}(0)}=
\prod_{d|N}\prod_{\substack{r>0\\(r,N/d)=1}} (1-q^{rd})^{\hat c_{d,0}(0)}\\
&=\prod_{d|N}\prod_{r=1}^\infty (1-q^{rd})^{\hat c_{d,0}(0)\sum_{i|(r,\frac{N}{d})}\mu(i) }=\prod_{d|N}\prod_{i|\frac{N}{d}}\prod_{k=1}^\infty (1-q^{kid})^{\hat c_{d,0}(0)\mu(i) }\\
&=\prod_{d|N}\prod_{i|\frac{N}{d}}\bigl(q^{-\frac{id}{24}}\eta(id\tau)\bigr)^{\hat c_{d,0}(0)\mu(i) }=\prod_{\ell|N}\prod_{d|\ell}\bigl(q^{-\frac{\ell}{24}}\eta(\ell\tau)\bigr)^{\hat c_{d,0}(0)\mu(\ell/d) }=\prod_{\ell|N}\bigl(q^{-\frac{\ell}{24}}\eta(\ell\tau)\bigr)^{\alpha(\ell)}\ ,
\end{align} where we used the property
\be \sum_{a|b}\mu(a)=\begin{cases}1 & \text{for }b=1\ ,\\ 0 & \text{otherwise}\ ,\end{cases}
\ee of the M\"obius function.
It follows that
\be T_{\id,g}(\tau)-\hat c_{1,0}(0)=q^{-1}\prod_{n>0} (1-q^n)^{\hat c_{n,0}(0)}=
q^{-\frac{24+\sum_{\ell|N}\ell\alpha(\ell)}{24}}
\prod_{\ell|N}\eta(\ell\tau)^{\alpha(\ell)}\ .
\ee 
In order for the right hand side to be modular under $\Gamma_0(N)$, the power of $q$ in front of the $\eta$-product must vanish, i.e.
\be \sum_{\ell|N}\ell\alpha(\ell)=-24\ .
\ee Finally, we notice that $\alpha(1)=\hat c_{1,0}(0)$ and obtain \eqref{etaproduct}.

\subsection{Proof of Eq.\eqref{WeylCompatible}}\label{pf:WeylCompatible}

Let $g\in \Aut(V^\natural)$ of order $N$ be such that the orbifold $V^\natural/\langle g\rangle$ is consistent ($\lambda=1$) and has no currents. In section \ref{s:Weyl}, we showed that, in this case, there is an isomorphism (corresponding to T-duality in the space direction for the CHL model based on $g$)
\be f:V^\natural_{n,m}\to V^\natural_{m,n}\qquad n,m\in \ZZ/N\ZZ\ ,
\ee of $(V^\natural)^g$ modules inducing an isomorphism
of the VOAs  $V'=\oplus_n V^\natural_{n,0}$ and $V^\natural=\oplus_m V^\natural_{0,m}$. The isomorphism maps $g\in \Aut(V^\natural)$ to the quantum symmetry $Q\in \Aut(V')$.  The BKM algebra corresponding to the CHL model based on $g$ has a real root $\gamma=(1,-1)$ with generator $a$ and the Weyl reflection $r_\gamma$ with respect to $\gamma$ is exactly the automorphism induced by $f$. Let $h\in \Aut(V^\natural)$ be a symmetry commuting with $g$ and of order coprime to $N$. In this appendix, we will prove that $h$ always fixes the root $a$.

The root $a$ corresponds to the ground state in the $g$-twisted sector of $V^\natural$. Recall that $T_{g,\id}(\tau)=q^{-\frac{1}{N\lambda}}+\ldots$, so that the $g$-twisted ground state is $1$-dimensional and $h$ can only act on $a$ by a phase $\xi$. The phase $\xi$ is the coefficient of the polar term $q^{-\frac{1}{N\lambda}}$ in the series $T_{g,h}(\tau)$.  Since $g$ and $h$ have coprime order, $T_{g,h}(\tau)$ is a $SL(2,\ZZ)$ transformation of $T_{\id,gh}(\tau)$. If $gh$ has order $M$ then, for each $n$ coprime to $M$, $g^nh^n$ is in the same Monster conjugacy class as $gh$ or $(gh)^{-1}$. In either case, one has $T_{\id,gh}=T_{\id,g^nh^n}$, so that $T_{g^n,h^n}=T_{g,h}$. Now, one can always find $n$ coprime to $M$ such that $g^n=g$ and $h^n=h^{-1}$, so that $T_{g,h}=T_{g,h^{-1}}$ and $\xi=\xi^{-1}$. This means that $\xi=\pm 1$. 

If $h$ has odd order, then the only possibility is $\xi=1$. Let us consider the case where $h$ has even order, so that the order $N$ of $g$ is odd. We will suppose by absurdity that $h(a)=-a$ and derive an inconsistency. The symmetry $h$ must act by $-1$ also on the generator $\theta(a)$, corresponding to the opposite root $(-1,1)$, and leave invariant the Cartan generator $H_a=\frac{1}{2}[a,\theta(a)]$. Consider a generator $b\in \g_{(1,m)}\otimes \CC$ in the (complexified) algebra component corresponding to a simple root $(1,m)$, $m> 0$, and assume it is an eigenvector of $h$ with eigenvalue $\zeta$. The vector $b$ is annihilated by $\theta(a)$ (because there are no generators with roots $(0,m+1)$) and has eigenvalue $1-m$ under the Cartan generator $H_a$. Therefore, $b$ is the highest weight vector of a $m$-dimensional representation of the $\mathfrak{sl}_2$ subalgebra generated by $a,\theta(a)$ and $H_a$, whose lowest weight vector is in the root component $\g_{(m,1)}$. Furthermore, this $\mathfrak{sl}_2$-representation is spanned by eigenvectors of $h$ with eigenvalues $\pm \zeta$; in particular, the lowest weight vector with root $(m,1)$ has eigenvalue $(-1)^{m-1}\zeta$. The Weyl reflection $r_\gamma$ exchanges the highest and lowest weight vector within each $\mathfrak{sl}_2$ representation. Therefore, it preserves the $h$-eigenvalues of all odd dimensional $\mathfrak{sl}_2$-representations  and multiplies by $-1$ the $h$-eigenvalues of all even dimensional $\mathfrak{sl}_2$-representations. In particular, a generator $b\in \g_{(1,m)}$ for a simple root $(1,m)$ and its Weyl transform $r_\gamma(b)\in \g_{(m,1)}$ have the same (respectively, the opposite) $h$-eigenvalue if $m$ is odd (respectively, even).

 This means that conjugation of $h$ by the Weyl reflection defines a new automorphism $r_\gamma hr_\gamma$ commuting with both $g$ and $h$ (since it has the same eigenvectors as $h$). An equivalent description of $r_\gamma hr_\gamma$ is as follows.  The symmetry $h\in \Aut(V^\natural)$ has a unique lift to a symmetry (automorphism of $(V^\natural)^g$-modules) of the same order acting on the twisted sectors $V^\natural_{n,m}$. This lift induces a symmetry $h'\in \Aut(V')$ of the VOA $V'=\oplus_n V^\natural_{n,0}$. By conjugating $h'$ by the isomorphism $f:V'\to V^\natural$, we obtain a new automorphism $\tilde h:=fh'f^{-1}\in \Aut(V^\natural)$ of $V^\natural$, possibly different from $h$. Then, $r_\gamma hr_\gamma$ is the symmetry induced on the BKM algebra by $\tilde h$.
Under the assumption that $h(a)=-a$, the symmetries $h,\tilde h\in \Aut(V^\natural)$ and the corresponding actions on the BKM algebra $\g$ are different. Indeed, the composition $t:=h^{-1}\tilde h$ acts by $(-1)^{m-1}$ on the simple roots in $\g_{1,m}$. Let us prove that such a symmetry $t$ cannot exist. Recall that the generators of $\g_{1,m}$ correspond to states in the $g$-twisted sector $V^\natural_{1,m}$ of $V^\natural$ and level $(L_0-1)=\frac{m}{N}$. These states are eigenvectors of $t$ with eigenvalue $(-1)^{m-1}$. Let $\chi$ be a $t$-invariant state in the $g$-twisted sector (for example the ground state) and level $(L_0-1)=\frac{m}{N}$, with $m$ odd. The Virasoro descendant $L_{-r}\chi$, with $r$ odd, has level $(L_0-1)=\frac{m}{N}+r=\frac{m+rN}{N}$ and since both $r$ and $N$ are odd, has $t$-eigenvalue $(-1)^{m+rN-1}=-1$. But this is absurd: every symmetry $t$ must commute with the Virasoro algebra, so $\chi$ and $L_{-r}\chi$ must have the same eigenvalue. Thus, the symmetry $t$ cannot exist, and the initial assumption that $h(a)=-a$ is inconsistent.

\subsection{Proof of Lemma \ref{t:lecuspe}}\label{pf:lecuspe}

Let $\frac{a}{\lambda c}$, with $a,c>0$, $c|\frac{N}{\lambda}$, $(a,c)=1$, be a representative for the cusp $\cc$. We have to prove that if $T_{\id,g}$ is unbounded at $\frac{a}{\lambda c}$, then $c$ is an exact divisor of $\frac{N}{\lambda}$. Indeed, in this case also $e=\frac{N}{\lambda c}$ is an exact divisor for $\frac{N}{\lambda}$ and $(ae,\frac{N}{\lambda e})=1$. Thus, we can find integers $b,d$ such that $ade-\frac{N}{\lambda e} b=1$, so that
\be w_e=\frac{1}{\sqrt{e}}\begin{pmatrix}
ae & \frac{b}{\lambda}\\ N & de
\end{pmatrix}
\ee is an Atkin-Lehner involution for $\Gamma_0(N|\lambda)$ such that $w_e\cdot \infty =\frac{ae}{N}=\frac{a}{\lambda c}$.

Choose integers $\alpha,\beta,\delta$ such that $(\begin{smallmatrix}
\alpha & \beta\\ \lambda c & \delta
\end{smallmatrix}
 )\in SL(2,\ZZ)$. The assumption that $T_{\id,g}$ is unbounded at $\frac{a}{\lambda c}$ implies that
\be T_{\id,g}\left((\begin{smallmatrix}
\alpha & \beta\\ \lambda c & \delta
\end{smallmatrix}
 )\cdot \tau\right)=\xi T_{g^{\lambda c},g^{\delta}}(\tau)=\xi Aq^{-\frac{\lambda c}{N}}+O(q^0)\ ,\ee where $\xi$ is a phase and $A$ is the eigenvalue of $g^{\delta}$ on the ground state of $\Hh_{g^{\lambda c}}$. (Notice that if $T_{\id,g}$ has multiplier $\lambda$, then $g^{\lambda c}$ has trivial multiplier).
Then
\be\label{phase1} T_{\id,g}\left((\begin{smallmatrix}
\alpha & \beta+k\alpha\\\lambda c & \delta+k\lambda c
\end{smallmatrix}
 )\cdot \tau\right)=\xi T_{g^{\lambda c},g^{\delta}}(\tau+k)=
\xi Ae^{-\frac{2\pi i k\lambda c}{N}}q^{-\frac{\lambda c}{N}}+O(q^0)\ .
\ee Set
\be k=\frac{\omega N}{\lambda^2 c^2}\ ,
\ee where $\omega=\frac{\lambda^2c^2}{(N,\lambda^2c^2)}$ is the minimal integer for which $(\lambda c)^2|N\omega$, and notice that
\be \begin{pmatrix}
\alpha & \beta+k\alpha\\ \lambda c & \delta+k\lambda c
\end{pmatrix}=\begin{pmatrix}
1-k\alpha\lambda c & k\alpha^2\\ -k\lambda^2 c^2 & 1+k\alpha\lambda c
\end{pmatrix}\begin{pmatrix}
\alpha & \beta\\ \lambda c & \delta
\end{pmatrix}=\begin{pmatrix}
1-\frac{\omega N}{\lambda c}\alpha & k\alpha^2\\ -\omega N & 1+k\alpha\lambda c
\end{pmatrix}\begin{pmatrix}
\alpha & \beta\\ \lambda c & \delta
\end{pmatrix}\ .
\ee Since $T_{\id,g}$ is $\Gamma_0(N)$ invariant up to a phase, we have
\be\label{phase2} T_{\id,g}\left((\begin{smallmatrix}
\alpha & \beta+k\alpha\\ \lambda c & \delta +k\lambda c
\end{smallmatrix}
 )\cdot \tau\right)=T_{\id,g}\left((\begin{smallmatrix}
1-\frac{\omega N}{\lambda c}\alpha & k\alpha^2\\ -\omega N & 1+k\alpha\lambda c
\end{smallmatrix})(\begin{smallmatrix}
\alpha & b\beta\\ \lambda c & \delta
\end{smallmatrix})\cdot \tau\right)=e^{\frac{2\pi i \omega \E_g}{\lambda} }T_{\id,g}\left((\begin{smallmatrix}
\alpha & \beta\\ \lambda c & \delta
\end{smallmatrix}
 )\cdot \tau\right)\ .
\ee By comparing \eqref{phase1} and \eqref{phase2}, we obtain 
\be e^{-\frac{2\pi i \omega}{c\lambda}}= e^{\frac{2\pi i \E_g\omega}{\lambda} }\ ,
\ee that is
\be\label{compare} \frac{\omega}{c\lambda}\equiv \frac{-\E_g\omega}{\lambda}\mod \ZZ\ .
\ee  
Now\footnote{We thank the anonymous referee for feedback that greatly simplified the latter part of this proof.}, since $\omega = \frac{\lambda^2 c^2}{(N, \lambda^2 c^2)}$ the latter congruence implies $\lambda c \equiv -\E_g \lambda c^2 \mod (N, \lambda^2 c^2)$. Next, for any prime $p$ that divides $c$, let $v_p(x)$ be the $p$-valuation of the integer $x$, which is defined as the number of times $x$ can be divided by $p$. We have immediately that $v_p((N, \lambda^2 c^2)) = \textrm{ min }(v_p(N), v_p(\lambda^2 c^2))$ and the congruence further implies that $\textrm{ min }(v_p(\lambda c), v_p(N)) = \textrm{ min }(v_p(-\E_g \lambda c^2), v_p(N), v_p(\lambda^2 c^2))$. Finally, since $v_p(c)> 0$ by assumption, we have $v_p(\lambda c) = v_p(N)$.

\section{Twisted-twining partition functions}\label{a:TwistAndTwin}

\noindent In this Appendix we summarize some basic properties of orbifolds of holomorphic conformal field theories (vertex operator algebras). We refer to \cite{Miyamoto:2013kx,Ekeren:2015kq} and references therein for recent new results on this topic.

Given a holomorphic VOA (chiral bosonic two-dimensional CFT, in physics language) $V$ of central charge $24$  with group of automorphisms $G=\Aut(V)$, we can consider the twisted-twining partition functions
\be\label{Twisted} T_{g,h}(\tau):=\Tr_{V_g}(hq^{L_0-1})\ ,
\ee where $g,h$ are any commuting elements of $G$ and $V_g$ is the $g$-twisted sector of $V$. Strictly speaking, the action of $h$ on the twisted sectors is only determined up to a $N$-th root of unity, where $N$ is the order of $g$: we implicitly assume that a choice has been made for this action. When $h=g$, we always make the standard choice
\be\label{gaction} g=e^{2\pi i L_0}\ .
\ee
 The partition functions \eqref{Twisted} are related to one each other by modular transformations
 \be\label{modulagenera} T_{g,h}\bigl(\frac{a\tau+b}{c\tau+d} \bigr)=\xi_{g,h}\begin{pmatrix}
 a & b\\ c &d
\end{pmatrix}   T_{g^ah^c,g^bh^d}(\tau)\ ,\qquad \begin{pmatrix}
 a & b\\ c &d
\end{pmatrix}\in SL(2,\ZZ)\ ,
 \ee where $\xi_{g,h}\bigl(\begin{smallmatrix}
 a & b\\ c &d
\end{smallmatrix}   \bigr)$ are non-zero complex numbers.
 We will only focus on the case where the group generated by $g$ and $h$ is cyclic. For a cyclic group $\langle g\rangle$ of order $N$, all $\xi_{g^i,g^j}$ are $N$-th roots of unity and are completely determined by the conformal weight ($L_0$-eigenvalue) of the $g$-twisted ground state. In general, the $L_0$-eigenvalue $\Delta_g$ of a $g$-twisted state takes value in
 \be\label{fundam1} \Delta_g\in \frac{\E_g}{N\lambda_g}+\frac{1}{N}\ZZ\ ,
 \ee for some positive integer $\lambda_g$, with $\lambda_g|N$, and $\E_g\in \ZZ/\lambda_g\ZZ$ coprime to $\lambda_g$ (we often omit the subscript in $\lambda_g$ when there is no ambiguity). The orbifold of $V$ by $g$ is a consistent CFT if and only if $\lambda=1$; when  $\lambda>1$ we have a failure of the level matching condition. 
The $g$-twisted and $g^{-1}$-twisted sector have the same conformal weights modulo $1/N$
 \be \Delta_g\equiv \Delta_{g^{-1}} \mod \frac{1}{N}\ZZ\ .
 \ee More generally, the conformal weights of the $g^n$-twisted sector take values in
 \be\label{fundam2} \Delta_{g^n}\in n^2\frac{\E_g}{N\lambda_g}+\frac{(n,N)}{N}\ZZ\ .
 \ee In particular, one has $\lambda_{g^n}=\frac{\lambda_g}{(\lambda_g,n)}$, so that the orbifold of $V$ by $\langle g^n\rangle$ is consistent if and only if $n$ is a multiple of $\lambda_g$. 
 
 The Monster CFT $V^\natural$ (or rather its group of automorphisms $\Aut(V^\natural)=\MM$) has another special property: given any $g\in \MM$ of order $N$ and any integer $a$ coprime to $N$, $g^a$ is always conjugated to either $g$ or $g^{-1}$
 \be\label{conjuga} (a,N)=1\qquad \rightarrow \qquad g^a=hgh^{-1}\text{ or } g^a=hg^{-1}h^{-1}\ ,
 \ee for some $h\in \MM$. The symmetry $h$ induces an isomorphism
 \be \phi_h:V_g\stackrel{\cong}{\longrightarrow} V_{g^a}\ ,
 \ee between the $g$-twisted and the $g^a$-twisted sector. In particular, the conformal weights must be the same
 \be \Delta_g\equiv \Delta_{g^a} \mod \frac{1}{N}\ZZ\ ,
 \ee and using \eqref{fundam1},\eqref{fundam2} one obtains that
 \be (a,N)=1 \qquad \Rightarrow \qquad  a^2\equiv 1\mod \lambda\ .
 \ee As observed in \cite{Conway:1979kx}, this condition holds for all $a\in \ZZ/\lambda\ZZ$ if and only if $\lambda$ is a divisor of $24$. In general,  for any holomorphic CFT $V$ and for all $g\in \Aut(V)$ such that \eqref{conjuga} holds, one has $\lambda_g|24$. 
 
For CFTs where \eqref{conjuga} holds, the twining partition function $T_{\id,g}$ is a modular form for 
 \be \Gamma_0(N):=\bigl\{ \begin{pmatrix}
 a & b\\ c & d
 \end{pmatrix}\in SL(2,\ZZ)\mid c\equiv 0\mod N  \bigr\} \ ,
 \ee up to a multiplier
 \be T_{\id,g}\bigl(\frac{a\tau+b}{c\tau+d} \bigr)=e^{-2\pi i \frac{\E_g cd}{N\lambda}}   T_{\id,g}(\tau)\ ,\qquad\qquad \begin{pmatrix}
 a & b\\ c & d
 \end{pmatrix}\in \Gamma_0(N)\ .
 \ee In particular, $T_{\id,g}$ is invariant under $\Gamma_0(N\lambda)$. \footnote{For a general CFT $V$ and $g\in \Aut(V)$, where \eqref{conjuga} does not hold, $T_{\id,g}$ is only modular under a subgroup $\Gamma_1(N)\subseteq \Gamma_0(N)$.} 
 
The fixed point subVOA $V^g$ has $N^2$ irreducible modules, given by the $g$-eigenspaces in the $g^r$-twisted sectors $r=1,\ldots, N$.   When $\lambda>1$, by our definition \eqref{gaction}, $g$ has order $N\lambda$ when acting on the twisted sectors, i.e. it generates a central extension of $\langle g\rangle$. In constructing the CHL orbifolds, it is useful to define $(N\lambda)^2$ irreducible $V^g$-modules, given by
\be\label{Vnm} V_{n,m}=\{ v\in V_{g^n}\mid g(v)=e^{\frac{2\pi i m}{N\lambda}}v \}\ ,\qquad n,m\in \ZZ/N\lambda\ZZ\ .
\ee Notice that 
\be m- n\E_g\neq 0\mod \lambda\qquad \Rightarrow \qquad V_{n,m}=0\ .
\ee so that only $N^2\lambda$ out of $(N\lambda)^2$ modules $V_{n,m}$ are actually non-zero. Furthermore, many of the non-zero $V_{n,m}$ are isomorphic. For example, the $g^N$-twisted sector is isomorphic, as a $V^g$-module, to the untwisted sector. However, as discussed in \cite{Persson:2015jka}, it is useful to define the action of $g$ on the $g^N$-twisted sector so that
\be g(\V_N)=e^{-2\pi i \frac{\E_g}{\lambda}}\V_N\ ,
\ee where $\V_N$ is the vertex operator relative to the ground state  of conformal weight $0$ in the $g^N$-twisted sector. This choice yields the simple fusion rules
\be V_{i,j} \boxtimes_{V^g} V_{k,l}\to V_{i+k,j+l}\ ,\qquad i,j,k,l\in \ZZ/N\lambda\ZZ\ ,
\ee and allows to eliminate the phases in the $SL(2,\ZZ)$ transformations of $T_{\id,g}$ 
\be T_{\id,g}\bigl(\frac{a\tau+b}{c\tau+d} \bigr)=   T_{g^c,g^d}(\tau)\ ,\qquad\qquad \begin{pmatrix}
 a & b\\ c & d
 \end{pmatrix}\in SL(2,\ZZ)\ .
\ee The CHL orbifold relative to $g$ can be simply defined by imposing that the strings with momentum $m$ and winding $w=\frac{n}{N\lambda}$ be tensored with states in $V_{n,m}$. The OPE with the $g^N$-twisted vertex operator $\V_N$ defines an isomorphism
\be \V_N:V_{n,m}\stackrel{\cong}{\longrightarrow} V_{n+N,m-\E_g N}\ .
\ee These equivalences further reduce the number of irreducible modules from $N^2\lambda$ to $N^2$, as expected. 

\bibliographystyle{utphys}

\bibliography{Refs}

\end{document}